\def\openone{\leavevmode\hbox{\small1\kern-3.8pt\normalsize1}}
\def\DD{\mathbb{D}}
\def\11{\mathbb{I}}
\newtheorem{definition}{Definition}[section]
\newtheorem{proposition}[definition]{Proposition}
\newtheorem{lemma}[definition]{Lemma}
\newtheorem{theorem}[definition]{Theorem}
\newtheorem{corollary}[definition]{Corollary}
\def\eps{\varepsilon}
\newcommand{\tr}{\mathop{\rm Tr}\nolimits}
\newcommand{\cA}{{\cal A}}
\newcommand{\cC}{{\cal C}}
\newcommand{\cD}{{\cal D}}
\newcommand{\cN}{{\cal N}}
\newcommand{\cT}{{\cal T}}
\newcommand{\cH}{{\cal H}}
\newcommand{\cP}{\mathcal{P}}
\newcommand{\cR}{{\cal R}}
\newcommand{\cM}{{\mathcal{M}}}
\newcommand{\Id}{{\mathds{1}}}
\newcommand{\cSe}[1]{\mathcal{S}_{=}(\cH_{#1})}
\newcommand{\cSs}[1]{\mathcal{S}_{\leq}(\cH_{#1})}
\newcommand{\cPo}[1]{\mathcal{P}_{}(\cH_{#1})}
\newcommand{\Renyi}{R{\'e}nyi~}
\numberwithin{equation}{section}
\DeclareRobustCommand\openone{\leavevmode\hbox{\small1\normalsize\kern-.33em1}}
\newcommand{\identity}{I}
\newcommand{\be}{\begin{equation}}
	\newcommand{\ee}{\end{equation}}
\newcommand{\bea}{\begin{eqnarray}}
	\newcommand{\eea}{\end{eqnarray}}
\newcommand{\beas}{\begin{eqnarray*}}
	\newcommand{\eeas}{\end{eqnarray*}}
\title{Quantum Differential Privacy: \\ An Information Theory Perspective}
\begin{document}

\author{Christoph Hirche, Cambyse Rouz\'{e}, Daniel Stilck Fran\c ca
\thanks{C. Hirche is with the Zentrum Mathematik, Technical University of Munich, 85748 Garching, Germany 
 and the Centre for Quantum Technologies, National University of Singapore, Singapore}
\thanks{C. Rouz\'{e} is with the Munich Center for Quantum Science and Technology (MCQST), M\"unchen, Germany}
\thanks{D. S. Fran\c ca is with QMATH, Department of Mathematical Sciences, University of Copenhagen, Universitetsparken 5, 2100 Copenhagen, Denmark}
}

\date{}

\maketitle

\begin{abstract}
Differential privacy has been an exceptionally successful concept when it comes to providing provable security guarantees for classical computations. More recently, the concept was generalized to quantum computations. While classical computations are essentially noiseless and differential privacy is often achieved by artificially adding noise, near-term quantum computers are inherently noisy and it was observed that this leads to natural differential privacy as a feature. 

In this work we discuss quantum differential privacy in an information theoretic framework by casting it as a quantum divergence. A main advantage of this approach is that differential privacy becomes a property solely based on the output states of the computation, without the need to check it for every measurement. This leads to simpler proofs and generalized statements of its properties as well as several new bounds for both, general and specific, noise models. In particular, these include common representations of quantum circuits and quantum machine learning concepts. Here, we focus on the difference in the amount of noise required to achieve certain levels of differential privacy versus the amount that would make any computation useless. Finally, we also generalize the classical concepts of local differential privacy, \Renyi differential privacy and the hypothesis testing interpretation to the quantum setting, providing several new properties and insights. 
\end{abstract}

\section{Introduction}

\IEEEPARstart{P}{rocessing} data in some form is the core concept of most computational tasks. Nowadays, large data sets are being collected and processed for a variety of tasks ranging from medical studies to machine learning applications. With the accumulation of information always also come security concerns. If one presents the results of a study, will that allow the audience to conclude on the health of a particular individual? Social media companies constantly process our data using machine learning for advertisement purposes. How much do the results reveal about the underlying data set? 

Differential privacy~\cite{dwork2006calibrating, dwork2006differential} is a concept introduced in the classical computation setting to address these concerns. Vaguely speaking, differential privacy guarantees that the probability of an algorithm giving a certain outcome is roughly the same for any sufficiently similar input. This implies that good differential privacy makes it difficult for an observer to make precise statements about the data used. 

With the growing interest in large-scale quantum computations and quantum machine learning applications, naturally, also security requirements are desired there. To that end, the concept of quantum differential privacy was introduced in~\cite{zhou2017differential} in the setting of quantum computing. That work was the starting point of a series of results connecting quantum differential privacy to gentle measurements~\cite{aaronson2019gentle}, distributed quantum computing~\cite{li2021quantum} and quantum machine learning~\cite{arunachalam2021private,senekane2017privacy,watkins2021quantum,du2021quantum,kashefi1,kashefi2,min-hsiu-regression}. Other authors also studied differential privacy in the classical to quantum regime~\cite{9174484}.

In quantum differential privacy, we are interested in the properties of a quantum algorithm $\cA$ represented by a general quantum channel. The input data is a quantum state $\rho$ and someone observing the output should not be able to determine whether the input state was indeed $\rho$ or a similar state $\sigma$. Similarity is usually defined via neighbouring states, denoted $\rho\sim\sigma$, according to some rule. 
Different notions of similarity have been proposed in the literature. For instance, a small trace distance~\cite{zhou2017differential} or convertibility by a local quantum channel~\cite{aaronson2019gentle}. For most of our discussion we will keep the definition unspecified and only sometimes fix it for examples. Now, a quantum algorithm $\cA$ is $(\epsilon,\delta)$-differentially private, if for all measurements $M$ and all neighbouring quantum states $\rho\sim\sigma$ we have 
\begin{align}
\tr( M \cA(\rho)) \leq e^\epsilon \tr( M \cA(\sigma)) + \delta, \label{Eq:DP-def-in}
\end{align}
see Definition~\ref{Def:QDP} for full details. In the classical setting, one way to achieve differential privacy is by taking an algorithm and adding noise to the output to obscure the input~\cite{Dwork2013}. This idea was transferred to the quantum setting in~\cite{zhou2017differential}, showing that concatenating an algorithm with sufficiently strong noise in form of a different quantum channel makes the algorithm differentially private. This was further explored in~\cite{du2021quantum} for layered algorithms that are affected by noise at every step, a model that describes many common scenarios such as quantum circuits and quantum machine learning algorithms implemented on noisy quantum computers. An interesting proposition following from~\cite{du2021quantum} is that the noise present in near-term quantum devices, while presenting a computational difficulty, induces an inherent advantage by naturally making the computation differentially private. On the other hand it is of course also well known that such noise can make it impossible to run long computations, somewhat limiting the computational usefulness of near-term devices. One of the main goals of this work is to give upper bounds on the depth required to reach differential privacy and contrasting it to the computational limitations imposed by the noise. 
As we will see, differential privacy can be reached at a significantly shorter depth than that at which computationally prohibitive noise occurs.

To this end, we introduce an information theoretic approach to quantum differential privacy that allows us to cast the requirement in terms of a quantum divergence. This in turn lets us make several observations that are solely based on the properties of the divergence, giving a fruitful new approach to discussing quantum differential privacy. The divergence of choice is the quantum hockey-stick divergence introduced in~\cite{sharma2012strong} and the general framework follows classical work presented among others in~\cite{asoodeh2020privacy,asoodeh2021local} using tools such as contraction coefficients to bound differential privacy in iterative algorithms.  

The outline and main contributions of this work are as follows. 
\begin{itemize}
    \item In Section~\ref{Sec:HS-divergence} we revisit the quantum hockey-stick divergence and show several new properties. In particular we consider its associated contraction coefficient and give a simple expression for it that significantly reduces its computational complexity. These results lay the foundation for what follows, but should also be of independent interest. 
    \item In Section~\ref{Sec:QDP} we discuss quantum differential privacy and how to cast it in terms of the hockey-stick divergence and a variation of the smooth $\max$-relative entropy. Based on this we then give properties and implications of quantum differential privacy including a general bound showing exponential decay of $\delta$ with the algorithm depth. This also gives an operational interpretation of the hockey-stick divergence.
    \item In Section~\ref{Sec:App} we discuss specific noise models including global and local depolarizing noise which we contrast with each other but also with the induced computational limitations. In particular, we get a separation for depolarizing noise where the trace distance decays exponentially, but good differential privacy is reached after a finite number of steps. 
    Furthermore we discuss more general models such as arbitrary qubit noise.
    \item In Section~\ref{Sec:Extensions} We introduce quantum generalisations of local differential privacy and \Renyi differential privacy: two often discussed extensions of the standard differential privacy definition. Finally, we discuss the hypothesis testing interpretation of quantum differential privacy and derive a useful trade-off between $\epsilon$ and $\delta$ from it. 
\end{itemize}

{\noindent \textbf{Notations}}:
A (classical or quantum) system $R$ is associated with a finite-dimensional Hilbert space $\cH_R$. Let $\cPo{R}$ be the set of positive semidefinite linear operators acting on $\cH_R$.
A quantum state $\rho_R$ on $R$ is a positive semidefinite linear operator with unit trace acting on $\cH_R$, denoted $\rho_R\in\cSe{R}$. The set of subnormalized states is denoted $\cSs{R}$. 
A state $\rho_R$ of rank $1$ is called pure, and we may choose a normalized vector $|\psi\rangle_R\in\cH_R$ satisfying  $\rho_R=|\psi\rangle\langle\psi|_R$.
Otherwise, $\rho_R$ is called a mixed state. 
By the spectral theorem, every mixed state can be written as a convex combination of pure states.
For a pure state $|\phi\rangle$ we may use the shorthand $\phi \equiv |\phi\rangle\langle\phi|$.
For a classical system $X$ there is a distinguished orthonormal basis $\lbrace |x\rangle\rbrace_{x=1}^{\dim \cH_X}$ of $\cH_X$ diagonalizing every state on $X$.

A quantum channel $\cN\colon A\to B$ is a linear completely positive and trace-preserving map from the operators on $\cH_A$ to the operators on $\cH_B$.
Given a quantum channel $\mathcal{M}$ with input and output dimension $d$, its Choi matrix is defined as
\begin{align*}
    C_{\mathcal{M}}:=\sum_{i,j\in[d]}\,\mathcal{M}(|i\rangle\langle j|)\otimes |i\rangle\langle j|\,,
\end{align*}
where $\{|i\rangle:\,i\in[d]\}$ is the standard basis of $\mathbb{C}^d$.
In the following we will usually drop the indices as the systems are clear from context. 
A measurement is an operator $0\leq M\leq \Id$ and a collection of measurements such that $\sum_i M_i=\Id$ is called a POVM.

\section{The quantum hockey-stick divergence}\label{Sec:HS-divergence}

In this section we discuss the main technical tools needed for our investigation of quantum differential privacy. 
The quantum hockey-stick divergence was first introduced in~\cite{sharma2012strong}, in the context of exploring strong converse bounds for the quantum capacity, as 
\begin{align}
E_\gamma(\rho\|\sigma) := \tr(\rho-\gamma\sigma)^+, 
\end{align}
for $\gamma\geq 1$. Here $X^+$ denotes the positive part of eigendecomposition of a hermitian matrix $X = X^+ - X^-$. In~\cite{sharma2012strong} it was noted that this quantity is closely related to the trace norm via
\begin{align}
E_\gamma(\rho\|\sigma) = \frac12 \| \rho -\gamma\sigma\|_1 + \frac12(\tr(\rho)-\gamma\tr(\sigma)), \label{Eq:Hs-trace}
\end{align}
so, if $\rho,\sigma\in\cSe{}$, $E_1(\rho\|\sigma) = \frac12 \| \rho -\sigma\|_1$ equals the trace distance. As the trace distance has many desirable properties, we are tempted to hope that similar properties also hold for the quantum hockey-stick divergence. For instance, the trace distance is invariant under unitaries and Eq.~\eqref{Eq:Hs-trace} immediately implies that the same holds true for the hockey-stick divergence.

Another important example will be the following, relating the divergence to a maximization over measurements. 
\begin{lemma}\label{Lem:Dg-meas}
An alternative expression for the hockey-stick divergence for $\rho,\sigma\in\cPo{}$ is given by
\begin{align}
E_\gamma(\rho\|\sigma) = \max_{0\leq\Lambda\leq\Id} \tr\{\Lambda(\rho-\gamma\sigma)\}. 
\end{align}
\end{lemma}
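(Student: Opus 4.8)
The plan is to prove the two inequalities $E_\gamma(\rho\|\sigma) \le \max_{0\le\Lambda\le\Id}\tr\{\Lambda(\rho-\gamma\sigma)\}$ and $E_\gamma(\rho\|\sigma) \ge \max_{0\le\Lambda\le\Id}\tr\{\Lambda(\rho-\gamma\sigma)\}$ separately, exploiting the spectral decomposition of the hermitian operator $X := \rho - \gamma\sigma$. Write $X = X^+ - X^-$ with $X^+, X^- \ge 0$ having orthogonal supports, and let $P$ denote the spectral projector onto the (strictly) positive eigenspace of $X$, so that $X^+ = PXP = PX = XP$ and $\tr(X^+) = \tr(PX)$.

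For the lower bound (``$\ge$'' is easy, ``$\le$'' on the max), I would simply take the particular choice $\Lambda = P$: since $0 \le P \le \Id$, this is an admissible operator, and $\tr\{P(\rho-\gamma\sigma)\} = \tr(PX) = \tr(X^+) = E_\gamma(\rho\|\sigma)$. Hence the maximum on the right-hand side is at least $E_\gamma(\rho\|\sigma)$. For the reverse bound, let $\Lambda$ be any operator with $0 \le \Lambda \le \Id$. The key step is the operator inequality $\Lambda \le P$ is \emph{false} in general, so instead I estimate directly: $\tr\{\Lambda X\} = \tr\{\Lambda X^+\} - \tr\{\Lambda X^-\}$. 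Because $\Lambda \le \Id$ and $X^+ \ge 0$, we get $\tr\{\Lambda X^+\} \le \tr\{X^+\}$; because $\Lambda \ge 0$ and $X^- \ge 0$, we get $\tr\{\Lambda X^-\} \ge 0$. Combining, $\tr\{\Lambda X\} \le \tr\{X^+\} = E_\gamma(\rho\|\sigma)$ for every admissible $\Lambda$, so the maximum is at most $E_\gamma(\rho\|\sigma)$.

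Putting the two bounds together yields the claimed identity, and the maximum is attained at $\Lambda = P$. I do not expect any serious obstacle here; the only mild subtlety is being careful that $\tr\{\Lambda X^+\} \le \tr\{X^+\}$ really follows from $\Lambda \le \Id$ — this is because $\tr\{(\Id - \Lambda)X^+\} \ge 0$ as the trace of a product of two positive semidefinite operators (using $X^+ \ge 0$ and $\Id - \Lambda \ge 0$), and similarly $\tr\{\Lambda X^-\} \ge 0$ as the trace of a product of two positive semidefinite operators. One could alternatively phrase the whole argument as an instance of weak duality for the underlying semidefinite program, but the elementary spectral argument above is cleaner and self-contained. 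If desired one can also note that $\Lambda$ need not be a genuine two-outcome measurement effect with any operational constraint beyond $0 \le \Lambda \le \Id$, which is exactly the POVM-element condition recalled in the Notations, so the expression is indeed a maximization over measurements.
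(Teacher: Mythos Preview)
Your proof is correct and follows essentially the same approach as the paper: both use the Jordan decomposition $X = X^+ - X^-$, bound $\tr\{\Lambda X\} \le \tr\{\Lambda X^+\} \le \tr X^+$ for any $0\le\Lambda\le\Id$, and attain equality by choosing $\Lambda$ to be the projector onto the support of $X^+$. Your added justification that $\tr\{(\Id-\Lambda)X^+\}\ge 0$ and $\tr\{\Lambda X^-\}\ge 0$ as traces of products of positive semidefinite operators is a helpful spelling-out of steps the paper leaves implicit, but the argument is otherwise the same.
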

\begin{proof}
The proof is similar to the standard argument for the trace distance. 
Let $X=\rho-\gamma\sigma$ and $(X^+,X^-)$ its  decomposition into positive and negative parts such that $X=X^+ - X^-$. For a general operator $0\leq\Lambda\leq\Id$ one easily sees
\begin{align*}
&\tr\{\Lambda(\rho-\gamma\sigma)\} = \tr\{\Lambda(X^+ - X^-)\} \\ 
&\leq \tr\{\Lambda X^+ \} \leq \tr X^+ = E_\gamma(\rho\|\sigma). 
\end{align*}
It remains to show that equality in the above can be achieved by some measurement. For that simply pick $\Pi_{X^+}$, the projector onto the support of $X^+$ and observe that
\begin{align*}
&\tr\{\Pi_{X^+}(\rho-\gamma\sigma)\} = \tr\{\Pi_{X^+}(X^+ - X^-)\} \\ 
&= \tr\{\Pi_{X^+} X^+ \} = \tr X^+ = E_\gamma(\rho\|\sigma).
\end{align*}
This concludes the proof. 
\end{proof}
A property that was already shown in~\cite{sharma2012strong} is that $E_\gamma$, like any good divergence, obeys data-processing, meaning for any quantum channel $\cN$ we have
\begin{align*}
   E_\gamma(\cN(\rho) \| \cN(\sigma)) \leq  E_\gamma(\rho\|\sigma)\,,
\end{align*}
which even holds for any $\rho,\sigma\in\cPo{}$, see~\cite[Lemma 4]{sharma2012strong}.
This makes it meaningful to define its contraction coefficient as
\begin{align}
\eta_\gamma(\cN) := \sup_{\rho,\sigma\in\mathcal{S}_{=}(\mathcal{H})} \frac{E_\gamma(\cN(\rho) \| \cN(\sigma))}{E_\gamma(\rho\|\sigma)}, 
\end{align} 
where the optimization is over $\rho,\sigma\in\cSe{}$,
and obviously $0\leq \eta_\gamma(\cN) \leq 1$. For a recent overview over contraction coefficients and their properties see~\cite{hirche2020contraction}. 

Interestingly, the contraction coefficient for the trace distance simplifies significantly. Instead of having to optimize over arbitrary initial states $\rho,\sigma$, it suffices to only consider orthogonal pure states~\cite{ruskai1994beyond}. We will prove now that an analogous result holds also for the hockey-stick divergence. This generalizes both the proof for the trace distance in~\cite{ruskai1994beyond} and that for the classical hockey-stick divergence in~\cite[Theorem 3]{asoodeh2020privacy}. 
\begin{theorem}\label{Thm:ConCoef-easy}
The hockey-stick divergence contraction coefficient can be equivalently expressed as
\begin{align}
\eta_\gamma(\cN) = \sup_{|\varphi\rangle\perp |\psi\rangle} E_\gamma(\cN(|\varphi\rangle\langle \varphi|)\|\cN(|\psi\rangle\langle \psi|)). 
\end{align}
\end{theorem}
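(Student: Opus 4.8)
The plan is to establish two inequalities. The direction $\eta_\gamma(\cN) \geq \sup_{|\varphi\rangle\perp|\psi\rangle} E_\gamma(\cN(\varphi)\|\cN(\psi))$ is immediate, since orthogonal pure states are in particular valid choices of $\rho,\sigma$ and for orthogonal pure states $E_\gamma(\varphi\|\psi) = \tr(\varphi - \gamma\psi)^+ = 1$ (the positive part is just $\varphi$, as $\varphi$ and $\psi$ have orthogonal support), so each such pair contributes exactly $E_\gamma(\cN(\varphi)\|\cN(\psi))$ to the supremum defining $\eta_\gamma$. The work is all in the reverse inequality: for arbitrary $\rho,\sigma$ we must produce orthogonal pure states $|\varphi\rangle,|\psi\rangle$ with $E_\gamma(\cN(\rho)\|\cN(\sigma)) \leq E_\gamma(\rho\|\sigma)\cdot E_\gamma(\cN(\varphi)\|\cN(\psi))$.

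First I would reduce to the case where $\rho$ and $\sigma$ have orthogonal supports. Write $X = \rho - \gamma\sigma = X^+ - X^-$ with $X^\pm \geq 0$ having orthogonal supports. The key normalization is $\tr X^+ = E_\gamma(\rho\|\sigma) =: t$. Set $\tilde\rho := X^+/t$ and $\tilde\sigma := X^-/\tr X^-$; these are genuine states with orthogonal supports, and one checks $\tr X^- = t - (1-\gamma) = t + \gamma - 1$. Using Lemma~\ref{Lem:Dg-meas} and data processing at the level of the test, one shows $E_\gamma(\cN(\rho)\|\cN(\sigma)) \leq \max_{0\leq\Lambda\leq\Id}\tr\{\Lambda\cN(X^+ - X^-)\} = \max_\Lambda \tr\{\Lambda(\cN(X^+) - \cN(X^-))\}$; since $\gamma\cN(X^-) \geq 0$ adds only a nonpositive contribution when $\Lambda \geq 0$ and we subtract it, we get $E_\gamma(\cN(\rho)\|\cN(\sigma)) \leq \tr(\cN(X^+) - \gamma\cN(X^-))^+ \cdot (\text{rescaling})$; more carefully, one factors out $t$ to obtain $E_\gamma(\cN(\rho)\|\cN(\sigma)) \leq t\, E_{\gamma'}(\cN(\tilde\rho)\|\cN(\tilde\sigma))$ for a suitable $\gamma'$, or simply bounds it by $t \sup E_\gamma$ over orthogonal (not necessarily pure) states. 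So the problem reduces to: for orthogonal states $\tilde\rho \perp \tilde\sigma$, bound $E_\gamma(\cN(\tilde\rho)\|\cN(\tilde\sigma))$ by the pure-state supremum.

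Next I would decompose $\tilde\rho = \sum_i p_i \varphi_i$ and $\tilde\sigma = \sum_j q_j \psi_j$ spectrally into pure states. Because $\tilde\rho \perp \tilde\sigma$, each $\varphi_i \perp \psi_j$. Now apply the measurement formulation of Lemma~\ref{Lem:Dg-meas}: pick the optimal $0\leq\Lambda\leq\Id$ so that $E_\gamma(\cN(\tilde\rho)\|\cN(\tilde\sigma)) = \tr\{\Lambda\cN(\tilde\rho)\} - \gamma\tr\{\Lambda\cN(\tilde\sigma)\} = \sum_i p_i \tr\{\Lambda\cN(\varphi_i)\} - \gamma\sum_j q_j\tr\{\Lambda\cN(\psi_j)\}$. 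The standard trick is to find an index pair $(i^*,j^*)$ achieving the best ratio: a convex-combination / averaging argument shows $\sum_i p_i a_i - \gamma\sum_j q_j b_j \leq \max_{i,j}(a_i - \gamma b_j)$ when $\sum_i p_i = \sum_j q_j = 1$ and $b_j \geq 0$ (here $a_i = \tr\{\Lambda\cN(\varphi_i)\}$, $b_j = \tr\{\Lambda\cN(\psi_j)\}$). Then $a_{i^*} - \gamma b_{j^*} = \tr\{\Lambda(\cN(\varphi_{i^*}) - \gamma\cN(\psi_{j^*}))\} \leq E_\gamma(\cN(\varphi_{i^*})\|\cN(\psi_{j^*}))$ again by Lemma~\ref{Lem:Dg-meas}, and since $\varphi_{i^*}\perp\psi_{j^*}$ this is at most the claimed supremum. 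Combining with $t = E_\gamma(\rho\|\sigma)$ from the first step finishes the bound.

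The main obstacle is getting the bookkeeping in the first reduction step exactly right: the hockey-stick divergence is not symmetric and not a norm, so one must be careful that factoring $E_\gamma(\rho\|\sigma)$ out front does not require a $\gamma$-dependent rescaling of $\tilde\sigma$ that breaks the argument. I expect the cleanest route is to avoid renormalizing $\tilde\sigma$ at all: keep the inequality in the form $E_\gamma(\cN(\rho)\|\cN(\sigma)) \leq \tr\{\Lambda^*(\cN(X^+) - \gamma\cN(X^-))\}$ directly, split $X^+ = \sum_i \lambda_i \varphi_i$ (so $\sum_i\lambda_i = t$) and $X^- = \sum_j \mu_j \psi_j$ (so $\sum_j\mu_j = t+\gamma-1$), and then show $\sum_i\lambda_i a_i - \gamma\sum_j\mu_j b_j \leq t\max_{i,j}(a_i - \gamma b_j)$ — this last step uses $\sum_j\mu_j = t + \gamma - 1 \geq \gamma t$ iff... which need not hold, so in fact one should just bound $-\gamma\sum_j\mu_j b_j \leq -\gamma(\min_j b_j)(t+\gamma-1)$ carefully, or better, observe that we may assume $b_{j^*}$ small. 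Resolving precisely this trade-off — equivalently, checking that the worst case genuinely occurs at orthogonal pure inputs — is the crux, and mirrors the delicate part of \cite{ruskai1994beyond} and \cite[Theorem 3]{asoodeh2020privacy}.
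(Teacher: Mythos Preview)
Your overall strategy is sound and, once completed, would give a proof that is arguably more direct than the paper's. But you stopped exactly one observation short of closing the argument, and the place you identify as ``the crux'' is in fact not a difficulty at all.

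Concretely: in your final paragraph you need
\[
\sum_i \lambda_i a_i - \sum_j \mu_j b_j \;\leq\; t\,\max_{i,j}(a_i - \gamma b_j),
\]
where $\sum_i \lambda_i = t = \tr X^+ = E_\gamma(\rho\|\sigma)$ and $\sum_j \mu_j = \tr X^- = t+\gamma-1$. Bounding $\sum_i \lambda_i a_i \leq t\max_i a_i$ and $\sum_j \mu_j b_j \geq (t+\gamma-1)\min_j b_j$, the desired inequality follows provided $t+\gamma-1 \geq \gamma t$. You write that this ``need not hold'', but it always does: $t+\gamma-1-\gamma t = (1-t)(\gamma-1)\geq 0$ because $\gamma \geq 1$ and $t = E_\gamma(\rho\|\sigma) = \max_{0\leq\Lambda\leq\Id}\tr\Lambda(\rho-\gamma\sigma) \leq \tr\rho = 1$. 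That single observation dissolves your stated obstruction and finishes the proof. (A minor slip along the way: you write $\tr\{\Lambda^*(\cN(X^+)-\gamma\cN(X^-))\}$, but the $\gamma$ is already absorbed into $X$, so the correct expression is $\tr\{\Lambda^*(\cN(X^+)-\cN(X^-))\}$; the extra $\gamma$ reappears only after you pull out the normalization $t$.)

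For comparison, the paper takes a different route: it works entirely with the trace-norm identity $E_\gamma = \tfrac12\|\cdot\|_1 + \tfrac12(1-\gamma)$, spectrally decomposes $\hat X^\pm$, applies convexity of $\|\cdot\|_1$ to reduce to a single pair of pure states, and then uses the \emph{triangle inequality} to convert $\|\tr(X^+)\cN(|m\rangle\langle m|) - \tr(X^-)\cN(|n\rangle\langle n|)\|_1$ into $\tr(X^+)\|\cN(|m\rangle\langle m|) - \gamma\cN(|n\rangle\langle n|)\|_1$ plus a remainder. The same inequality $\tr X^+ \leq 1$ is what makes that remainder have the right sign. So both proofs hinge on exactly the fact you overlooked; the paper packages it via the triangle inequality for $\|\cdot\|_1$, whereas your variational argument handles it in one line once you notice $t\leq 1$.
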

\begin{proof}
Let $X=\rho-\gamma\sigma$ and $(X^+,X^-)$ its  decomposition into positive and negative parts such that $X=X^+ - X^-$. Note that by definition and Equation~\eqref{Eq:Hs-trace} we have
\begin{align}
E_\gamma(\rho\|\sigma) = \tr X^+ &= \frac12 \| X \|_1 + \frac12(1-\gamma),\nonumber \\
E_\gamma(\cN(\rho) \| \cN(\sigma)) &= \frac12 \| \cN(X) \|_1 + \frac12(1-\gamma).  \label{Eq:Dgamma-N}
\end{align}
Further, define $\hat{X}^+ = \frac{X^+}{\tr X^+}$ and $\hat{X}^- = \frac{X^-}{\tr X^-}$ with spectral decompositions 
\begin{align*}
\hat{X}^+ &= \sum_m p_m |m\rangle \langle m|, \\
\hat{X}^- &= \sum_n q_n |n\rangle \langle n|.
\end{align*}
With these definitions, we observe
\begin{align}
&\| \cN(X)\|_1 \\
&= \| \cN(X^+) - \cN(X^-) \|_1  \nonumber\\
&= \| \tr(X^+)\cN(\hat{X}^+) - \tr(X^-)\cN(\hat{X}^-) \|_1  \nonumber\\
&= \left\| \tr(X^+)\cN(\sum_m p_m |m\rangle \langle m|)\right. \nonumber\\
&\qquad \left. -  \tr(X^-)\cN(\sum_n q_n |n\rangle \langle n|) \right\|_1  \nonumber\\
&= \left\| \sum_{m,n} p_m q_n \left( \tr(X^+)\cN( |m\rangle \langle m|) \right.\right. \nonumber\\
&\qquad- \left. \vphantom{\sum_{m,n}}\left. \tr(X^-)\cN( |n\rangle \langle n|)\right) \right\|_1 \nonumber \\
&\leq \sum_{m,n} p_m q_n \left\| \tr(X^+)\cN( |m\rangle \langle m|) \right. \nonumber\\
&\qquad- \left.\tr(X^-)\cN( |n\rangle \langle n|) \right\|_1  \nonumber\\
&\leq \sup_{m,n} \| \tr(X^+)\cN( |m\rangle \langle m|) - \tr(X^-)\cN( |n\rangle \langle n|) \|_1. \label{Eq:max-nm}
\end{align}
We continue with
\begingroup
\allowdisplaybreaks
\begin{align*}
&\| \tr(X^+)\cN( |m\rangle \langle m|) - \tr(X^-)\cN( |n\rangle \langle n|) \|_1 \\
\leq& \| \tr(X^+)\cN( |m\rangle \langle m|) -  \gamma\tr(X^+)\cN( |n\rangle \langle n|) \|_1 \\
&+ \| \gamma\tr(X^+)\cN( |n\rangle \langle n|) - \tr(X^-)\cN( |n\rangle \langle n|) \|_1   \\
=& \tr(X^+) \| \cN( |m\rangle \langle m|) -  \gamma\cN( |n\rangle \langle n|) \|_1 \\
&+ \| ((\gamma-1)\tr(X^+) + (1-\gamma))\cN( |n\rangle \langle n|)  \|_1  \\
=& \tr(X^+) \| \cN( |m\rangle \langle m|) -  \gamma\cN( |n\rangle \langle n|) \|_1 \\
&- ((\gamma-1)\tr(X^+) + (1-\gamma))  \\
=& \tr(X^+) ( \| \cN( |m\rangle \langle m|) -  \gamma\cN( |n\rangle \langle n|) \|_1 \\ 
&- (\gamma-1)) - (1-\gamma) \\
=& 2 \tr(X^+) E_\gamma(\cN( |m\rangle \langle m|) \| \cN( |n\rangle \langle n|)) - (1-\gamma) \\
=& 2 E_\gamma(\rho\|\sigma) E_\gamma(\cN( |m\rangle \langle m|) \| \cN( |n\rangle \langle n|)) - (1-\gamma),  
\end{align*}
\endgroup
where the first inequality is by triangle inequality, the first equality because $\tr(X^+)\geq 0$ and the second because $((\gamma-1)\tr(X^+) + (1-\gamma))\leq 0$ since $\tr(X^+)=E_\gamma(\rho\|\sigma)\le 1$. The remaining steps consist of shuffling terms and applying definitions. 
Plugging this back into Equation~\eqref{Eq:max-nm}, we get
\begin{align*}
&\| \cN(X)\|_1  \\
&\leq \sup_{m,n} 2 E_\gamma(\rho\|\sigma) E_\gamma(\cN( |m\rangle \langle m|) \| \cN( |n\rangle \langle n|)) - (1-\gamma)
\end{align*}
and again plugging this into Equation~\eqref{Eq:Dgamma-N}, we have
\begin{align*}
&E_\gamma(\cN(\rho) \| \cN(\sigma)) \\
&\leq \frac12 \left\{\sup_{m,n} 2 E_\gamma(\rho\|\sigma) E_\gamma(\cN( |m\rangle \langle m|) \| \cN( |n\rangle \langle n|)) \right. \\
&\left.\vphantom{\sup_{m,n} 2 E_\gamma} \quad - (1-\gamma)\right\} + \frac12(1-\gamma) \\
&=   E_\gamma(\rho\|\sigma) \sup_{m,n} E_\gamma(\cN( |m\rangle \langle m|) \| \cN( |n\rangle \langle n|)) .
\end{align*}
From here it follows directly that
\begin{align*}
\eta_\gamma(\cN) \leq \sup_{|\varphi\rangle \perp |\psi\rangle} E_\gamma(\cN(|\varphi\rangle\langle \varphi|)\|\cN(|\psi\rangle\langle \psi|)). 
\end{align*}
It remains to show that the inequality is indeed achieved. For that simply note that for all orthogonal $|\varphi\rangle$, $|\psi\rangle$, we have 
\begin{align*}
E_\gamma (|\varphi\rangle\langle \varphi| \| |\psi\rangle\langle \psi|) = 1
\end{align*}
and therefore 
\begin{align*}
\eta_\gamma(\cN) &\geq  \sup_{|\varphi\rangle \perp |\psi\rangle} \frac{E_\gamma(\cN(|\varphi\rangle\langle \varphi|)\|\cN(|\psi\rangle\langle \psi|))}{E_\gamma (|\varphi\rangle\langle \varphi| \| |\psi\rangle\langle \psi|) }  \\
&= \sup_{|\varphi\rangle \perp |\psi\rangle} E_\gamma(\cN(|\varphi\rangle\langle \varphi|)\|\cN(|\psi\rangle\langle \psi|)), 
\end{align*}
where the lower bound follows from the fact that we are restricting the supremum to a smaller set than in the definition of the contraction coefficient. Furthermore, for orthogonal pure states one can check that $E_\gamma (|\varphi\rangle\langle \varphi| \| |\psi\rangle\langle \psi|)=1$. This concludes the proof.
\end{proof}

Next we prove a Fuchs-van-de-Graaf type inequality that reduces to the well-known 
\begin{align*}
    \frac12 \| \rho-\sigma\|_1 \leq \sqrt{1-F(\rho,\sigma)}
\end{align*}
for $\gamma=1$, where $F(\rho,\sigma):=\|\sqrt{\rho}\sqrt{\sigma}\|_1^2$ is the quantum fidelity. 
\begin{lemma}\label{Lem:HS-FvdG}
For $\gamma\geq 1$ and $\rho,\sigma\in\cPo{}$, we have
\begin{align}
    &E_\gamma(\rho\|\sigma) \\ 
    &\leq \frac12\sqrt{(\tr(\rho+\gamma\sigma))^2 - 4\gamma (\tr\rho)^2(\tr\sigma)^2 F(\hat\rho,\hat\sigma)} \nonumber\\
    &\quad+ \frac{\tr(\rho-\gamma\sigma)}{2}\,,
\end{align}
where $\hat\rho=\frac{\rho}{\tr\rho}$ and $\hat\sigma=\frac{\sigma}{\tr\sigma}$. 
For $\rho,\sigma\in\cSe{}$ this simplifies to
\begin{align}
    E_\gamma(\rho\|\sigma) \leq \frac12\sqrt{(1+\gamma)^2 - 4\gamma F(\rho,\sigma)} + \frac{(1-\gamma)}{2}\,.
\end{align}
\end{lemma}
\begin{proof}
Using~\cite[Supplementary~Lemma~3]{CKW14} as stated in Lemma~\ref{lem:FvdG-PSD} we get
\begin{align*}
    \|\rho-\gamma\sigma\|_1^2 + 4\|\sqrt{\rho}\sqrt{\gamma\sigma}\|_1^2 \leq (\tr[\rho+\gamma\sigma])^2,
\end{align*}
which immediately implies 
\begin{align*}
    &\|\rho-\gamma\sigma\|_1 \\
    &\leq \sqrt{(\tr(\rho+\gamma\sigma))^2 - 4\gamma (\tr\rho)^2(\tr\sigma)^2 F(\hat\rho,\hat\sigma)}\,. 
\end{align*}
Plugging this into Equation~\eqref{Eq:Hs-trace} gives the desired result. 
\end{proof}

This also gives us a bound on the contraction coefficient as
\begin{align}\label{Eq:eta-FvdG}
&\eta_\gamma(\cN) \\
&\leq \sup_{|\varphi\rangle\perp |\psi\rangle} \frac12\sqrt{(1+\gamma)^2 - 4\gamma F(\cN(|\varphi\rangle\langle \varphi|),\cN(|\psi\rangle\langle \psi|))} \nonumber\\
&\quad+ \frac{(1-\gamma)}{2}. 
\end{align}

We can also bound the hockey-stick divergence for any $\gamma\geq 1$ directly by the trace-distance, leading to an alternative way of bounding the contraction coefficients. This generalizes the analog classical result in~\cite{asoodeh2021local}. 
\begin{lemma}\label{Lem:gamma-trace-distance}
For $\gamma\geq 1$ and $\rho,\sigma\in\cSe{}$, we have
\begin{align}
    1-\gamma(1-\frac12\|\rho-\sigma\|_1) \leq E_\gamma(\rho\|\sigma) \leq \frac12\|\rho-\sigma\|_1,
\end{align}
implying
\begin{align}
    1-\gamma(1-\eta_1(\cN)) \leq \eta_\gamma(\cN) \leq \eta_1(\cN),
\end{align}   
\end{lemma}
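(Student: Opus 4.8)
The plan is to prove the two-sided bound on $E_\gamma(\rho\|\sigma)$ directly from Equation~\eqref{Eq:Hs-trace}, and then transfer everything to the contraction coefficients by taking suprema, using Theorem~\ref{Thm:ConCoef-easy} to control the boundary terms.

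\emph{Step 1: the upper bound $E_\gamma(\rho\|\sigma)\le\frac12\|\rho-\sigma\|_1$.} I would use the variational formula from Lemma~\ref{Lem:Dg-meas}: pick $\Lambda=\Pi$, the projector onto the support of $(\rho-\sigma)^+$, which is optimal for $E_1$. Then $E_\gamma(\rho\|\sigma)=\max_\Lambda\tr\{\Lambda(\rho-\gamma\sigma)\}\le \tr\{\Pi(\rho-\gamma\sigma)\}+(\gamma-1)\tr\{\Pi\sigma\}$... more cleanly: since $\gamma\ge1$ and $\sigma\ge0$, for any $0\le\Lambda\le\Id$ we have $\tr\{\Lambda(\rho-\gamma\sigma)\}\le\tr\{\Lambda(\rho-\sigma)\}\le E_1(\rho\|\sigma)=\frac12\|\rho-\sigma\|_1$, and taking the max over $\Lambda$ gives the claim.

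\emph{Step 2: the lower bound $1-\gamma(1-\frac12\|\rho-\sigma\|_1)\le E_\gamma(\rho\|\sigma)$.} Again via Lemma~\ref{Lem:Dg-meas}, evaluate the variational expression at the specific choice $\Lambda=\Pi$, the projector onto the support of $(\rho-\sigma)^+$. Then $\tr\{\Pi(\rho-\gamma\sigma)\}=\tr\{\Pi\rho\}-\gamma\tr\{\Pi\sigma\}=\tr\{\Pi\rho\}-\gamma(\tr\{\Pi\rho\}-\frac12\|\rho-\sigma\|_1)$, using the standard identity $\tr\{\Pi(\rho-\sigma)\}=\frac12\|\rho-\sigma\|_1$. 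Writing $p:=\tr\{\Pi\rho\}\le1$ and $t:=\frac12\|\rho-\sigma\|_1$, this equals $p-\gamma(p-t)=p(1-\gamma)+\gamma t\ge (1-\gamma)+\gamma t=1-\gamma(1-t)$, where the inequality uses $1-\gamma\le0$ and $p\le1$. Hence $E_\gamma(\rho\|\sigma)\ge 1-\gamma(1-\frac12\|\rho-\sigma\|_1)$.

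\emph{Step 3: passing to contraction coefficients.} The upper bound $\eta_\gamma(\cN)\le\eta_1(\cN)$ is immediate: $E_\gamma(\cN(\rho)\|\cN(\sigma))\le E_1(\cN(\rho)\|\cN(\sigma))$ while $E_\gamma(\rho\|\sigma)\le E_1(\rho\|\sigma)$ does \emph{not} directly help, so instead I would invoke Theorem~\ref{Thm:ConCoef-easy}: both $\eta_1$ and $\eta_\gamma$ are computed as suprema over orthogonal pure states $|\varphi\rangle\perp|\psi\rangle$, where the denominators are both $1$, so $\eta_\gamma(\cN)=\sup_{|\varphi\rangle\perp|\psi\rangle}E_\gamma(\cN(\varphi)\|\cN(\psi))\le\sup_{|\varphi\rangle\perp|\psi\rangle}E_1(\cN(\varphi)\|\cN(\psi))=\eta_1(\cN)$ using Step~1 pointwise. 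For the lower bound, apply Step~2 pointwise with $\rho=\cN(\varphi)$, $\sigma=\cN(\psi)$: $E_\gamma(\cN(\varphi)\|\cN(\psi))\ge 1-\gamma(1-\frac12\|\cN(\varphi)-\cN(\psi)\|_1)=1-\gamma(1-E_1(\cN(\varphi)\|\cN(\psi)))$. The map $x\mapsto 1-\gamma(1-x)$ is monotone increasing, so taking the supremum over orthogonal pure states on both sides and applying Theorem~\ref{Thm:ConCoef-easy} twice yields $\eta_\gamma(\cN)\ge 1-\gamma(1-\eta_1(\cN))$.

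The only subtlety — and the step I would be most careful about — is Step~3: one cannot bound $\eta_\gamma$ by naively dividing the pointwise inequalities for $E_\gamma$ by $E_\gamma(\rho\|\sigma)$, since that denominator differs from $E_1(\rho\|\sigma)$. The clean route is to use Theorem~\ref{Thm:ConCoef-easy} so that both contraction coefficients are expressed over the \emph{same} set of orthogonal pure-state pairs with denominator $1$, after which the pointwise bounds of Steps~1--2 and the monotonicity of $x\mapsto1-\gamma(1-x)$ transfer directly. Everything else is routine manipulation of the trace-norm identity $\tr\{\Pi(\rho-\sigma)\}=\frac12\|\rho-\sigma\|_1$.
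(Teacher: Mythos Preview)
Your proof is correct and uses essentially the same ingredients as the paper: the variational formula of Lemma~\ref{Lem:Dg-meas} for the first display, and Theorem~\ref{Thm:ConCoef-easy} to pass to the contraction coefficients. The only minor difference is in the lower bound: rather than evaluating at the specific projector $\Pi$, the paper splits the maximand as $\gamma E_1(\rho\|\sigma)=\max_\Lambda\tr\Lambda(\gamma\rho-\gamma\sigma)\le\max_\Lambda\tr\Lambda(\rho-\gamma\sigma)+\max_\Lambda\tr\Lambda((\gamma-1)\rho)=E_\gamma(\rho\|\sigma)+(\gamma-1)$; both manipulations are equally short and yield the same inequality.
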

\begin{proof}
The only thing needed to prove is the first inequality and the remaining statement follows easily. To that end we observe
\begin{align*}
    &\gamma \frac12\|\rho-\sigma\|_1 \\
    &= \gamma \max_{0\leq\Lambda\leq\Id} \tr\Lambda(\rho-\sigma) \\
    &=\max_{0\leq\Lambda\leq\Id} \tr\Lambda(\gamma\rho-\gamma\sigma + \rho - \rho) \\
    &\leq\max_{0\leq\Lambda\leq\Id} \tr\Lambda(\rho-\gamma\sigma) + \max_{0\leq\Lambda\leq\Id}\tr\Lambda(\gamma\rho - \rho) \\
    &= E_\gamma(\rho\|\sigma) +\gamma -1. 
\end{align*}
The second inequality of the statement follows by definition and the third and fourth by using the simplified expression for the contraction coefficients in Theorem~\ref{Thm:ConCoef-easy}. 
\end{proof}

Finally, we list some potentially useful properties of $E_\gamma$. Some of these are generalizations of classical results proven in~\cite{liu2016e}. 
\begin{proposition}\label{Lem:HSD-properties}
We have the following properties:
\begin{itemize}
 \item{(Triangle inequality)}   For $\gamma_1, \gamma_2 \geq 1$ and $\rho,\sigma\in\cPo{}$, we have
\begin{align}
E_{\gamma_1\gamma_2}(\rho\|\sigma) \leq E_{\gamma_1}(\rho\|\tau) + \gamma_1 E_{\gamma_2}(\tau\|\sigma). \label{Eq:Triangle}
\end{align}
\item{(Strong convexity)} Let $\gamma_1,\gamma_2\geq 1$, $\rho=\sum_x p(x)\rho_x$ and  $\sigma=\sum_x q(x)\sigma_x$ with $\rho_x,\sigma_x\in\cPo{}$, we have
\begin{align}
    E_{\gamma_1\gamma_2}(\rho\|\sigma) \leq \sum_x p(x) E_{\gamma_1}(\rho_x\|\sigma_x) + \gamma_1  E_{\gamma_2}(\tilde p\| \tilde q), 
\end{align}
where $\tilde p$ and $\tilde q$ are non-normalized  distributions $\tilde p(x) = p(x)\tr\sigma_x$ and $\tilde q(x) = q(x)\tr\sigma_x$, respectively. This also implies convexity and joint convexity. 
\item{(Stability)} For $\gamma\geq 1$ and $\rho,\sigma,\tau\in\cPo{}$, $\tau\not=0$ , we have
\begin{align}
    E_\gamma(\rho\otimes\tau\|\sigma\otimes\tau) = \tr\left[\tau\right]E_\gamma(\rho\|\sigma).
\end{align}
\item{(Subadditivity)} For $\gamma_1,\gamma_2\geq 1$ and $\rho_1,\sigma_1\in\cP(\cH_1)$, $\rho_2,\sigma_2\in\cP(\cH_2)$, we have
\begin{align}\label{equ:subadditivity_gamma}
    &E_{\gamma_1\gamma_2}(\rho_1\otimes\rho_2\|\sigma_1\otimes\sigma_2) \nonumber\\
    &\leq\tr[\rho_2]E_{\gamma_1}(\rho_1\|\sigma_1) + \tr[\sigma_1]\gamma_1 E_{\gamma_2}(\rho_2\|\sigma_2), \\
    &E_{\gamma_1\gamma_2}(\rho_1\otimes\rho_2\|\sigma_1\otimes\sigma_2) \nonumber\\
    &\leq \tr[\rho_1]E_{\gamma_1}(\rho_2\|\sigma_2) + \tr[\sigma_2]\gamma_1 E_{\gamma_2}(\rho_1\|\sigma_1) .
\end{align}
\item{(Symmetry)} For $\gamma\geq 1$ and $\rho,\sigma\in\cPo{}$, we have
\begin{align}
    E_\gamma(\rho\|\sigma) = \gamma E_{\frac1\gamma}(\sigma\|\rho) + (\tr(\rho)-\gamma\tr(\sigma)).
\end{align}
\item{(Trace bound)} For $\gamma\geq 1$ and $\rho,\sigma,\tau\in\cPo{}$, we have
\begin{align}
    &E_\gamma(\rho\|\sigma) + E_\gamma(\rho\|\tau) \\
    &\geq \frac{\gamma}{2}\|\tau - \sigma\|_1 +(\tr(\rho)-\gamma\tr(\tau)).
\end{align}
\end{itemize}
\end{proposition}
\begin{proof}
The first statement is easily seen as follows, 
\begin{align*}
  &E_{\gamma_1\gamma_2}(\rho\|\sigma) \\
  &= \max_{0\leq\Lambda\leq\Id} \tr\Lambda(\rho-\gamma_1\gamma_2\sigma) \\
  &= \max_{0\leq\Lambda\leq\Id} \tr\Lambda(\rho-\gamma_1\tau + \gamma_1\tau  -  \gamma_1\gamma_2\sigma) \\
  &\leq \max_{0\leq\Lambda\leq\Id} \tr\Lambda(\rho-\gamma_1\tau) + \max_{0\leq\Lambda\leq\Id} \tr\Lambda(\gamma_1\tau  -  \gamma_1\gamma_2\sigma) \\
  &\leq E_{\gamma_1}(\rho\|\tau) + \gamma_1 E_{\gamma_2}(\tau\|\sigma).
\end{align*}
The strong convexity follows similarly by considering $\tau=\sum_x p(x) \sigma_x$. 
Stability follows by first considering the statement for the state $\tau'=\tau/\tr\left[\tau\right]$. Applying data-processing twice, once for the partial trace and once for $\cN(\rho)=\rho\otimes\tau'$ gives the statement for states. The generalization then follows by noting that the underlying quantity is absolutely homogeneous in $\tau$.
 Subadditivity follows by triangle inequality and stability, picking first $\tau=\sigma_1\otimes\rho_2$ and then $\tau=\rho_1\otimes\sigma_2$. For the last two items we need the observation that 
\begin{align*}
    \max_{0\leq\Lambda\leq\Id} \tr\Lambda(\rho-\gamma\sigma) = \max_{0\leq\Lambda\leq\Id} \tr(\Id - \Lambda) (\rho-\gamma\sigma). 
\end{align*}
From this symmetry follows immediately and the trace bound by observing
\begin{align*}
    &E_\gamma(\rho\|\sigma) + E_\gamma(\rho\|\tau) \\
    &= \max_{0\leq\Lambda\leq\Id} \tr\Lambda(\rho-\gamma\sigma) + \max_{0\leq\Lambda\leq\Id} \tr(\Id - \Lambda) (\rho-\gamma\tau) \\
    &\geq \max_{0\leq\Lambda\leq\Id} \tr\Lambda(\rho-\gamma\sigma - \rho + \gamma\tau) +(\tr(\rho)-\gamma\tr(\tau)) \\
    &= \frac{\gamma}{2}\|\tau - \sigma\|_1 +(\tr(\rho)-\gamma\tr(\tau)). 
\end{align*}
\end{proof}
Next we connect the hockey-stick divergence to the smooth $\max$-relative entropy. This is similar to~\cite[Lemma 1]{zhou2017differential}, however based on different definitions. Also, besides avoiding explicit use of measurements, we provide a considerably simpler proof. For a classical analogue of both results, see~\cite{Dwork2013,liu2016e}. 
\begin{lemma}\label{Lem:max-rel-ent}
For $\gamma\geq1$ we have, 
\begin{align}\label{equ:dmax-hockey}
    D_{\max}^\epsilon(\rho\|\sigma)\leq\log\gamma \quad\Leftrightarrow\quad E_\gamma(\rho\|\sigma)\leq\epsilon, 
\end{align}
where $D_{\max}^\epsilon(\rho\|\sigma) = \inf_{\bar\rho\in B^\epsilon(\rho)} D_{\max}(\bar\rho\|\sigma)$ is the smooth $\max$-relative entropy with $D_{\max}(\rho\|\sigma)=\inf\{\lambda : \rho\leq e^{\lambda}\sigma\}$ and $B^\epsilon(\rho) = \{\bar\rho : \bar\rho\in\cPo{} \wedge E_1(\rho,\bar\rho)\leq \epsilon\}$.
\end{lemma}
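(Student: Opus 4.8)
The plan is to prove the two implications in \eqref{equ:dmax-hockey} separately, in both cases using the variational formula $E_\gamma(\rho\|\sigma)=\max_{0\le\Lambda\le\Id}\tr\{\Lambda(\rho-\gamma\sigma)\}$ of Lemma~\ref{Lem:Dg-meas} so that no explicit measurement enters the argument.

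For ``$\Leftarrow$'' (that is, $E_\gamma(\rho\|\sigma)\le\epsilon\Rightarrow D_{\max}^\epsilon(\rho\|\sigma)\le\log\gamma$) I would write down an explicit smoothing and check it does the job. Take $\bar\rho:=\rho-(\rho-\gamma\sigma)^+$. Since $\rho-\gamma\sigma=(\rho-\gamma\sigma)^+-(\rho-\gamma\sigma)^-$ one also has $\bar\rho=\gamma\sigma-(\rho-\gamma\sigma)^-$, which makes $\bar\rho\le\gamma\sigma$ manifest, hence $D_{\max}(\bar\rho\|\sigma)\le\log\gamma$. On the other hand $\rho-\bar\rho=(\rho-\gamma\sigma)^+\ge 0$, so $\|\rho-\bar\rho\|_1=\tr(\rho-\gamma\sigma)^+=E_\gamma(\rho\|\sigma)\le\epsilon\le 2\epsilon$, i.e. $\bar\rho\in B^\epsilon(\rho)$. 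Combining the two facts gives $D_{\max}^\epsilon(\rho\|\sigma)\le D_{\max}(\bar\rho\|\sigma)\le\log\gamma$.

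For ``$\Rightarrow$'', let $\bar\rho\in B^\epsilon(\rho)$ attain (or, via a perturbation $\gamma\mapsto\gamma e^{1/n}$ and a limiting argument, approach) the infimum defining $D_{\max}^\epsilon$, so that $\bar\rho\le\gamma\sigma$ and $\|\rho-\bar\rho\|_1\le 2\epsilon$. Then for every $0\le\Lambda\le\Id$,
\begin{align*}
\tr\{\Lambda(\rho-\gamma\sigma)\}&=\tr\{\Lambda(\rho-\bar\rho)\}+\tr\{\Lambda(\bar\rho-\gamma\sigma)\}\\
&\le\tr\{\Lambda(\rho-\bar\rho)\}\le\tr(\rho-\bar\rho)^+\le\tfrac12\|\rho-\bar\rho\|_1\le\epsilon,
\end{align*}
where the second step uses $\bar\rho-\gamma\sigma\le 0$ and $\Lambda\ge 0$. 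Taking the supremum over $\Lambda$ and invoking Lemma~\ref{Lem:Dg-meas} gives $E_\gamma(\rho\|\sigma)\le\epsilon$.

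\textbf{Where the work is.} The two chains of inequalities above are routine. The points that need care, and which I expect to be the real obstacle, are: (i) in ``$\Leftarrow$'', verifying that the explicit $\bar\rho$ actually satisfies every defining condition of $B^\epsilon(\rho)$ (in particular the positivity demanded there) -- this is the only non-formal step in that direction; (ii) in ``$\Rightarrow$'', the bound $\tr(\rho-\bar\rho)^+\le\tfrac12\|\rho-\bar\rho\|_1$, which is immediate when $\tr\bar\rho=\tr\rho=1$ but in general only gives $\tr(\rho-\bar\rho)^+\le\tfrac12\|\rho-\bar\rho\|_1+\tfrac12(1-\tr\bar\rho)$, so one must argue that the (near-)optimal $\bar\rho$ can be taken with $\tr\bar\rho\ge 1$, or else handle a sub-normalised $\bar\rho$ directly; and (iii) justifying the limiting step, i.e. attainment of the infimum in $D_{\max}^\epsilon$, or the $\gamma e^{1/n}$ argument together with continuity of $E_\gamma$ in $\gamma$.
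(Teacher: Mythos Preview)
Your ``$\Rightarrow$'' argument is essentially the paper's. The paper phrases it via the triangle inequality of Proposition~\ref{Lem:HSD-properties},
\[
E_\gamma(\rho\|\sigma)\le E_1(\rho\|\bar\rho)+E_\gamma(\bar\rho\|\sigma)\le\epsilon+0,
\]
but this is precisely the chain you wrote in variational form. Your concern (ii)---that $\tr(\rho-\bar\rho)^+\le\tfrac12\|\rho-\bar\rho\|_1$ needs $\tr\bar\rho\ge 1$---is real and applies equally to the paper, which simply writes $E_1(\rho,\bar\rho)\le\epsilon$ without further comment. (The paper does not discuss attainment of the infimum either, so your concern (iii) is likewise not addressed there.)

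For ``$\Leftarrow$'' you depart from the paper, and here your obstacle (i) is a genuine gap: your $\bar\rho=\rho-(\rho-\gamma\sigma)^+$ need \emph{not} be positive. Already for $\gamma=1$, $\rho=|0\rangle\langle 0|$, $\sigma=|+\rangle\langle +|$ one finds $\langle 1|\bar\rho|1\rangle=-\langle 1|(\rho-\sigma)^+|1\rangle<0$, so $\bar\rho\notin B^\epsilon(\rho)$. The paper sidesteps this entirely by taking the much simpler $\bar\rho=\gamma\sigma$: positivity and $\bar\rho\le\gamma\sigma$ are then trivial, and the ball membership is checked via $E_1(\rho\|\gamma\sigma)=\tr(\rho-\gamma\sigma)^+=E_\gamma(\rho\|\sigma)\le\epsilon$. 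Note, however, that since $\tr(\gamma\sigma)=\gamma>1$ this is once more an instance of the normalization subtlety you flagged in (ii): by \eqref{Eq:Hs-trace} one has $\tfrac12\|\rho-\gamma\sigma\|_1=E_\gamma(\rho\|\sigma)+\tfrac{\gamma-1}{2}$, so the paper is effectively reading the smoothing condition as $E_1(\rho\|\bar\rho)\le\epsilon$ rather than $\tfrac12\|\rho-\bar\rho\|_1\le\epsilon$. In short: replace your $\bar\rho$ by $\gamma\sigma$, and recognise that the normalization caveat you raised shows up symmetrically in both directions of the paper's argument as well.
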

\begin{proof}
First, we show the $\Rightarrow$ direction. Assume that $D_{\max}^\epsilon(\rho\|\sigma)\leq\log\gamma$, this implies that there exists a $\bar\rho$ such that $E_1(\rho,\bar\rho)\leq\epsilon$ and $\bar\rho\leq\gamma\sigma$. From this and the triangle inequality in \Cref{Lem:HSD-properties} we immediately get
\begin{align*}
    E_\gamma(\rho\|\sigma) &\leq E_1(\rho\|\bar\rho) + E_\gamma(\bar\rho\|\sigma) \\ 
    &\leq \epsilon + 0. 
\end{align*}
We now show the $\Leftarrow$ direction. Fix $\bar\rho=\gamma\sigma$. As $\gamma\geq1$ and $\sigma$ is a positive operator, this is a positive operator.
Furthermore, we have that
\begin{align*}
    E_1(\rho,\bar\rho) &= \tr(\rho-\bar\rho)^+ \\
    &= \tr(\rho-\gamma\sigma)^+ \\
    &= E_\gamma(\rho,\sigma)\leq \epsilon\,, 
\end{align*}
where in the last inequality we used our hypothesis. Thus, $\bar{\rho}\in B^{\epsilon}(\rho)$. Now observe that $D_{\max}(\bar\rho\|\sigma)=\log\gamma$ by construction, from which it follows that $D_{\max}^\epsilon(\rho\|\sigma)\leq\log\gamma$.
We therefore see that $E_\gamma(\rho,\sigma)\leq\epsilon$ implies $D_{\max}^\epsilon(\rho\|\sigma)\leq\log\gamma$. 
\end{proof}

Note that $D_{\max}^\epsilon$ does not correspond to the usual definition of the smooth max-relative entropy, as we do not constrain the optimization to normalized or subnormalized operators and use $E_1$ as our distance measure of choice. It does however generalize the definition used in the proof of the analog classical lemma, see~\cite[Definition 8]{liu2016e}. Also,  
\cite[Lemma 1]{zhou2017differential} proves a similar statement optimizing over normalized operators that are however not necessarily positive. 
It remains open for now whether both conditions can be achieved simultaneously in the above proof.

In the next section, we will proceed to apply the above results to quantum differential privacy.

\section{Quantum differential privacy}\label{Sec:QDP}

There are several similar definitions of quantum differential privacy in the literature that apply to different settings. 
Following~\cite{zhou2017differential}, we will formulate ours in a way that it applies to an arbitrary quantum algorithm $\cA$, i.e. a completely positive, trace-preserving map. The general idea is that if we apply the algorithm to a state from a fixed database, say $\cD$, then a malicious party gaining access to the output should not be able to distinguish by any measurement whether the used input was a certain state or one of its immediate neighbors in the database. Classically the motivation is usually to consider a set of databases that are neighbors if they differ in only one entry, e.g. an observer of a medical trial should not be able to determine the results of an individual participant. 

In the quantum literature, several definitions of the neighboring status are used, e.g. closeness in trace distance or reachability by a single local operation. Leaving the exact choice of definition open for now, we denote two states being neighbors by $\sigma\sim\rho$. We can now state the definition of quantum differential privacy. 

\begin{definition}\label{Def:QDP}
Let $\cD$ be a set of quantum states and $\cA$ be a quantum algorithm (i.e. a $\operatorname{CPTP}$ map). We call $\cA$ $(\epsilon,\delta)$-differentially private if for all measurements $0\leq M \leq \Id$ and all $\rho, \sigma\in\cD$ such that $\rho\sim\sigma$, we have
\begin{align}
\tr( M \cA(\rho)) \leq e^\epsilon \tr( M \cA(\sigma)) + \delta. \label{Eq:DP-def}
\end{align}
We simply call $\cA$ $\epsilon$-differentially private if $\cA$ is $(\epsilon,0)$-differentially private. 
\end{definition}
This definition is a rather direct generalization of classical differential privacy. Indeed, if we only consider diagonal states and projectors $P$ in the computational basis in the definition above and interpret $\tr( P \cA(\sigma))$ as the probability of the measurement outcome lying in a given set, we obtain exactly the definition in~\cite[Definition 2.4]{dwork2006differential}.
But for quantum states we need to optimize over all possible basis and can even consider POVMs. However, we will now use the tools from the previous section to see that it can indeed be checked as a property of the quantum states themselves without explicitly considering the measurements. 

\begin{lemma}\label{Lem:DP-divergence} 
The following three statements are equivalent, 
\begin{align}
&\cA\, \text{is $(\epsilon,\delta)$-differentially private} \\&\quad\Leftrightarrow\quad \sup_{\rho\sim\sigma} E_{e^\epsilon}(\cA(\rho) \| \cA(\sigma) ) \leq \delta \\
&\quad\Leftrightarrow\quad \sup_{\rho\sim\sigma} D_{\max}^{\delta}(\cA(\rho) \| \cA(\sigma) ) \leq \epsilon. \label{Eq:DP-max-rel-ent}
\end{align}
\end{lemma}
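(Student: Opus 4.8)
The plan is to chain two equivalences, each of which is essentially a direct invocation of a lemma already proved above, so that no genuinely new computation is needed. Throughout, note that differential privacy is only meaningful for $\epsilon\geq 0$, so $\gamma:=e^\epsilon\geq 1$ and Lemmas~\ref{Lem:Dg-meas} and~\ref{Lem:max-rel-ent} apply.

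First I would establish the equivalence between Definition~\ref{Def:QDP} and the middle statement. Fix a neighboring pair $\rho\sim\sigma$. The defining inequality~\eqref{Eq:DP-def} can be rewritten as $\tr\{M(\cA(\rho)-e^\epsilon\cA(\sigma))\}\leq\delta$ for every measurement $0\leq M\leq\Id$. By Lemma~\ref{Lem:Dg-meas} applied with $\gamma=e^\epsilon$, the supremum of the left-hand side over all such $M$ is exactly $E_{e^\epsilon}(\cA(\rho)\|\cA(\sigma))$. Hence the family of inequalities~\eqref{Eq:DP-def} for this fixed pair holds for all $M$ if and only if $E_{e^\epsilon}(\cA(\rho)\|\cA(\sigma))\leq\delta$. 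Taking the supremum over all neighboring pairs $\rho\sim\sigma$ then gives the first claimed equivalence, since ``$\sup_{\rho\sim\sigma}(\cdots)\leq\delta$'' is the same as ``$(\cdots)\leq\delta$ for every neighboring pair''.

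Next I would handle the equivalence between the middle statement and~\eqref{Eq:DP-max-rel-ent}. For each fixed pair $\rho\sim\sigma$, apply Lemma~\ref{Lem:max-rel-ent} with the substitutions $\rho\mapsto\cA(\rho)$, $\sigma\mapsto\cA(\sigma)$, $\gamma\mapsto e^\epsilon$ and smoothing parameter $\mapsto\delta$: this yields $D_{\max}^{\delta}(\cA(\rho)\|\cA(\sigma))\leq\log e^\epsilon=\epsilon$ if and only if $E_{e^\epsilon}(\cA(\rho)\|\cA(\sigma))\leq\delta$. Since this equivalence is valid pairwise and the bounds $\delta,\epsilon$ do not depend on the pair, taking the supremum over neighboring pairs on both sides turns it into the equivalence of~\eqref{Eq:DP-max-rel-ent} with the middle statement, completing the chain.

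There is no real obstacle here; the argument is bookkeeping around two previously established lemmas. The only points meriting a word of care are (i) checking that the reduction from ``holds for all measurements / all neighboring pairs'' to ``a supremum is $\leq$ a constant'' is legitimate, which it is precisely because the supremum over $M$ is attained (Lemma~\ref{Lem:Dg-meas}) and the thresholds are pair-independent, and (ii) making sure the smoothing parameter in Lemma~\ref{Lem:max-rel-ent} is matched to $\delta$ and the relative-entropy bound to $\epsilon=\log e^\epsilon$, and not the other way around. Assembling the two ``if and only if'' statements in this order then yields the three-way equivalence.
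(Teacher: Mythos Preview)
Your proof is correct and follows essentially the same approach as the paper: rewrite the DP inequality as $\tr\{M(\cA(\rho)-e^\epsilon\cA(\sigma))\}\leq\delta$ and invoke Lemma~\ref{Lem:Dg-meas} for the first equivalence, then apply Lemma~\ref{Lem:max-rel-ent} pairwise for the second. The only difference is that you spell out the bookkeeping about suprema over measurements and neighboring pairs more explicitly than the paper does.
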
 
\begin{proof}
Note that we can rewrite the condition in Equation~\eqref{Eq:DP-def} as
\begin{align*}
\tr\left( M (\cA(\rho) -  e^\epsilon \cA(\sigma))\right) \leq \delta. 
\end{align*}
Since this has to hold for all measurements and all neighboring input states we can use Lemma~\ref{Lem:Dg-meas} to conclude the first equivalence. The second then follows directly from Lemma~\ref{Lem:max-rel-ent}.
\end{proof}
Note that Lemma~\ref{Lem:DP-divergence} implies that if all the outputs of an algorithm are diagonal in the same basis, then quantum DP is equivalent to classical DP. This is because for two states that commute, $E_\gamma$ is the same for the states and the corresponding probability distributions.
Note that in the case of $\epsilon=0$ we have a condition on the trace distance. This is also known as local sensitivity as e.g. defined in~\cite[Definition 7.1]{Dwork2013}, which in turn is closely related to the stability of an algorithm, see~\cite[Section 7.3]{Dwork2013}. For recent quantum generalization of the latter concept see also~\cite{arunachalam2021private}.  

The above allows us to immediately conclude some well known properties, however with remarkably simple proofs solely based on properties of the divergences.
\begin{corollary}\label{cor:post-parallel}
The following properties hold.
\begin{itemize}
    \item{(Post-processing)} Let $\cA$ be $(\epsilon,\delta)$-differentially private and $\cN$ be an arbitrary quantum channel, then $\cN\circ\cA$ is also $(\epsilon,\delta)$-differentially private. 
    \item{(Parallel composition)} Let $\cA_1$ be $(\epsilon_1,\delta_1)$-differentially private and $\cA_2$ be $(\epsilon_2,\delta_2)$-differentially private. Define that $\rho_1\otimes\rho_2\sim\sigma_1\otimes\sigma_2$ if $\rho_1\sim\sigma_1$ and $\rho_2\sim\sigma_2$. Then $\cA_1\otimes\cA_2$ is $(\epsilon_1+\epsilon_2, \bar\delta)$-differentially private on such product states, with $\bar\delta=\min\{\delta_1+e^{\epsilon_1}\delta_2, e^{\epsilon_2}\delta_1+\delta_2\}$. 

\end{itemize}
\end{corollary}
\begin{proof}
The post-processing property was first shown in~\cite[Proposition 1]{zhou2017differential} and composition in~\cite[Theorem 4]{zhou2017differential} for $\delta_1=\delta_2=0$ and for the general case in~\cite[Theorem 4]{zhou2017differential}, although the latter relied on a erroneous assumption in~\cite[Lemma 1]{zhou2017differential}, see below. Now, post-processing simply follows by data-processing of the hockey-stick divergence. Composition for $\delta_1=\delta_2=0$ is of course implied by the general case, however also follows easily by subadditivity of the hockey-stick divergence which we showed in Lemma~\ref{Lem:HSD-properties}. The general case can be seen from the differential privacy formulation in Equation~\eqref{Eq:DP-max-rel-ent}. Let $\bar\rho_1$ and $\bar\rho_2$ be the optimizers in the smooth $\max$-relative entropies implied by the differential privacy assumption. One easily sees that 
\begin{align}\label{equ:tensorproduct_positive}
\bar\rho_1\otimes\bar\rho_2 \leq e^{\epsilon_1+\epsilon_2}\cA_1(\sigma_1)\otimes\cA_2(\sigma_2)
\end{align}
and 
\begin{align*}    &E_1(\cA_1(\rho_1)\otimes\cA_2(\rho_2)\|\bar\rho_1\otimes\bar\rho_2) \\
    &\leq  \tr[\bar\rho_2] E_1(\cA_1(\rho_1)\|\bar\rho_1) +  \tr[\cA_1(\rho_1)]E_1(\cA_2(\rho_2)\|\bar\rho_2) \\
    &\leq e^{\eps_2}\delta_1 + \delta_2, 
\end{align*}
which follows from the subadditivity property in Proposition~\ref{Lem:HSD-properties} and similarly, 
\begin{align*}  E_1(\cA_1(\rho_1)\otimes\cA_2(\rho_2)\|\bar\rho_1\otimes\bar\rho_2)  \leq \delta_1 + e^{\eps_1}\delta_2. 
\end{align*} 
This makes $\bar\rho=\bar\rho_1\otimes\bar\rho_2$ a valid choice to prove the claim. 
\end{proof}
We note that the parallel composition property was also claimed in~\cite[Theorem 5]{zhou2017differential}. However, note that in their analogue of Eq.~\eqref{equ:dmax-hockey} in~\cite[Lemma 1]{zhou2017differential} their definition of $D_{\max}^\epsilon$ requires an optimization over states which are not necessarily positive. As Eq.~\eqref{equ:tensorproduct_positive} does not hold for operators that are not positive (i.e. $A_1\leq B_1, A_2\leq B_2\nRightarrow A_1\otimes A_2\leq B_1\otimes B_2$ in general), the parallel decomposition does not follow. Thus, to the best of our knowledge, Corollary~\ref{cor:post-parallel} is the first to establish parallel composition property for quantum differential privacy.

The implementation of many near-term quantum algorithms on noisy devices can be modelled by layers of intended channels $\cC_i$, e.g. gates in a circuit or layers of a quantum neural network, directly followed by intermediate noise $\cN_i$, i.e. 
\begin{align}
\cA = \bigcirc_i^n \cN_i \circ \cC_i. \label{Eq:LayerA}
\end{align}
These types of algorithms are predestined to applying an approach based on contraction coefficients. 
Doing so, we get the following result. 
\begin{proposition}\label{Thm:Contraction}
For an algorithm $\cA$ of the form in Equation~\eqref{Eq:LayerA} we have
\begin{align}
E_{e^\epsilon}(\cA(\rho) \| \cA(\sigma) ) \leq \left(\prod_i \eta_{e^\epsilon}(\cN_i) \right) E_{e^\epsilon}(\rho \| \sigma )\,.
\end{align}
\end{proposition}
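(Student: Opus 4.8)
The plan is to iterate the definition of the contraction coefficient across the $n$ layers of the algorithm, using data-processing to handle the intended channels $\cC_i$ for free. Write $\cA = \cN_n \circ \cC_n \circ \cdots \circ \cN_1 \circ \cC_1$, and for $k = 0, 1, \dots, n$ set $\rho_k := (\cN_k \circ \cC_k \circ \cdots \circ \cN_1 \circ \cC_1)(\rho)$ and $\sigma_k := (\cN_k \circ \cC_k \circ \cdots \circ \cN_1 \circ \cC_1)(\sigma)$, with $\rho_0 = \rho$ and $\sigma_0 = \sigma$. The claim is then the telescoped statement $E_{e^\epsilon}(\rho_n \| \sigma_n) \leq \left(\prod_{i=1}^n \eta_{e^\epsilon}(\cN_i)\right) E_{e^\epsilon}(\rho_0 \| \sigma_0)$.

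The induction step has two parts. First, passing through the intended channel $\cC_i$: by the data-processing inequality for $E_\gamma$ (recalled in the excerpt, established in~\cite{sharma2012strong}), we have $E_{e^\epsilon}(\cC_i(\rho_{i-1}) \| \cC_i(\sigma_{i-1})) \leq E_{e^\epsilon}(\rho_{i-1} \| \sigma_{i-1})$. Second, passing through the noise channel $\cN_i$: directly from the definition $\eta_\gamma(\cN) = \sup_{\rho,\sigma} E_\gamma(\cN(\rho)\|\cN(\sigma)) / E_\gamma(\rho\|\sigma)$, applied to the particular pair $\cC_i(\rho_{i-1}), \cC_i(\sigma_{i-1})$, we get $E_{e^\epsilon}(\cN_i(\cC_i(\rho_{i-1})) \| \cN_i(\cC_i(\sigma_{i-1}))) \leq \eta_{e^\epsilon}(\cN_i) \cdot E_{e^\epsilon}(\cC_i(\rho_{i-1}) \| \cC_i(\sigma_{i-1}))$. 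Chaining these two, $E_{e^\epsilon}(\rho_i \| \sigma_i) \leq \eta_{e^\epsilon}(\cN_i)\, E_{e^\epsilon}(\rho_{i-1} \| \sigma_{i-1})$, and iterating from $i = n$ down to $i = 1$ yields the product bound. A minor caveat worth a sentence in the writeup: the contraction-coefficient inequality as I have used it requires $E_{e^\epsilon}(\cC_i(\rho_{i-1})\|\cC_i(\sigma_{i-1})) \neq 0$ to divide, but the bound $E_{e^\epsilon}(\cN_i(\tau)\|\cN_i(\omega)) \leq \eta_{e^\epsilon}(\cN_i)\,E_{e^\epsilon}(\tau\|\omega)$ holds trivially when the right side is zero since $E_\gamma \geq 0$ and data-processing forces the left side to zero as well, so the step is valid in all cases.

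I do not expect a genuine obstacle here — the proposition is essentially a bookkeeping consequence of the two properties (data-processing and the definition of $\eta_\gamma$) already in hand. The only place to be slightly careful is the composition order: the product $\prod_i \eta_{e^\epsilon}(\cN_i)$ is insensitive to order, so the final statement is clean, but in the induction one should be explicit about which end of the chain is peeled off at each step to avoid an off-by-one in indexing the intermediate states.
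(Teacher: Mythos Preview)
Your argument is correct and matches the paper's approach exactly: the paper's proof is the single sentence ``The proof follows by alternatingly using the definition of the contraction coefficient to shell off the noise and using data processing to remove the computational layers,'' which is precisely the two-step induction you spelled out. Your added remark on the $E_{e^\epsilon}=0$ edge case is a nice touch that the paper leaves implicit.
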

\begin{proof} 
The proof follows by alternatingly using the definition of the contraction coefficient to shell off the noise and using data processing to remove the computational layers. 
\end{proof}
This directly implies a decay in the $\delta$ parameter for differential privacy based on the contraction coefficient of the noise channels. This intuition becomes even more transparent when we consider a special case. 
\begin{corollary}\label{Cor:delta-decay}
For an algorithm $\cA$ of the form in Equation~\eqref{Eq:LayerA} with all $\cN_i=\cN$ identical and $\rho\sim\sigma$ if $\frac12 \|\rho-\sigma\|_1 \leq\kappa$, then $\cA$ is $(\epsilon, \delta)$-differentially private with
\begin{align}
\delta = \left(\eta_{e^\epsilon}(\cN)\right)^n \kappa. \label{Eq:delta-decay}
\end{align}
\end{corollary}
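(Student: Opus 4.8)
The plan is to simply chain together three earlier results. First I would invoke Proposition~\ref{Thm:Contraction}: since $\cA$ has the layered form of Equation~\eqref{Eq:LayerA} with all noise channels equal to $\cN$, we immediately get
\begin{align*}
E_{e^\epsilon}(\cA(\rho) \| \cA(\sigma)) \leq \left(\prod_{i=1}^n \eta_{e^\epsilon}(\cN_i)\right) E_{e^\epsilon}(\rho\|\sigma) = \left(\eta_{e^\epsilon}(\cN)\right)^n E_{e^\epsilon}(\rho\|\sigma).
\end{align*}

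Next I would control the input term $E_{e^\epsilon}(\rho\|\sigma)$ by the trace distance. Since $e^\epsilon \geq 1$, the upper bound in Lemma~\ref{Lem:gamma-trace-distance} gives $E_{e^\epsilon}(\rho\|\sigma) \leq \tfrac12\|\rho-\sigma\|_1$, and by the neighbouring-state hypothesis $\tfrac12\|\rho-\sigma\|_1 \leq \kappa$ for every $\rho\sim\sigma$. Combining with the previous display and taking the supremum over all neighbouring pairs yields
\begin{align*}
\sup_{\rho\sim\sigma} E_{e^\epsilon}(\cA(\rho)\|\cA(\sigma)) \leq \left(\eta_{e^\epsilon}(\cN)\right)^n \kappa.
\end{align*}

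Finally I would apply the equivalence in Lemma~\ref{Lem:DP-divergence}: $\cA$ is $(\epsilon,\delta)$-differentially private if and only if $\sup_{\rho\sim\sigma} E_{e^\epsilon}(\cA(\rho)\|\cA(\sigma)) \leq \delta$, so the bound above shows $\cA$ is $(\epsilon,\delta)$-differentially private with $\delta = \left(\eta_{e^\epsilon}(\cN)\right)^n \kappa$, which is exactly Equation~\eqref{Eq:delta-decay}.

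Since every step is a direct application of a previously established lemma, there is no real obstacle here; the only point that warrants care is that the trace-distance bound must hold uniformly over all neighbours so that it survives the supremum, which is guaranteed by the definition $\rho\sim\sigma \iff \tfrac12\|\rho-\sigma\|_1\le\kappa$. One could optionally remark that a sharper estimate is available by using the tighter lower/upper bracketing of $E_{e^\epsilon}$ in Lemma~\ref{Lem:gamma-trace-distance} or the Fuchs--van-de-Graaf-type bound of Lemma~\ref{Lem:HS-FvdG} on the last layer, but the stated clean form follows from the crude bound alone.
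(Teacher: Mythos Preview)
Your proof is correct and matches the paper's approach; the paper simply states that the corollary ``is a direct application of Proposition~\ref{Thm:Contraction}'' without spelling out the two auxiliary steps you make explicit (the trace-distance bound on $E_{e^\epsilon}(\rho\|\sigma)$ via Lemma~\ref{Lem:gamma-trace-distance} and the equivalence of Lemma~\ref{Lem:DP-divergence}).
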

\begin{proof}
The corollary is a direct application of Proposition~\ref{Thm:Contraction}. 
\end{proof}
This implies in particular an exponential decay of $\delta$ with the length of the algorithm, i.e. every such algorithm with $\eta_{e^\epsilon}(\cN)<1$ will eventually become differentially private with vanishing $\delta$ for large enough $n$. Note that generally $\eta_{e^\epsilon}(\cN)$ is a function of the channel but also $\epsilon$ allowing for a certain trade-off between $\epsilon$ and $\delta$. Finally, we remark that thanks to Theorem~\ref{Thm:ConCoef-easy} in the previous section, $\eta_{e^\epsilon}(\cN)$ is more easily computable and can be controlled analytically or numerically for many noise models. Before moving to more specific models, we give a simple bound on the measured relative entropy, defined as
\begin{align}
     D_M(\rho \| \sigma) = \sup_{\{M_x\}}  D( \tr(M_x \rho) \| \tr(M_x \sigma)), 
\end{align}
of a differentially private channel. The relative entropy is a common tool when comparing quantum states and therefore the result might be of independent interest.  
\begin{lemma}\label{Lem:meas-rel-ent}
Let $\cA$ be $\epsilon$-differentially private. Then for all $\rho\sim\sigma$,
\begin{align}
    D_M(\cA(\rho) \| \cA(\sigma)) \leq 2\epsilon E_1(\cA(\rho) \| \cA(\sigma)) \leq 2\epsilon(1-e^{-\epsilon}). 
\end{align}
\end{lemma}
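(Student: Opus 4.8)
The plan is to establish the two inequalities separately, starting with the second one, which is an immediate consequence of earlier results. By Lemma~\ref{Lem:DP-divergence}, $\epsilon$-differential privacy of $\cA$ is equivalent to $E_{e^\epsilon}(\cA(\rho)\|\cA(\sigma)) \leq 0$ for all $\rho\sim\sigma$. Feeding this into the left-hand inequality of Lemma~\ref{Lem:gamma-trace-distance} with $\gamma = e^\epsilon \geq 1$ gives $1 - e^\epsilon\big(1 - E_1(\cA(\rho)\|\cA(\sigma))\big) \leq E_{e^\epsilon}(\cA(\rho)\|\cA(\sigma)) \leq 0$, and rearranging yields $E_1(\cA(\rho)\|\cA(\sigma)) \leq 1 - e^{-\epsilon}$, which is the second inequality in the claim.

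For the first inequality, I would fix an arbitrary POVM $\{M_x\}$ and set $p_x = \tr(M_x\cA(\rho))$ and $q_x = \tr(M_x\cA(\sigma))$, which are genuine probability distributions. Since the neighboring relation is symmetric, applying $\epsilon$-differential privacy (with $\delta = 0$) to both $(\rho,\sigma)$ and $(\sigma,\rho)$ yields $e^{-\epsilon}q_x \leq p_x \leq e^{\epsilon}q_x$, hence $\lvert\log(p_x/q_x)\rvert \leq \epsilon$ for every $x$. Then I would use the elementary bound $D(p\|q) \leq D(p\|q) + D(q\|p) = \sum_x (p_x - q_x)\log(p_x/q_x)$, which holds because $D(q\|p)\geq 0$ (both are normalized). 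In each summand $p_x - q_x$ and $\log(p_x/q_x)$ share the same sign, so $(p_x-q_x)\log(p_x/q_x) = \lvert p_x - q_x\rvert\,\lvert\log(p_x/q_x)\rvert \leq \epsilon\,\lvert p_x - q_x\rvert$, and summing gives $D(p\|q)\leq \epsilon\|p-q\|_1$. Finally, by data processing of the trace norm under the measurement channel $\|p-q\|_1 \leq \|\cA(\rho) - \cA(\sigma)\|_1 = 2\,E_1(\cA(\rho)\|\cA(\sigma))$, so $D(p\|q) \leq 2\epsilon\, E_1(\cA(\rho)\|\cA(\sigma))$; taking the supremum over POVMs gives $D_M(\cA(\rho)\|\cA(\sigma)) \leq 2\epsilon\, E_1(\cA(\rho)\|\cA(\sigma))$, and combining with the second inequality completes the proof.

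The point requiring care is the first inequality: one must not attempt to bound $p_x\log(p_x/q_x)$ termwise by $\epsilon(p_x - q_x)$, since this is false (e.g. when $q_x = e^{-\epsilon}p_x$). The fix is to symmetrize to the Jeffreys divergence $D(p\|q)+D(q\|p)$ — equivalently, to discard a nonpositive cross term $\sum_x q_x\log(p_x/q_x)\le 0$ via Jensen — so that the same-sign structure of $(p_x-q_x)\log(p_x/q_x)$ can be exploited against the pointwise likelihood-ratio bound. The only other thing worth flagging is that the two-sided likelihood-ratio bound uses symmetry of $\sim$, which is built into the meaning of "neighboring" states.
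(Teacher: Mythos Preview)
Your proof is correct and follows essentially the same route as the paper's: both symmetrize to the Jeffreys divergence $D(p\|q)+D(q\|p)$, use the two-sided likelihood-ratio bound from $\epsilon$-DP to obtain $\epsilon\|p-q\|_1$, apply data processing of the trace distance, and then invoke Lemma~\ref{Lem:gamma-trace-distance} for the final inequality. Your treatment of the same-sign structure in the Jeffreys step is in fact more careful than the paper's shorthand, which writes $\sum_x(p_x-q_x)\epsilon$ where $\sum_x|p_x-q_x|\epsilon$ is what is actually needed.
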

\begin{proof}
Let $\{M_x\}$ be the POVM that achieves the maximum in $D_M(\cA(\rho) \| \cA(\sigma))$ and let $p_x=\tr M_x\cA(\rho)$ and $q_x=\tr M_x\cA(\sigma)$. Then, 
\begin{align*}
    D_M(\cA(\rho) \| \cA(\sigma)) &= \sum_x p_x \log\frac{p_x}{q_x} \\
    &\leq \sum_x p_x \log\frac{p_x}{q_x} +  \sum_x q_x \log\frac{q_x}{p_x} \\
    &= \sum_x (p_x-q_x) \log\frac{p_x}{q_x} \\
    &\leq \sum_x (p_x-q_x) \epsilon \\
    &= \epsilon\sum_x \tr M_x(\cA(\rho)-\cA(\sigma)) \\
    &\leq 2\epsilon E_1(\cA(\rho) \| \cA(\sigma)) \\
    &\leq 2\epsilon\,(1-e^{-\epsilon})\,,
\end{align*}
where the first equality is by definition and the second and third obvious. The first inequality follows because the relative entropy is non-negative, the second by $\epsilon$-DP, the third by a property of the trace distance and the final one by Lemma~\ref{Lem:gamma-trace-distance} and again $\epsilon$-DP. 
\end{proof}
This lemma gives a quantum version of several similar relations for classical differential privacy. We remark that it might be possible to find tighter bounds along the lines of the classical~\cite[Theorem 1]{duchi2013local}. We leave investigating non-measured quantities for future work, however remark here that because 
\begin{align*}
    D(\cA(\rho) \| \cA(\sigma)) \leq D_{\max}(\cA(\rho) \| \cA(\sigma)), 
\end{align*}
$\epsilon$-differential privacy always also implies small relative entropy for neighbouring input states and refer to Section~\ref{Sec:Renyi} for relaxations involving \Renyi relative entropies.

\section{Applications to specific noise models}\label{Sec:App}

In the remainder of this section we will discuss particular examples and implications of the above results, including global and local depolarizing noise as well as arbitrary local qubit noise.

\subsection{Global depolarizing noise}

\begin{figure*}
    \centering
    \begin{minipage}{0.47\textwidth}
        \centering
        \includegraphics[width=1\textwidth]{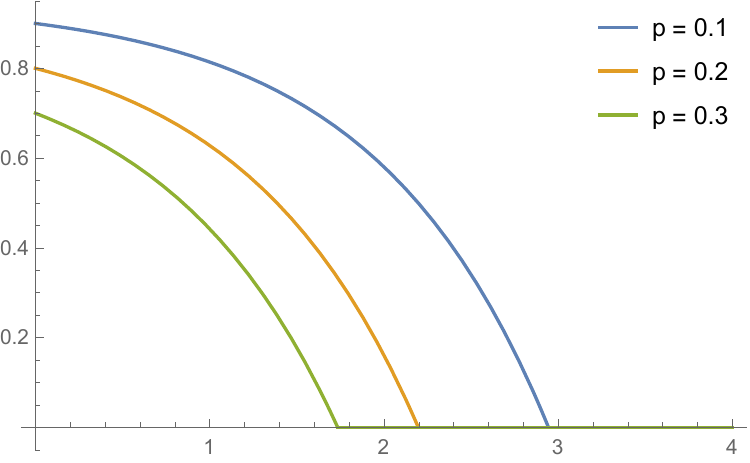} %
    \end{minipage}\hfill
    \begin{minipage}{0.47\textwidth}
        \centering
        \includegraphics[width=1\textwidth]{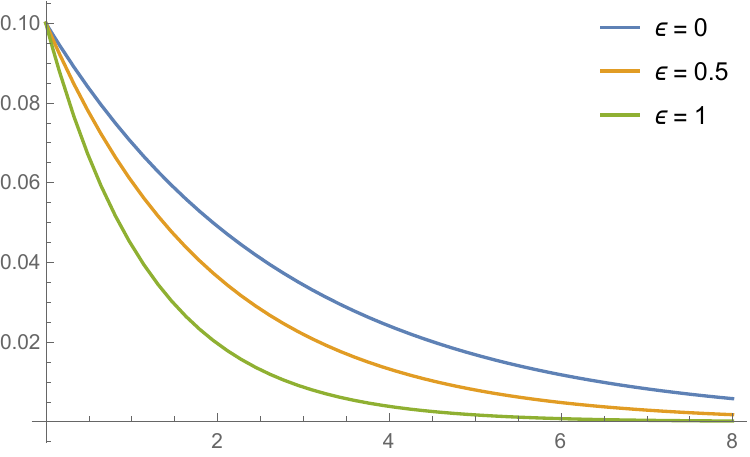} %
    \end{minipage}
\caption{\label{Fig:Depolarizing} Left: the contraction coefficient $\eta_{e^\epsilon}(\cD_p)$ for $D=2$ and different values of $p$ plotted against $\epsilon$. Right: the exponential decay given in Equation~\eqref{Eq:delta-decay} for a depolarizing channel with $p=0.3$, $\kappa=0.1$ and different values of $\epsilon$ plotted against $n$, the number of layers with the given depolarizing noise.}
\end{figure*}

A typical noise channel is the depolarizing channel defined as
\begin{align}
    \cD_p(\rho)=(1-p)\rho +p\frac{\Id}{D},
\end{align}
with $0\leq p\leq 1$ and $D$ the dimension of the system. In this case we can easily bound the contraction coefficient.
\begin{lemma}\label{Lem:HS-Dp-bounds}
For $0\leq p\leq 1$ and $\gamma\geq1$ we have
\begin{align}
        &E_\gamma(\cD_p(\rho)\|\cD_p(\sigma)) \\
        &\leq \max\{0,(1-\gamma)\frac{p}{D} + (1-p)E_\gamma(\rho\|\sigma)\} \label{Eq:Dp-Eg-bound}
\end{align}
and 
\begin{align}
    \eta_\gamma(\cD_p) = \max\{0,(1-\gamma)\frac{p}{D} + (1-p)\}. \label{Eq:Dp-cc}
\end{align}
\end{lemma}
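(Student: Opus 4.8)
The plan is to start from linearity of the depolarizing channel, which gives
\[
\cD_p(\rho)-\gamma\,\cD_p(\sigma)=(1-p)(\rho-\gamma\sigma)+\frac{(1-\gamma)p}{D}\,\Id .
\]
Because $\gamma\ge 1$, the identity term is a \emph{nonpositive} multiple of $\Id$, so the whole problem reduces to controlling the positive part of $\rho-\gamma\sigma$ after a downward shift. This single identity will power both statements.

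For the inequality \eqref{Eq:Dp-Eg-bound} I would use Lemma~\ref{Lem:Dg-meas}: writing $Y:=\cD_p(\rho)-\gamma\,\cD_p(\sigma)$, the value $E_\gamma(\cD_p(\rho)\|\cD_p(\sigma))=\max_{0\le\Lambda\le\Id}\tr\{\Lambda Y\}$ is attained at $\Lambda=\Pi_{Y^+}$, the projector onto the support of $Y^+$ (this is exactly the optimizer exhibited in the proof of Lemma~\ref{Lem:Dg-meas}). If $Y^+=0$ the divergence is $0$; otherwise $\Pi_{Y^+}$ has rank at least $1$, hence $\tr\Pi_{Y^+}\ge 1$, and then
\[
E_\gamma(\cD_p(\rho)\|\cD_p(\sigma))=(1-p)\,\tr\{\Pi_{Y^+}(\rho-\gamma\sigma)\}+\frac{(1-\gamma)p}{D}\,\tr\Pi_{Y^+}\le (1-p)\,E_\gamma(\rho\|\sigma)+\frac{(1-\gamma)p}{D},
\]
where the first term is bounded by $E_\gamma(\rho\|\sigma)$ using Lemma~\ref{Lem:Dg-meas} again (since $0\le\Pi_{Y^+}\le\Id$), and the second uses $\tfrac{(1-\gamma)p}{D}\le 0$ together with $\tr\Pi_{Y^+}\ge 1$. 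Combined with the trivial bound $E_\gamma\ge 0$ this is precisely \eqref{Eq:Dp-Eg-bound}.

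For the contraction coefficient \eqref{Eq:Dp-cc} I would invoke Theorem~\ref{Thm:ConCoef-easy} to restrict the supremum to orthogonal pure states $|\varphi\rangle\perp|\psi\rangle$. For such states $\varphi-\gamma\psi$ has eigenvalues $1$, $-\gamma$, and $0$ with multiplicity $D-2$, so by the displayed identity $Y=\cD_p(\varphi)-\gamma\,\cD_p(\psi)$ has eigenvalues $(1-p)+\tfrac{(1-\gamma)p}{D}$, $-(1-p)\gamma+\tfrac{(1-\gamma)p}{D}$, and $\tfrac{(1-\gamma)p}{D}$ (multiplicity $D-2$); since $\gamma\ge 1$ only the first of these can be positive. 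Hence $E_\gamma(\cD_p(\varphi)\|\cD_p(\psi))=\max\{0,(1-p)+\tfrac{(1-\gamma)p}{D}\}$ for \emph{every} orthogonal pair, and taking the supremum over such pairs yields \eqref{Eq:Dp-cc} (for $D\ge 2$; $D=1$ is vacuous). Equivalently, the upper bound also follows immediately from \eqref{Eq:Dp-Eg-bound} and the fact, used in the proof of Theorem~\ref{Thm:ConCoef-easy}, that $E_\gamma(\varphi\|\psi)=1$ for orthogonal pure states, while the matching lower bound is the explicit eigenvalue computation just described.

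The only mildly delicate point is the handling of the outer $\max\{0,\cdot\}$: one must use $\gamma\ge 1$ so that the identity shift is nonpositive (which is what makes the positive part genuinely vanish once the affine expression in $E_\gamma(\rho\|\sigma)$ becomes nonpositive), and the rank-at-least-one observation for $\Pi_{Y^+}$ is what makes the shift contribute exactly $\tfrac{(1-\gamma)p}{D}$ rather than a spurious multiple of it. Everything else is routine bookkeeping.
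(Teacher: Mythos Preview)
Your proof is correct and follows essentially the same route as the paper: the same linear decomposition of $\cD_p(\rho)-\gamma\cD_p(\sigma)$, the same projector argument with the key observation $\tr\Pi_{Y^+}\ge 1$ when the divergence is nonzero, and the same verification via orthogonal pure states for the contraction coefficient. The only cosmetic difference is that you spell out the eigenvalue computation for orthogonal pure inputs and explicitly invoke Theorem~\ref{Thm:ConCoef-easy}, whereas the paper derives the upper bound on $\eta_\gamma(\cD_p)$ directly from the second inequality in the chain and simply remarks that orthogonal pure states achieve it.
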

\begin{proof}
Note that
\begin{align*}
    &E_\gamma(\cD_p(\rho)\|\cD_p(\sigma)) \\
    &= \tr((1-\gamma)p\frac{\Id}{D} + (1-p)(\rho-\gamma\sigma))^+ \\
    &= \tr P^+((1-\gamma)p\frac{\Id}{D} + (1-p)(\rho-\gamma\sigma)), 
\end{align*}
where $P^+$ is the projector onto the positive subspace of $(1-\gamma)p\frac{\Id}{D} + (1-p)(\rho-\gamma\sigma)$. Observe that 
\begin{align*}
    E_\gamma(\cD_p(\rho)\|\cD_p(\sigma))>0 \quad\Rightarrow\quad \tr P^+\geq 1.
\end{align*}
Considering this case we get
\begin{align*}
    &E_\gamma(\cD_p(\rho)\|\cD_p(\sigma)) \\
    &= (1-\gamma)\frac{p}{D}\tr P^+ + (1-p)(\tr P^+(\rho-\gamma\sigma)) \\
    &\leq (1-\gamma)\frac{p}{D} + (1-p)E_\gamma(\rho\|\sigma) \\
    &\leq (1-\gamma)\frac{p}{D} + (1-p). 
\end{align*}
Note that for sufficiently large $\gamma$ the upper bound could become negative, but one can easily check that in this case $E_\gamma(\cD_p(\rho)\|\cD_p(\sigma))=0$ implying that we are in the other case. 
It remains to show that this is optimal but this is easy to see by picking any two orthogonal pure states.
\end{proof}

In Figure~\ref{Fig:Depolarizing} we show an example of the contraction coefficient for the depolarizing channel and examples for the exponential decay of $\delta$ in Equation~\eqref{Eq:delta-decay} for different values of $\epsilon$ using the contraction coefficient in Equation~\ref{Eq:Dp-cc}. It becomes clear that while iterative algorithms with depolarizing layers lead to strong privacy, using the contraction coefficient bound, there is only little space before the output states also become mostly useless (the case $\epsilon=0$). 

However, if we know more about the structure of the channel, we can give significantly better bounds. We will do this here using Equation~(\ref{Eq:Dp-Eg-bound}). 
\begin{lemma}\label{Lem:Dp-better-bounds}
Say, $\rho\sim\sigma$ if $\frac12 \|\rho-\sigma\|_1 \leq\kappa$, then $\cD_p$ is $(\epsilon, \delta)$-differentially private with
\begin{align}
    \delta=\max\{0,(1-e^{\epsilon})\frac{p}{D} + (1-p)\kappa \}. 
\end{align}
For an algorithm $\cA$ of the form in Equation~\eqref{Eq:LayerA} with all $\cN_i=\cD_{p_i}$ and $\rho\sim\sigma$ if $\frac12 \|\rho-\sigma\|_1 \leq\kappa$, then $\cA$ is $(\epsilon, \delta)$-differentially private with
\begin{align}
    \delta=\max\{0,(1-e^{\epsilon})\frac{p_\star}{D} + (1-p_\star)\kappa \}, \label{Eq:Dp-better-bound}
\end{align}
where $p_\star=1-\prod_i(1-p_i)$.
\end{lemma}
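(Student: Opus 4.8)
For the single-channel claim, the plan is to read $\delta$ directly off the divergence bound in Equation~\eqref{Eq:Dp-Eg-bound}. By Lemma~\ref{Lem:DP-divergence} it is equivalent to show $\sup_{\rho\sim\sigma}E_{e^\epsilon}(\cD_p(\rho)\|\cD_p(\sigma))\le\delta$. Taking $\gamma=e^\epsilon$ in Equation~\eqref{Eq:Dp-Eg-bound}, bounding $E_{e^\epsilon}(\rho\|\sigma)\le\tfrac12\|\rho-\sigma\|_1\le\kappa$ via Lemma~\ref{Lem:gamma-trace-distance} together with the neighbouring relation, and using that $x\mapsto\max\{0,(1-e^\epsilon)\tfrac pD+(1-p)x\}$ is non-decreasing (since $p\le1$), one gets $E_{e^\epsilon}(\cD_p(\rho)\|\cD_p(\sigma))\le\max\{0,(1-e^\epsilon)\tfrac pD+(1-p)\kappa\}$ for every pair of neighbours, which is the asserted $\delta$.

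For the layered algorithm, the guiding intuition is that depolarizing channels compose to a depolarizing channel, $\cD_q\circ\cD_{q'}=\cD_{1-(1-q)(1-q')}$, so $n$ noise layers ``are'' $\cD_{p_\star}$ with $p_\star=1-\prod_i(1-p_i)$, and the interleaved computational channels $\cC_i$ can only help by data processing; hence one expects the same $\delta$ as for the single channel $\cD_{p_\star}$. To make this rigorous without commuting noise past the $\cC_i$ (impossible in general), I would track $E_{e^\epsilon}$ through the circuit. Writing $\rho^{(0)}=\rho$, $\rho^{(k)}=\cD_{p_k}(\cC_k(\rho^{(k-1)}))$ (and likewise $\sigma^{(k)}$), so that $\cA(\rho)=\rho^{(n)}$, the combination of Equation~\eqref{Eq:Dp-Eg-bound} applied to the $k$-th noise layer with data processing of $E_{e^\epsilon}$ across $\cC_k$ yields
\begin{align*}
E_{e^\epsilon}(\rho^{(k)}\|\sigma^{(k)})\ \le\ \max\Big\{0,\ (1-e^\epsilon)\tfrac{p_k}{D}+(1-p_k)\,E_{e^\epsilon}(\rho^{(k-1)}\|\sigma^{(k-1)})\Big\},
\end{align*}
with $E_{e^\epsilon}(\rho^{(0)}\|\sigma^{(0)})\le\kappa$ again by Lemma~\ref{Lem:gamma-trace-distance}. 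By Lemma~\ref{Lem:DP-divergence} it then remains only to unroll this recursion.

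Unrolling it is where care is needed. Setting $c:=(1-e^\epsilon)/D\le0$, the affine map in the recursion can be written $x\mapsto c+(1-p_k)(x-c)$, so $c$ is a common fixed point and composing these maps multiplies $(x-c)$ by $\prod_i(1-p_i)$. I would prove by induction on $k$ that
\begin{align*}
E_{e^\epsilon}(\rho^{(k)}\|\sigma^{(k)})\ \le\ \max\Big\{0,\ c\,p_\star^{(k)}+(1-p_\star^{(k)})\kappa\Big\},\qquad p_\star^{(k)}:=1-\prod_{i\le k}(1-p_i),
\end{align*}
the base case being $E_{e^\epsilon}(\rho\|\sigma)\le\kappa$. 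In the inductive step one uses monotonicity of the affine map to replace $E_{e^\epsilon}(\rho^{(k-1)}\|\sigma^{(k-1)})$ by its inductive upper bound; if that bound lies on the positive branch, the fixed-point form telescopes one more factor $(1-p_k)$, turning $p_\star^{(k-1)}$ into $p_\star^{(k)}$; if it equals $0$, the next value is at most $\max\{0,(1-e^\epsilon)\tfrac{p_k}{D}\}=0$, which is below the (non-negative) claimed bound. Taking $k=n$ gives $p_\star^{(n)}=p_\star$, and the supremum over neighbours $\rho\sim\sigma$ finishes the argument through Lemma~\ref{Lem:DP-divergence}. The only real subtlety is the bookkeeping around the $\max\{0,\cdot\}$ clipping, combined with spotting the fixed-point rewriting that makes the telescoping transparent; there is no deeper obstacle.
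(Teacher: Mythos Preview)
Your proof is correct and follows essentially the same approach as the paper: both argue by induction through the layers, applying Equation~\eqref{Eq:Dp-Eg-bound} at each depolarizing step and telescoping the affine recursion so that the accumulated parameter becomes $p_\star^{(k)}=1-\prod_{i\le k}(1-p_i)$. Your write-up is in fact more careful than the paper's terse version---you explicitly invoke Lemma~\ref{Lem:DP-divergence}, use data processing to absorb the computational layers $\cC_k$, and handle the $\max\{0,\cdot\}$ clipping by a clean case split---but the underlying argument is the same.
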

\begin{proof}
The first statement follows directly from Equation~(\ref{Eq:Dp-Eg-bound}). The second statement can be proven by induction. Recall that $n$ is the number of steps in the algorithm, i.e. the number of depolarizing layers. For $n=1$ the statement follows from~(\ref{Eq:Dp-Eg-bound}). Now, assuming the results holds for $n-1$ layers, for the $n$-th layer we have
\begin{align*}
    &(1-\gamma)\frac{p_\star}{D} + (1-p_\star)E_\gamma(\cD_p(\rho)\|\cD_p(\sigma)) \\
    &\leq (1-\gamma)\frac{p_\star}{D} \\
    &\quad+ (1-p_\star)\left((1-\gamma)\frac{p_n}{D} + (1-p_n)E_\gamma(\rho\|\sigma)\right) \\
    &= (1-\gamma)\frac{1 - (1-p_\star)(1-p_n)}{D} \\
    &\quad+ (1-p_\star)(1-p_n)E_\gamma(\rho\|\sigma) 
\end{align*}
from which the claim follows directly. 
\end{proof}

\begin{figure*}
    \centering
    \begin{minipage}{0.47\textwidth}
        \centering
        \includegraphics[width=1\textwidth]{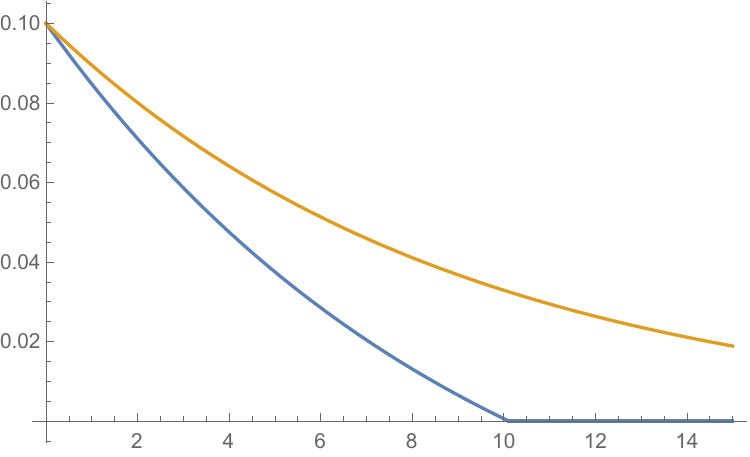} 
    \end{minipage}
    \hfill
    \begin{minipage}{0.47\textwidth}
        \centering
        \includegraphics[width=1\textwidth]{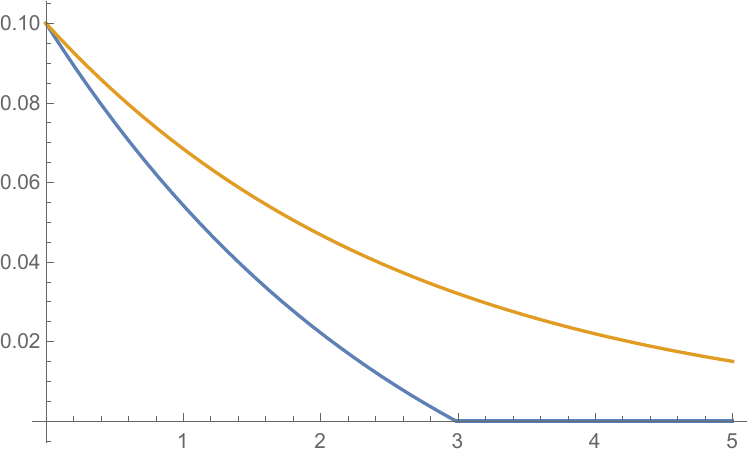} %
    \end{minipage}
\caption{\label{Fig:Depolarizing-bounds} Comparison of bounds on $\delta$ for $\epsilon=0.1$, $\kappa=0.1$ and $D=2$ plotted against $n$. The top function is the contraction coefficient bound in Equation~\eqref{Eq:delta-decay} and below the improved bound in Equation~\eqref{Eq:Dp-better-bound}. Left: $p=0.1$. Right: $p=0.3$.}
\end{figure*} 

The above bound is compared to the direct contraction coefficient bound from Equation~\eqref{Eq:delta-decay} in Figure~\ref{Fig:Depolarizing-bounds}. 
As can be seen, the improvement is significant. While contraction coefficients give an exponential decay of $\delta$ with $n$, the bound in Equation~\eqref{Eq:Dp-better-bound} shows that $\delta=0$ is sufficient already for $n\geq 3$ if $p=0.3$ and $n\geq 10$ if $p=0.1$. We remark that for $\epsilon=0$ the two bounds give the same result, i.e. in case of the trace distance Lemma~\ref{Lem:Dp-better-bounds} does not lead to an improvement over the contraction bound. In fact, for the trace distance this simple bound is tight in some cases, because
\begin{align}
    \| \cD_p(\rho)-\cD_p(\sigma)\|_1 &= \| (1-p)(\rho-\sigma)\|_1 \\
    &= (1-p) \|\rho-\sigma\|_1\,,  \label{Eq:Dep-Equality}
\end{align}
implying that the decrease in trace distance is always exactly $(1-p)$ for a layer of global depolarizing noise. This easily extends to algorithms $\cA$ of the form in Equation~\eqref{Eq:LayerA} when all computational layers $\mathcal{C}_i$ are unitary, i.e. don't introduce any noise themselves. 
These observations lead to a provable separation in $n$ between good privacy and useless algorithm outputs. 

Alternatively we can also state a similar result determining a bound on $\epsilon$ as a simple corollary. 
\begin{corollary}\label{Cor:Dp-eps}
Say, $\rho\sim\sigma$ if $\frac12 \|\rho-\sigma\|_1 \leq\kappa$. For a fixed $\delta\geq 0$, $\cD_p$ is $(\epsilon, \delta)$-differentially private with
\begin{align}
    \epsilon\geq\max\{0,\log\left(\frac{D}{p}((1-p)\kappa-\delta)+1\right)\}.
\end{align}
For an algorithm $\cA$ of the form in Equation~\eqref{Eq:LayerA}, for a fixed $\delta\geq 0$, $\cA$ is $(\epsilon, \delta)$-differentially private with
\begin{align}
    \epsilon\geq\max\{0,\log\left(\frac{D}{p_\star}((1-p_\star)\kappa-\delta)+1\right)\}. \label{Eq:eps-low-dp-gl}
\end{align}
where $p_\star=1-\prod_i(1-p_i)$.
\end{corollary}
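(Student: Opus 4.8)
The plan is to invert the explicit $\delta$-bound from Lemma~\ref{Lem:Dp-better-bounds} and solve for $\epsilon$. For the single-channel statement, Lemma~\ref{Lem:Dp-better-bounds} guarantees that $\cD_p$ is $(\epsilon,\delta')$-differentially private with $\delta' = \max\{0,(1-e^\epsilon)\tfrac{p}{D} + (1-p)\kappa\}$; since $(\epsilon,\delta')$-differential privacy trivially implies $(\epsilon,\delta)$-differential privacy for any $\delta \geq \delta'$, it suffices to exhibit all $\epsilon \geq 0$ for which $\delta' \leq \delta$.

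First I would rewrite $\delta' \leq \delta$ by dropping the outer $\max$ (legitimate because $\delta \geq 0$) as the single inequality $(1-e^\epsilon)\tfrac{p}{D} + (1-p)\kappa \leq \delta$. Rearranging, this is equivalent to $e^\epsilon \geq 1 + \tfrac{D}{p}\big((1-p)\kappa - \delta\big)$, and taking logarithms gives $\epsilon \geq \log\!\big(\tfrac{D}{p}((1-p)\kappa - \delta) + 1\big)$. Intersecting with the constraint $\epsilon \geq 0$ yields exactly the claimed threshold $\epsilon \geq \max\{0,\log(\tfrac{D}{p}((1-p)\kappa-\delta)+1)\}$.

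For the layered algorithm $\cA$ of the form in Equation~\eqref{Eq:LayerA}, I would repeat the identical manipulation starting instead from the second part of Lemma~\ref{Lem:Dp-better-bounds}, which is the same bound with $p$ replaced everywhere by $p_\star = 1 - \prod_i(1-p_i)$; this reproduces Equation~\eqref{Eq:eps-low-dp-gl} verbatim, so no new ideas are needed.

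There is essentially no difficulty here — the corollary is pure algebra. The only point deserving a word of care is the behaviour of the logarithm when its argument $\tfrac{D}{p}((1-p)\kappa-\delta)+1$ fails to be positive: this occurs precisely when $\delta \geq (1-p)\kappa + p/D$, hence in particular when $\delta > (1-p)\kappa \geq E_{e^\epsilon}(\cD_p(\rho)\|\cD_p(\sigma))$ (the latter by Lemma~\ref{Lem:gamma-trace-distance} together with Equation~\eqref{Eq:Dep-Equality}), so that $\epsilon=0$ already suffices and the $\max$ with $0$ in the statement correctly reports the answer; one simply reads $\log$ of a nonpositive quantity as $-\infty$.
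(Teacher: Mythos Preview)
Your proposal is correct and matches the paper's approach: the paper states this result as an immediate corollary of Lemma~\ref{Lem:Dp-better-bounds} without giving a separate proof, and your inversion of that lemma's $\delta$-bound is exactly the intended derivation. Your extra remark on the degenerate case where the logarithm's argument is nonpositive is a nice clarification that the paper leaves implicit.
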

This generalizes some results in the literature, namely, for $\delta=0$ the first statement reduces to~\cite[Theorem 3]{zhou2017differential} and the second to~\cite[Lemma 2]{du2021quantum}. We present examples of the above bound in Figure~\ref{Fig:eps-bounds-global-dep}, showcasing the trade-off between $\epsilon$ and $\delta$ and the decay of $\epsilon$ with growing $n$. 

\begin{figure*}
    \centering
    \begin{minipage}{0.47\textwidth}
        \centering
        \includegraphics[width=1\textwidth]{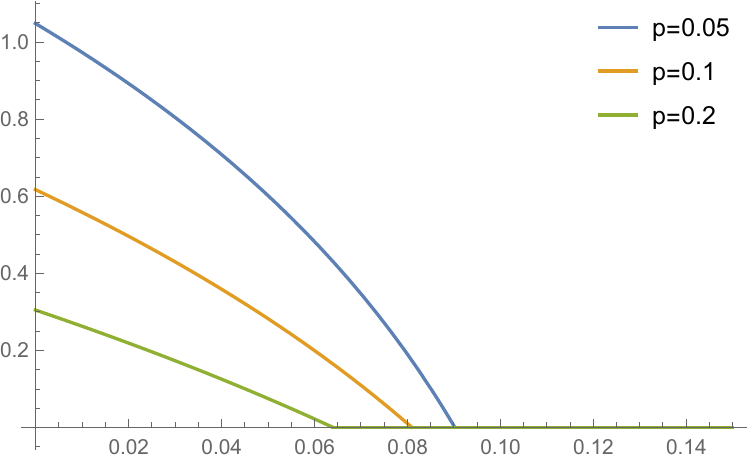} %
    \end{minipage}\hfill
    \begin{minipage}{0.47\textwidth}
        \centering
        \includegraphics[width=1\textwidth]{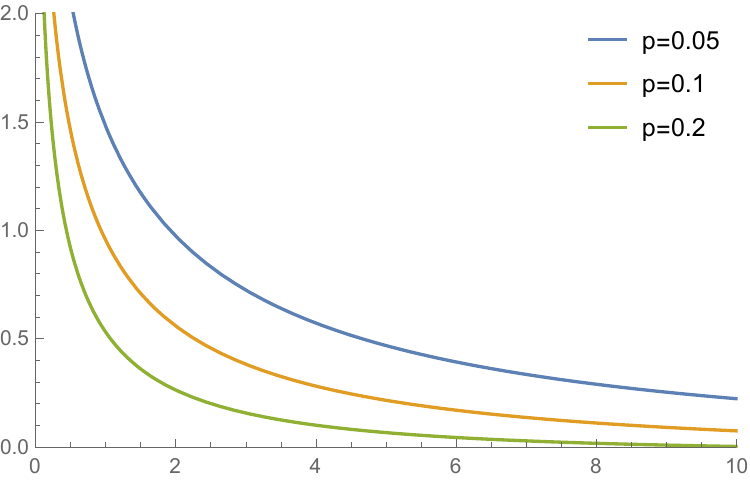} %
    \end{minipage}
\caption{\label{Fig:eps-bounds-global-dep} Bound on $\epsilon$ in Equation~\eqref{Eq:eps-low-dp-gl} for $D=2$ and $\kappa=0.1$ for different values of $p_i=p$. Left: plotted for $n=2$ against $\delta$. Right: plotted for $\delta=0.01$ against $n$. }
\end{figure*}

\subsection{Local depolarizing noise}

Previously we have seen how quantum algorithms are affected by global depolarizing noise. However, in a quantum computing device we would rather expect each qubit to be affected by local noise. In this section we will discus depolarizing noise of the form $\cD_p^{\otimes k}$ where $k$ is the number of qubits a quantum algorithm acts on at any given layer. To investigate this setting we first need an equivalent of Lemma~\ref{Lem:HS-Dp-bounds} for local depolarizing noise.
\begin{lemma}\label{Lem:HS-Dp-local-bounds}
For $0\leq p\leq 1$ and $\gamma\geq1$ we have
\begin{align}
        &E_\gamma(\cD_p^{\otimes k}(\rho)\|\cD_p^{\otimes k}(\sigma)) \\
        &\leq \max\{0,(1-\gamma)\frac{p^k}{D^k} + (1-p^k)E_\gamma(\rho\|\sigma)\} \label{Eq:Dp-Eg-bound-local}
\end{align}
and 
\begin{align}
    \eta_\gamma(\cD_p^{\otimes k})\leq \max\{0,(1-\gamma)\frac{p^k}{D^k} + (1-p^k)\}. \label{Eq:Dp-cc-local}
\end{align}
\end{lemma}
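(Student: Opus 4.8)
The plan is to reduce everything to the global depolarizing case already settled in Lemma~\ref{Lem:HS-Dp-bounds}. The key structural fact is that $\cD_p^{\otimes k}$ factors through a \emph{single} global depolarizing channel on the joint $D^k$-dimensional system with parameter $p^k$. To see this, write $\cD_p = (1-p)\,\id + p\,\frac{\Id}{D}\tr(\cdot)$ and expand the tensor product $\cD_p^{\otimes k} = \bigotimes_{i=1}^k\big((1-p)\,\id_i + p\,\tfrac{\Id_i}{D}\tr_i\big)$ to get $\cD_p^{\otimes k} = \sum_{S\subseteq[k]} p^{|S|}(1-p)^{k-|S|}\,\mathcal{R}_S$, where $\mathcal{R}_S$ traces out the systems in $S$ and replaces them by the maximally mixed state while acting as the identity on the rest. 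Isolating the term $S = [k]$, for which $\mathcal{R}_{[k]}(\cdot) = \frac{\Id}{D^k}\tr(\cdot)$, yields $\cD_p^{\otimes k} = p^k\,\frac{\Id}{D^k}\tr(\cdot) + (1-p^k)\,\mathcal{E}$ with $\mathcal{E} := \frac{1}{1-p^k}\sum_{S\subsetneq[k]} p^{|S|}(1-p)^{k-|S|}\,\mathcal{R}_S$. Since $\mathcal{E}$ is a convex combination of the maps $\mathcal{R}_S$, each of which is unital and $\operatorname{CPTP}$, it is itself a unital quantum channel; in particular $\mathcal{E}(\frac{\Id}{D^k}) = \frac{\Id}{D^k}$, so that $\cD_p^{\otimes k} = \mathcal{E}\circ\cD_{p^k}$, where $\cD_{p^k}$ is the global depolarizing channel with parameter $p^k$ on the $D^k$-dimensional system.

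Given this factorization, both claims follow in one line each. For~\eqref{Eq:Dp-Eg-bound-local}, data-processing for $E_\gamma$ under $\mathcal{E}$ gives $E_\gamma(\cD_p^{\otimes k}(\rho)\|\cD_p^{\otimes k}(\sigma)) \le E_\gamma(\cD_{p^k}(\rho)\|\cD_{p^k}(\sigma))$, and the right-hand side is bounded by $\max\{0,(1-\gamma)\frac{p^k}{D^k} + (1-p^k)E_\gamma(\rho\|\sigma)\}$ by Equation~\eqref{Eq:Dp-Eg-bound} of Lemma~\ref{Lem:HS-Dp-bounds}, applied with dimension $D^k$ and parameter $p^k$. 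For~\eqref{Eq:Dp-cc-local}, I would use that the contraction coefficient is non-increasing under post-composition, $\eta_\gamma(\mathcal{E}\circ\cD_{p^k}) \le \eta_\gamma(\cD_{p^k})$ (immediate from data-processing for $\mathcal{E}$ inside the supremum defining $\eta_\gamma$), combined with the exact value $\eta_\gamma(\cD_{p^k}) = \max\{0,(1-\gamma)\frac{p^k}{D^k} + (1-p^k)\}$ from Equation~\eqref{Eq:Dp-cc}; alternatively, the bound drops out of~\eqref{Eq:Dp-Eg-bound-local} via Theorem~\ref{Thm:ConCoef-easy} together with $E_\gamma(|\varphi\rangle\langle\varphi|\,\|\,|\psi\rangle\langle\psi|)=1$ for orthogonal pure states.

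The only non-routine step is spotting the factorization $\cD_p^{\otimes k} = \mathcal{E}\circ\cD_{p^k}$; checking that $\mathcal{E}$ is a genuine unital channel is immediate since it is a convex combination of partial-trace-and-replace maps. It is worth noting that, unlike the global case, one should expect only an inequality for $\eta_\gamma(\cD_p^{\otimes k})$: the reduction produces the extra channel $\mathcal{E}$, which is not the identity, so the orthogonal-pure-state input used to obtain equality in Lemma~\ref{Lem:HS-Dp-bounds} need not be matched here --- hence the ``$\le$'' in the statement. If a self-contained argument is preferred, one can instead mimic the proof of Lemma~\ref{Lem:HS-Dp-bounds} verbatim on the representation $\cD_p^{\otimes k} = p^k\frac{\Id}{D^k}\tr(\cdot) + (1-p^k)\mathcal{E}$: write $E_\gamma(\cD_p^{\otimes k}(\rho)\|\cD_p^{\otimes k}(\sigma)) = \tr\big((1-\gamma)p^k\frac{\Id}{D^k} + (1-p^k)(\mathcal{E}(\rho)-\gamma\mathcal{E}(\sigma))\big)^+$, note that whenever this is positive its positive part has rank at least one so $\tr P^+ \ge 1$, and bound $(1-\gamma)\frac{p^k}{D^k}\tr P^+ \le (1-\gamma)\frac{p^k}{D^k}$ and $\tr P^+(\mathcal{E}(\rho)-\gamma\mathcal{E}(\sigma)) \le E_\gamma(\mathcal{E}(\rho)\|\mathcal{E}(\sigma)) \le E_\gamma(\rho\|\sigma)$ using Lemma~\ref{Lem:Dg-meas} and data-processing.
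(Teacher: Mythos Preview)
Your proposal is correct. The alternative argument you sketch at the end is essentially the paper's proof: the paper writes $\cD_p^{\otimes k}(\rho) = p^k\,\Id/D^k + (1-p^k)\cM(\rho)$ for some CPTP map $\cM$ and then repeats the projector argument of Lemma~\ref{Lem:HS-Dp-bounds} with an extra data-processing step $E_\gamma(\cM(\rho)\|\cM(\sigma))\le E_\gamma(\rho\|\sigma)$. Your \emph{main} route is a genuine (and nicer) shortcut the paper does not take: you observe that this $\cM$ is not just CPTP but \emph{unital}, whence $\cD_p^{\otimes k}=\cM\circ\cD_{p^k}$, so a single application of data-processing under $\cM$ reduces both claims to the global-depolarizing results already proven in Lemma~\ref{Lem:HS-Dp-bounds} and Equation~\eqref{Eq:Dp-cc}. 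The gain is conceptual clarity---it makes transparent why the local bound coincides with the global one at parameter $p^k$---at the mild cost of checking unitality of $\cM$, which you do explicitly via the convex combination of partial-trace-and-replace maps. The paper's direct approach needs only that $\cM$ is CPTP, but otherwise the two arguments are equivalent in strength.
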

\begin{proof}
The proof is similar to Lemma~\ref{Lem:HS-Dp-bounds} but requires one more main ingredient. Note that we can always write local depolarizing noise as
\begin{align*}
    \cD_p^{\otimes k}(\rho) = p^k \frac{\Id^{\otimes k}}{D^k} + (1-p^k) \cM(\rho), 
\end{align*}
where $\cM$ is some CPTP map. This can be checked by direct calculation. With this we have
\begin{align*}
    &E_\gamma(\cD_p(\rho)\|\cD_p(\sigma)) \\
    &= \tr((1-\gamma)p^k\frac{\Id^{\otimes k}}{D^k} + (1-p^k)\cM((\rho-\gamma\sigma)))^+ \\
    &= \tr P^+((1-\gamma)p^k\frac{\Id^{\otimes k}}{D^k} + (1-p^k)\cM((\rho-\gamma\sigma))), 
\end{align*}
where $P^+$ is the projector onto the positive subspace of $((1-\gamma)p^k\frac{\Id^{\otimes k}}{D^k} + (1-p^k)\cM((\rho-\gamma\sigma)))$. Observe that 
\begin{align*}
    E_\gamma(\cD_p^{\otimes k}(\rho)\|\cD_p^{\otimes k}(\sigma))>0 \quad\Rightarrow\quad \tr P^+\geq 1.
\end{align*}
Considering this case we get
\begin{align*}
    &E_\gamma(\cD_p(\rho)\|\cD_p(\sigma)) \\
    &= (1-\gamma)\frac{p^k}{D^k}\tr P^+ + (1-p^k)(\tr P^+(\cM(\rho-\gamma\sigma))) \\
    &\leq (1-\gamma)\frac{p^k}{D^k} + (1-p^k)E_\gamma(\cM(\rho)\|\cM(\sigma)) \\
    &\leq (1-\gamma)\frac{p^k}{D^k} + (1-p^k)E_\gamma(\rho\|\sigma) \\
    &\leq (1-\gamma)\frac{p^k}{D^k} + (1-p^k). 
\end{align*}
Note that for sufficiently large $\gamma$ the upper bound could become negative, but one can easily check that in this case $E_\gamma(\cD_p^{\otimes k}(\rho)\|\cD_p^{\otimes k}(\sigma))=0$ implying that we are in the other case. 
\end{proof}
Note that in this case we provide only an upper bound on the contraction coefficient and determining its exact value remains open. 
Nevertheless, with the above tool at hand, we can easily generalize Lemma~\ref{Lem:Dp-better-bounds}.
\begin{lemma}\label{Lem:Dp-better-local-bounds}
Say, $\rho\sim\sigma$ if $\frac12 \|\rho-\sigma\|_1 \leq\kappa$, then $\cD_p^{\otimes k}$ is $(\epsilon, \delta)$-differentially private with
\begin{align}
    \delta=\max\{0,(1-e^{\epsilon})\frac{p^k}{D^k} + (1-p^k)\kappa \}. 
\end{align}
For an algorithm $\cA$ of the form in Equation~\eqref{Eq:LayerA} with all $\cN_i=\cD_{p}^{\otimes k}$ and $\rho\sim\sigma$ if $\frac12 \|\rho-\sigma\|_1 \leq\kappa$, then $\cA$ is $(\epsilon, \delta)$-differentially private with
\begin{align}
    \delta=\max\{0,(1-e^{\epsilon})\frac{p_\star}{D^k} + (1-p_\star)\kappa \}, \label{Eq:Dp-better-bound-local}
\end{align}
where $p_\star=1-(1-p^k)^n$.
\end{lemma}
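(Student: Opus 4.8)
The plan is to mirror the proof of Lemma~\ref{Lem:Dp-better-bounds} essentially verbatim, now fed by the local estimate in Equation~\eqref{Eq:Dp-Eg-bound-local} instead of the global one. For the single-layer claim, one simply observes that $\cD_p^{\otimes k}$ being $(\epsilon,\delta)$-differentially private with respect to the trace-distance neighbourhood is, by Lemma~\ref{Lem:DP-divergence}, equivalent to $\sup_{\rho\sim\sigma} E_{e^\epsilon}(\cD_p^{\otimes k}(\rho)\|\cD_p^{\otimes k}(\sigma)) \leq \delta$, and Equation~\eqref{Eq:Dp-Eg-bound-local} together with $E_{e^\epsilon}(\rho\|\sigma)\leq \tfrac12\|\rho-\sigma\|_1 \leq \kappa$ (from Lemma~\ref{Lem:gamma-trace-distance}) gives exactly $E_{e^\epsilon}(\cD_p^{\otimes k}(\rho)\|\cD_p^{\otimes k}(\sigma)) \leq \max\{0,(1-e^\epsilon)\tfrac{p^k}{D^k} + (1-p^k)\kappa\}$, which is the asserted $\delta$.

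For the layered algorithm $\cA = \bigcirc_i^n \cN_i\circ\cC_i$ with all $\cN_i = \cD_p^{\otimes k}$, I would run an induction on $n$, tracking the affine functional $\gamma\mapsto (1-\gamma)\tfrac{p_\star}{D^k} + (1-p_\star)E_\gamma(\rho\|\sigma)$ with $\gamma = e^\epsilon$ and $p_\star = 1-(1-p^k)^{n}$. The base case $n=1$ is the single-layer bound just established (after one application of data processing to strip the unitary-or-general layer $\cC_1$, which only decreases $E_\gamma$). For the inductive step, apply Equation~\eqref{Eq:Dp-Eg-bound-local} to the outermost noise layer and then the induction hypothesis to the remaining $n-1$ layers; the key algebraic identity is $1-(1-p_\star')(1-p^k) = 1-(1-p^k)^{n}$ where $p_\star' = 1-(1-p^k)^{n-1}$, which is exactly the recursion used in Lemma~\ref{Lem:Dp-better-bounds} with $p_n$ replaced by the uniform $p^k$. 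Composing the max-with-zero truncations needs a small observation — already flagged in the proof of Lemma~\ref{Lem:HS-Dp-local-bounds} — namely that whenever the affine upper bound would turn negative the divergence itself is zero, so iterating the $\max\{0,\cdot\}$ is harmless.

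The one genuine subtlety, rather than obstacle, is bookkeeping the two distinct ``contraction'' mechanisms: the data-processing bound $E_\gamma(\cC_i(\cdot)\|\cC_i(\cdot))\leq E_\gamma(\cdot\|\cdot)$ used for the intended layers $\cC_i$, versus the sharper affine estimate of Equation~\eqref{Eq:Dp-Eg-bound-local} used for the noise layers $\cD_p^{\otimes k}$; one must apply them in strict alternation from the outside in, exactly as in the proof of Proposition~\ref{Thm:Contraction}, so that the $(1-p^k)$ factors accumulate multiplicatively and the $(1-\gamma)p^k/D^k$ constant terms telescope into the stated $(1-\gamma)p_\star/D^k$. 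I expect the whole argument to be a few lines, so the main thing to get right is simply stating the induction cleanly; there is no deep step beyond the ingredients already proved in Lemmas~\ref{Lem:HS-Dp-local-bounds}, \ref{Lem:gamma-trace-distance} and \ref{Lem:DP-divergence}.
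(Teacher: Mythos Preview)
Your proposal is correct and is precisely the paper's approach: the paper's proof consists of the single sentence ``The proof is identical to that of Lemma~\ref{Lem:Dp-better-bounds},'' and what you have written is a careful unpacking of exactly that, replacing the global estimate~\eqref{Eq:Dp-Eg-bound} by the local one~\eqref{Eq:Dp-Eg-bound-local} and noting the recursion $1-(1-p_\star')(1-p^k)=p_\star$.
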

\begin{proof}
The proof is identical to that of Lemma~\ref{Lem:Dp-better-bounds}. 
\end{proof}

\begin{figure*}
    \centering
    \begin{minipage}{0.47\textwidth}
        \centering
        \includegraphics[width=1\textwidth]{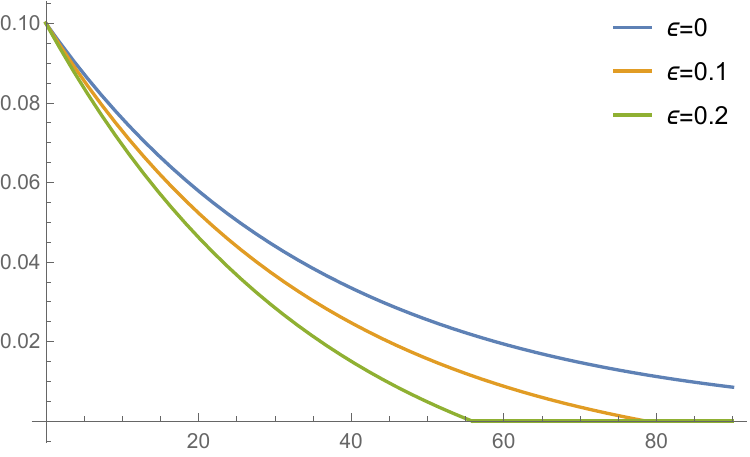} %
    \end{minipage}\hfill
    \begin{minipage}{0.47\textwidth}
        \centering
        \includegraphics[width=1\textwidth]{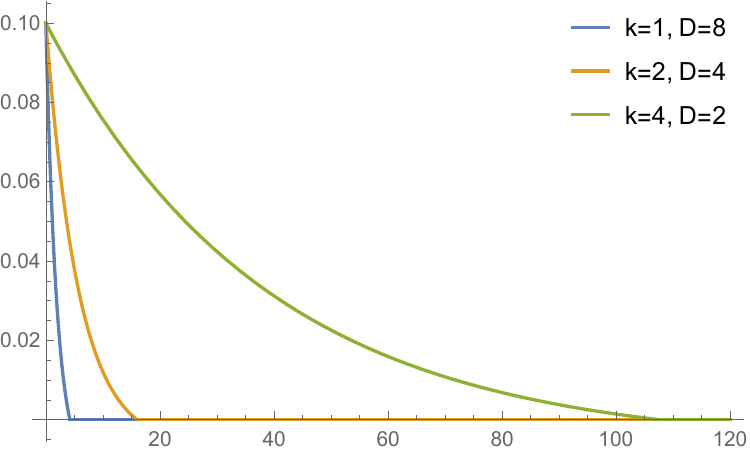} %
    \end{minipage}
\caption{\label{Fig:Depolarizing-local} Left: The bound in Equation~\eqref{Eq:Dp-better-bound-local} on $\delta$ for $p=0.3, k=3, D=2, \kappa=0.1$ plotted against $n$ for different $\epsilon$. Right: The same bound for $p=0.4, \epsilon=0.1, \kappa=0.1$ for different values of $k$ and $D$, such that the total dimension at each layer is always $8$, plotted against $n$.}
\end{figure*}

The bounds on $\delta$ are illustrated in Figure~\ref{Fig:Depolarizing-local}. On the left we see that for values $\epsilon>0$, $\delta$ is eventually going to reach $0$, while for $\epsilon=0$ i.e. the case of the trace distance, $\delta$ decays exponentially but always stays strictly positive. This is the same as for global depolarizing noise. On the right, we compare local and global depolarizing noise. For that we fix a total dimension of $8$ at each layer and consider different dimension of the depolarizing noise, from a single global depolarizing channel to $4$ local qubit depolarizing channels. From the plot it becomes evident that our bounds guarantee much faster decay of $\delta$ for global than for local noise.  

Finally, we can also adapt Corollary~\ref{Cor:Dp-eps} to local noise.
\begin{corollary}\label{Cor:Dp-eps-local}
Say, $\rho\sim\sigma$ if $\frac12 \|\rho-\sigma\|_1 \leq\kappa$. For a fixed $\delta\geq 0$, $\cD_p$ is $(\epsilon, \delta)$-differentially private with
\begin{align}
    \epsilon\geq\max\{0,\log\left(\frac{D^k}{p^k}((1-p^k)\kappa-\delta)+1\right)\}.
\end{align}
For an algorithm $\cA$ of the form in Equation~\eqref{Eq:LayerA} with all $\cN_i=\cD_{p}^k$. For a fixed $\delta\geq 0$, $\cA$ is $(\epsilon, \delta)$-differentially private with
\begin{align}
    \epsilon\geq\max\{0,\log\left(\frac{D^{k}}{p_\star}((1-p_\star)\kappa-\delta)+1\right)\}.
\end{align}
where $p_\star=1-(1-p^k)^n$.
\end{corollary}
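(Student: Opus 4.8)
The plan is to follow the same two-step recipe used throughout this subsection: first obtain the single-layer differential privacy statement for $\cD_p^{\otimes k}$, then invoke the layered bound for $\cA$, and in both cases simply rewrite the $\delta$-condition as a condition on $\epsilon$ by inverting the (affine, monotone) relation between the two. Concretely, Lemma~\ref{Lem:Dp-better-local-bounds} tells us that $\cD_p^{\otimes k}$ is $(\epsilon,\delta)$-differentially private whenever
\begin{align*}
\delta \;\geq\; (1-e^{\epsilon})\frac{p^k}{D^k} + (1-p^k)\kappa,
\end{align*}
so it suffices to solve this inequality for $\epsilon$. Since $e^\epsilon \geq 1$, the term $(1-e^\epsilon)\frac{p^k}{D^k}$ is nonpositive, and rearranging gives $e^\epsilon \geq \frac{D^k}{p^k}\big((1-p^k)\kappa - \delta\big) + 1$, i.e. $\epsilon \geq \log\!\big(\frac{D^k}{p^k}((1-p^k)\kappa-\delta)+1\big)$. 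Taking the maximum with $0$ accounts for the case where the right-hand side of the logarithm is already $\leq 1$ (equivalently, $\delta$ is large enough that even $\epsilon=0$ works), which matches the $\max\{0,\cdot\}$ appearing in Lemma~\ref{Lem:Dp-better-local-bounds}.

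For the second statement, I would repeat the identical manipulation but starting from Equation~\eqref{Eq:Dp-better-bound-local} in Lemma~\ref{Lem:Dp-better-local-bounds}, which replaces $p^k$ in the coefficient of $(1-e^\epsilon)$ by $p_\star$ (note the slightly asymmetric roles: $\frac{p_\star}{D^k}$ multiplies $(1-e^\epsilon)$ while $(1-p_\star)$ multiplies $\kappa$), with $p_\star = 1-(1-p^k)^n$. Solving $\delta \geq (1-e^\epsilon)\frac{p_\star}{D^k} + (1-p_\star)\kappa$ for $\epsilon$ yields $\epsilon \geq \log\!\big(\frac{D^k}{p_\star}((1-p_\star)\kappa - \delta)+1\big)$, and again one takes the maximum with $0$. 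This is exactly the asserted formula, so the corollary follows immediately from Lemma~\ref{Lem:Dp-better-local-bounds} by elementary algebra; no new ingredient is needed and the proof can be a one-line reference to that lemma.

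I do not anticipate a genuine obstacle here — the corollary is a direct restatement of Lemma~\ref{Lem:Dp-better-local-bounds} with the roles of $\epsilon$ and $\delta$ swapped. The only point requiring a modicum of care is the direction of the inequality when exponentiating and dividing: because $\frac{p^k}{D^k}>0$ (resp. $\frac{p_\star}{D^k}>0$), dividing through preserves the inequality direction, and because $\log$ is increasing, the final bound is a genuine lower bound on $\epsilon$ as stated. One might also note for completeness that when $(1-p^k)\kappa - \delta \leq 0$ (resp. $(1-p_\star)\kappa-\delta\le 0$) the argument of the logarithm is $\leq 1$, so the bound becomes vacuous and the $\max\{0,\cdot\}$ correctly returns $\epsilon \geq 0$, consistent with the fact that in that regime $\delta$ already exceeds the single-step (resp. $n$-step) trace-distance contraction and no privacy-noise trade-off is needed. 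Thus the proof is simply: "Invert the affine relation between $\delta$ and $e^\epsilon$ in Lemma~\ref{Lem:Dp-better-local-bounds}."
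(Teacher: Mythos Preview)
Your proposal is correct and matches the paper's own treatment: the paper gives no explicit proof for this corollary, merely introducing it as an adaptation of Corollary~\ref{Cor:Dp-eps} to local noise, which is exactly the inversion of the affine $\delta$--$e^\epsilon$ relation in Lemma~\ref{Lem:Dp-better-local-bounds} that you carry out. Your remarks on the sign considerations and the role of the $\max\{0,\cdot\}$ are appropriate elaborations of what the paper leaves implicit.
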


We end this section by providing a lower bound on the trace distance for local depolarizing noise that will demonstrate the substantial difference between decaying computational accuracy and good differential privacy. The bound takes a similar role to the observation around Equation~\eqref{Eq:Dep-Equality} in the previous section. 

\begin{proposition}
For any two states $\rho,\sigma$ and local depolarizing parameter $0\le p\le 1$ such that $p<\frac{1}{2}$, 
\begin{align*}
    \|\rho-\sigma\|_1\le \Big(\frac{1}{1-2p}\Big)^k\,\|\cD_p^{\otimes k}(\rho-\sigma)\|_1\,.
\end{align*}
Therefore, for the noisy circuit $\mathcal{A}$ where all the gates $\mathcal{C}_i$ are chosen to be unitary, and where $\cN_i=\mathcal{D}_p^{\otimes k}$, we have that
\begin{align}
    E_1(\cA(\rho)\|\cA(\sigma)) \ge \,\frac12\Big({1-2p}\Big)^{kn}\,\|\rho-\sigma\|_1\,. \label{Eq:DeltaLower}
\end{align}
\end{proposition}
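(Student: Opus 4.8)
The plan is to peel the circuit apart layer by layer and, inside each noise layer, qubit by qubit, so that the whole estimate reduces to one elementary single-qubit inequality: that $\cD_p$ tensored with an identity on any ancilla cannot shrink the trace norm by more than a factor $1-2p$. Unitary layers are then free, since the trace norm is unitarily invariant.

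\emph{The single-qubit step.} For $0\le p<1$ the channel $\cD_p$ is invertible as a linear map on qubit operators, with $\cD_p^{-1}(\tau)=\tfrac{1}{1-p}\tau-\tfrac{p}{1-p}\tr(\tau)\tfrac{\Id}{2}$ (one checks $\cD_p^{-1}\circ\cD_p=\id$ directly, using trace preservation). Given any operator $X$ on $\cH\otimes\C^{2}$, set $Y:=(\id_{\cH}\otimes\cD_p)(X)$; then $X=(\id_{\cH}\otimes\cD_p^{-1})(Y)=\tfrac{1}{1-p}Y-\tfrac{p}{1-p}(\tr_{2}Y)\otimes\tfrac{\Id}{2}$, where $\tr_{2}$ is the partial trace over the qubit. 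Taking $\|\cdot\|_1$ and combining the triangle inequality with $\|A\otimes\tfrac{\Id}{2}\|_1=\|A\|_1$ and the contractivity $\|\tr_{2}Y\|_1\le\|Y\|_1$ of the partial trace, one gets $\|X\|_1\le\tfrac{1+p}{1-p}\|Y\|_1$, i.e. $\|(\id_{\cH}\otimes\cD_p)(X)\|_1\ge\tfrac{1-p}{1+p}\|X\|_1$. Since $(1+p)(1-2p)=1-p-2p^{2}\le 1-p$, we have $\tfrac{1-p}{1+p}\ge 1-2p$ whenever $p<\tfrac12$, so in particular $\|(\id_{\cH}\otimes\cD_p)(X)\|_1\ge(1-2p)\|X\|_1$.

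\emph{Peeling the circuit.} Writing $\cD_p^{\otimes k}$ as the composition of the $k$ maps that each depolarize a single qubit (identity on the rest) and iterating the single-qubit bound over the $k$ qubits — with ``$\cH$'' playing the role of the remaining $k-1$ qubits at each step — one obtains $\|\cD_p^{\otimes k}(Z)\|_1\ge(1-2p)^{k}\|Z\|_1$ for every operator $Z$ on $k$ qubits; taking $Z=\rho-\sigma$ and dividing by $(1-2p)^{k}>0$ gives the first inequality. For $\cA=\bigcirc_{i=1}^{n}\cN_i\circ\cC_i$ with each $\cC_i$ unitary and $\cN_i=\cD_p^{\otimes k}$, unitary invariance $\|\cC_i(W)\|_1=\|W\|_1$ lets me peel the $2n$ layers in turn: a unitary layer costs nothing, a noise layer costs a factor $(1-2p)^{k}$, so $\|\cA(\rho)-\cA(\sigma)\|_1=\|\cA(\rho-\sigma)\|_1\ge(1-2p)^{kn}\|\rho-\sigma\|_1$. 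Finally $E_1(\cA(\rho)\|\cA(\sigma))=\tfrac12\|\cA(\rho)-\cA(\sigma)\|_1$ by~\eqref{Eq:Hs-trace}, which is exactly~\eqref{Eq:DeltaLower}.

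\emph{The main obstacle.} The only step requiring genuine care is the single-qubit one: I must bound the action of the non-positive map $\cD_p^{-1}$ after tensoring it with an identity on an arbitrary ancilla — morally a completely-bounded-norm statement — and the reason it goes through unaffected by the ancilla is precisely that $\cD_p^{-1}$ is a real linear combination of two honest channels, $\id$ and the completely depolarizing channel. Everything downstream is bookkeeping. (The same argument in fact delivers the slightly stronger constant $\big(\tfrac{1+p}{1-p}\big)^{k}$ in place of $(1-2p)^{-k}$.)
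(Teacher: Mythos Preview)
Your proof is correct and follows the same overall strategy as the paper: invert $\cD_p$ on a single qubit, control the inverse via the triangle inequality, then iterate over qubits and layers. The one substantive difference is in how the partial-trace term is bounded. The paper writes $\|\rho-\sigma\|_1\le \tfrac{1}{1-p}\|\cD_p(\rho-\sigma)\|_1+\tfrac{p}{1-p}\|\rho-\sigma\|_1$ (bounding $\|\tr_i(\rho-\sigma)\|_1\le\|\rho-\sigma\|_1$) and then solves this self-referential inequality, which is precisely what forces the restriction $p<\tfrac12$ and produces the constant $(1-2p)^{-1}$. You instead bound $\|\tr_2 Y\|_1\le\|Y\|_1$, obtaining $\|X\|_1\le\tfrac{1+p}{1-p}\|Y\|_1$ directly; this gives the sharper constant $\big(\tfrac{1+p}{1-p}\big)^k$ (which you note) and in fact remains valid for all $p<1$, so your argument is a genuine, if small, improvement over the paper's.
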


\begin{proof}
We start by inverting the depolarizing noise acting on qubit $i$ as:
\begin{align*}
    \rho=\frac{\cD_p(\rho)-\frac{p}{D}\tr_i(\rho)\otimes \Id_i}{1-p}\,
    \end{align*}
    Hence, we have
    \begin{align*}
\|\rho-\sigma\|_1 &=\frac{1}{1-p}\,\|\cD_p(\rho-\sigma)-\frac{p}{D}\,\tr_i(\rho-\sigma)\otimes \Id_i\|_1 \\
&\le \frac{1}{1-p}\,\|\cD_p(\rho-\sigma)\|_1+\frac{p}{(1-p)}\,\|\rho-\sigma\|_1\,.
\end{align*}
Therefore, whenever $p<\frac{1}{2}$, we have that
\begin{align*}
    \|\rho-\sigma\|_1\le \frac{1}{1-2p}\,\|\cD_p(\rho-\sigma)\|_1\,.
\end{align*}
The result arises from repeating the above step for all the qubits.
\end{proof}

We compare the above lower bound to our previous upper bounds in Figure~\ref{Fig:DeltaLower}. It can be seen that there is a clear separation between the worst case decay of the trace distance and the depth required to reach differential privacy. It should however be noted that this bound seems to be  useful only for small $k$. 

\begin{figure*}
    \centering
    \begin{minipage}{0.48\textwidth}
        \centering
        \includegraphics[width=0.99\textwidth]{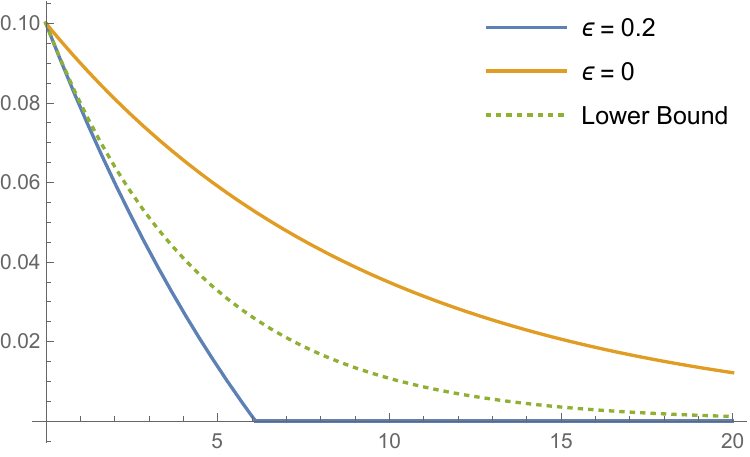} %
    \end{minipage}\hfill
    \begin{minipage}{0.48\textwidth}
        \centering
        \includegraphics[width=0.99\textwidth]{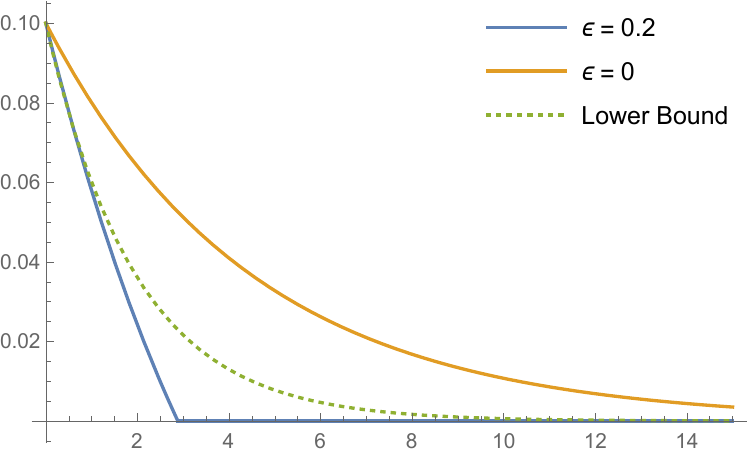} %
    \end{minipage}
\caption{\label{Fig:DeltaLower} Comparison of the bound in Equation~\eqref{Eq:Dp-better-bound-local}, solid lines for different $\epsilon$, with the lower bound in Equation~\eqref{Eq:DeltaLower}, dotted line. Plotted for $D=2$, $k=1$ and $\frac12 \|\rho-\sigma\|_1 = 0.1$ against $n$, with $p=0.1$ on the left and $p=0.2$ on the right.}
\end{figure*}

The previous results suggest that the local noise channels do not contract too fast at small enough noise. Here, we prove that the trace distance, and hence also any hockey stick divergence, will converge exponentially fast in the number of layers as soon as a critical local noise is attained. Our result can be compared to earlier upper bounds on the noise threshold for fault-tolerant quantum computing as in \cite{kempe2008upper,razborov2003upper}. However, as we will see, our bound comes as a simple corollary of basic properties of a recently introduced quantum Wasserstein distance in \cite{WassersteindP}. Although the argument can be generalized to other Pauli noisy channels (see e.g. \cite{gao2021ricci}), we restrict once again our analysis to the simple depolarizing channel. Here, given a quantum channel $\Phi$ acting on $k$ qubits, we define its light-cone as follows: first, for any qubit $i$, we denote by $I_i$ the minimal subset of qubits such that $\tr_{I_i}(\Phi(\rho))=\tr_{I_i}(\Phi(\sigma))$ for any two $k$-qubit states $\rho$ and $\sigma $ such that $\tr_i(\rho)=\tr_i(\sigma)$. Then, the light-cone of $\Phi$ is defined as
\begin{align*}
|I|:=\max_{i\in [k] } \,|I_i|\,.    
\end{align*}

\begin{proposition}
Given the noisy circuit $\mathcal{A}$ in \Cref{Eq:LayerA} with $n$ layers, $k$ qubits and local depolarizing noise of parameter $0\le p\le 1$, we assume that each layer of the circuit is a quantum channel of light-cone $|I|$. Then, we have that for any two input states $\rho,\sigma$
\begin{align*}
    \|\mathcal{A}(\rho)-\cA(\sigma)\|_1\le \,2k\,\big(2|I|\,(1-p)\big)^n\,.
\end{align*}
In other words, the trace distance between any two output states vanishes in logarithmic depth as soon as $p$ satisfies $2|I|(1-p)<1$.
\end{proposition}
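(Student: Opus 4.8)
The plan is to run the whole argument in the quantum Wasserstein distance of order $1$ of~\cite{WassersteindP}, which I write $W_1$ with $W_1(\rho,\sigma)=\|\rho-\sigma\|_{W_1}$, rather than in the trace distance or in $E_\gamma$ directly: the point is that a layer of local depolarizing noise genuinely \emph{contracts} $W_1$ by a factor $1-p$, whereas a layer of unitary gates with light cone $|I|$ can only \emph{expand} $W_1$ by a factor $2|I|$, so telescoping over the $n$ layers and converting back to the trace distance at the end produces exactly the stated bound. I would use two standard facts about $W_1$ from~\cite{WassersteindP}: first, $\tfrac12\|\rho-\sigma\|_1\le W_1(\rho,\sigma)$; second, the $W_1$-diameter of the $k$-qubit state space is at most $k$, so $W_1(\rho,\sigma)\le k$ for all $\rho,\sigma$ (two states coinciding after discarding one qubit are at $W_1$-distance at most $1$, and any two states are interpolated by $k$ such steps). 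Recall also that $\|X\|_{W_1}=\min\sum_{i=1}^k\|X^{(i)}\|_1$ over all decompositions $X=\sum_i X^{(i)}$ of a traceless Hermitian $X$ into Hermitian pieces with $\tr_i X^{(i)}=0$.

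\textbf{Noise contracts $W_1$ by $1-p$.} For traceless Hermitian $X$ I claim $\|\mathcal{D}_p^{\otimes k}(X)\|_{W_1}\le(1-p)\|X\|_{W_1}$. Take an optimal decomposition $X=\sum_i X^{(i)}$ with $\tr_i X^{(i)}=0$ and $\|X\|_{W_1}=\sum_i\|X^{(i)}\|_1$. Since $\mathcal{D}_p(Y)=(1-p)Y+\tfrac p2\,\Id\,\tr Y$ and $\tr_i X^{(i)}=0$, applying $\mathcal{D}_p$ on the $i$-th qubit of $X^{(i)}$ returns exactly $(1-p)X^{(i)}$, so $\mathcal{D}_p^{\otimes k}(X^{(i)})=(1-p)\big(\bigotimes_{j\neq i}\mathcal{D}_p^{(j)}\big)(X^{(i)})$. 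This operator still has zero $i$-th partial trace and trace norm at most $\|X^{(i)}\|_1$ (channels are trace-norm nonincreasing), so $\{\mathcal{D}_p^{\otimes k}(X^{(i)})\}_i$ is an admissible decomposition of $\mathcal{D}_p^{\otimes k}(X)$, and summing yields the claim.

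\textbf{A unitary layer expands $W_1$ by at most $2|I|$.} Let $\Phi$ be a channel with light cone $|I|$ and $X=\sum_i X^{(i)}$ an optimal $W_1$-decomposition. For each $i$, $\Phi(X^{(i)})$ is traceless and satisfies $\tr_{I_i}(\Phi(X^{(i)}))=0$ by definition of $I_i$ (extended by linearity from state differences); hence, in the Pauli basis, $\Phi(X^{(i)})$ contains only strings that are non-identity on some qubit of $I_i$. Ordering $I_i$ and assigning each such string to its first non-identity qubit $j\in I_i$ gives $\Phi(X^{(i)})=\sum_{j\in I_i}W^{(j)}$ with $\tr_j W^{(j)}=0$, where each $W^{(j)}$ is obtained from $\Phi(X^{(i)})$ by replacing the earlier qubits of $I_i$ with maximally mixed states (a channel, hence trace-norm nonincreasing) and then subtracting the identity-on-$j$ part (which costs a factor at most $2$ in trace norm). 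Thus $\sum_{j\in I_i}\|W^{(j)}\|_1\le 2|I_i|\,\|\Phi(X^{(i)})\|_1\le 2|I|\,\|X^{(i)}\|_1$, so $\|\Phi(X^{(i)})\|_{W_1}\le 2|I|\,\|X^{(i)}\|_1$, and summing over $i$ gives $\|\Phi(X)\|_{W_1}\le 2|I|\,\|X\|_{W_1}$; this is essentially the light-cone Lipschitz estimate of~\cite{WassersteindP}.

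\textbf{Assembling.} Write $\mathcal{A}=\bigcirc_{j=1}^n(\mathcal{D}_p^{\otimes k}\circ\mathcal{C}_j)$. Since $\mathcal{D}_p^{\otimes k}$ is a tensor product of single-qubit channels it does not enlarge light cones, so the hypothesis gives each $\mathcal{C}_j$ a light cone of at most $|I|$ (the degenerate case $p=1$, where $\mathcal{A}$ is constant, is trivial). Applying the two estimates above once per layer,
\begin{align*}
W_1\big(\mathcal{A}(\rho),\mathcal{A}(\sigma)\big)\le\big(2|I|\,(1-p)\big)^n\,W_1(\rho,\sigma)\le k\,\big(2|I|\,(1-p)\big)^n,
\end{align*}
and the first $W_1$ fact turns this into $\|\mathcal{A}(\rho)-\mathcal{A}(\sigma)\|_1\le 2k\,(2|I|(1-p))^n$; the logarithmic-depth remark is then immediate, since once $2|I|(1-p)<1$ reaching trace distance $\varepsilon$ needs only $n=O(\log(k/\varepsilon))$ layers. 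I expect the only real work to be the third step — controlling how much the correlating unitary layer inflates $W_1$ through the light cone, and in particular pinning the expansion factor down to $2|I|$ and tracking that the local noise does not change it — while the noise contraction and the two cited $W_1$ facts are routine.
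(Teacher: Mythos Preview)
Your proposal is correct and follows essentially the same route as the paper: both arguments pass through the quantum Wasserstein distance $W_1$ of~\cite{WassersteindP}, use that local depolarizing noise contracts $W_1$ by $(1-p)$ while a light-cone-$|I|$ layer expands it by at most $2|I|$, telescope over the $n$ layers, and convert back to trace distance via $\tfrac12\|\cdot\|_1\le W_1\le k$. The only cosmetic differences are that you bound $W_1(\rho,\sigma)$ by the diameter $k$ whereas the paper uses the equivalent $\tfrac{k}{2}\|\rho-\sigma\|_1\le k$, and you spell out the noise-contraction and light-cone-expansion estimates that the paper simply cites as Propositions~12 and~13 of~\cite{WassersteindP}.
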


\begin{proof}
We will use the Wasserstein distance introduced in \cite{WassersteindP} as follows:
\begin{align*}
    W_1(\rho,\sigma):=\sup_{\|O\|_L\le 1}\,\tr(O(\rho-\sigma))\,, 
\end{align*}
where 
\begin{align*}
    \|H\|_L:=2\,\max_{i\in[k]}\,\min_{H^{(i)}}\,\|H-H^{(i)}\|_\infty\,,
\end{align*} 
where the mininmum is over all self-adjoint operators $H^{(i)}$ which do not act on qubit $i$. Next, we have
\begin{align*}
    &\frac{1}{2}\,\|\mathcal{A}(\rho-\sigma)\|_1\le W_1(\mathcal{A}(\rho),\mathcal{A}(\sigma)) \\
    &\le (2|I|(1-p))^n W_1(\rho,\sigma) \\
    &\le \frac{k}{2}\,(2|I|(1-p))^n\,\|\rho-\sigma\|_1
\end{align*}
Above, the first and last inequalities follow from \cite[Proposition 2]{WassersteindP}, whereas the second inequality comes from an alternating use of \cite[Propositions 12 and 13]{WassersteindP}. The result follows. 
\end{proof}
Compared to our earlier results this bounds dependence on $k$ is weaker and we recover the $(1-p)^n$ scaling, previously seen for the global depolarizing noise, whenever $p$ is large enough to overpower the, possibly error-correcting, properties of the computational layers.

\subsection{Arbitrary local qubit noise channels}
In this section we will give a bound for the contraction of arbitrary local qubit channels based on recent work in~\cite{fawzi2022space}. In particular we will see that any non-unital local qubit noise leads to good differential privacy eventually. The main result is the following lemma that generalizes~\cite[Lemma 7]{fawzi2022space} by combining their proof with our generalized Fuchs-van-de-Graaf inequality. 
\begin{lemma}
Let $\cT=\cN^{\otimes k}$ and $\cN$ some qubit channel, then
\begin{align}
    &\eta_\gamma(\cT) \nonumber\\
    &\leq \frac12\sqrt{(1+\gamma)^2 - 4\gamma \left(\frac{\lambda_{\min}(C_{\cN^\dagger\circ\cN})}{4}\right)^k} + \frac{(1-\gamma)}{2}, 
\end{align}
where $C_{\cN^\dagger\circ\cN}$ is the Choi matrix of $\cN^\dagger\circ\cN$ and $\lambda_{\min}(C_{\cN^\dagger\circ\cN})$ its smallest eigenvalue. If $\cN$ is also non-unital then
\begin{align}
    \lambda_{\min}(C_{\cN^\dagger\circ\cN})>0.
\end{align}
\end{lemma}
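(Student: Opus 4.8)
The plan is to reduce the statement to the single-qubit Fuchs–van-de-Graaf bound from Lemma~\ref{Lem:HS-FvdG} (and its consequence Eq.~\eqref{Eq:eta-FvdG}) by lower bounding, for orthogonal pure inputs, the fidelity $F(\cT(\varphi),\cT(\psi))$ from below. By Theorem~\ref{Thm:ConCoef-easy} it suffices to control $\eta_\gamma(\cT)$ through orthogonal pure states $|\varphi\rangle\perp|\psi\rangle$, and by Eq.~\eqref{Eq:eta-FvdG} this amounts to a \emph{lower} bound on $F(\cT(\varphi),\cT(\psi))$ that is uniform over all such pairs. The key claim, borrowed from the proof strategy of \cite[Lemma 7]{fawzi2022space}, is that for any qubit channel $\cN$ one has $F(\cN(\rho),\cN(\sigma))\ge \lambda_{\min}(C_{\cN^\dagger\circ\cN})/4$ for all states $\rho,\sigma$; tensorizing this over the $k$ factors and using multiplicativity of fidelity under tensor products gives
\begin{align*}
F(\cT(\varphi),\cT(\psi)) = F(\cN^{\otimes k}(\varphi),\cN^{\otimes k}(\psi)) \ge \prod_{j=1}^k F(\cN(\varphi_j),\cN(\psi_j)) \ge \left(\frac{\lambda_{\min}(C_{\cN^\dagger\circ\cN})}{4}\right)^k,
\end{align*}
after writing $\varphi$ and $\psi$ in a form that lets the tensor structure be exploited; since $\sqrt{(1+\gamma)^2-4\gamma x}$ is decreasing in $x$, substituting this fidelity lower bound into Eq.~\eqref{Eq:eta-FvdG} yields exactly the claimed inequality.

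The substantive step, which is where the work of \cite{fawzi2022space} enters, is establishing the single-qubit fidelity lower bound $F(\cN(\rho),\cN(\sigma))\ge \lambda_{\min}(C_{\cN^\dagger\circ\cN})/4$. I would obtain it by relating the fidelity to the overlap $\tr(\cN(\rho)\cN(\sigma))$ and then rewriting $\tr(\cN(\rho)\cN(\sigma)) = \tr\big((\cN^\dagger\circ\cN)(\rho)\,\sigma\big)$ via the adjoint; bounding the composed map below by $\lambda_{\min}(C_{\cN^\dagger\circ\cN})$ times a suitable operator (exploiting that $C_{\cN^\dagger\circ\cN}\ge \lambda_{\min}(C_{\cN^\dagger\circ\cN})\,\Id$ and unvectorizing) produces the factor $\lambda_{\min}(C_{\cN^\dagger\circ\cN})/4$, where the $4=2^2$ is the squared qubit dimension appearing through the Choi isomorphism normalization. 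The tensorization then only uses $C_{\cN^{\dagger}\circ\cN}^{\otimes k} = C_{(\cN^{\otimes k})^\dagger\circ\cN^{\otimes k}}$ and $\lambda_{\min}(A^{\otimes k})=\lambda_{\min}(A)^k$ for positive semidefinite $A$.

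For the second assertion, that non-unitality of $\cN$ forces $\lambda_{\min}(C_{\cN^\dagger\circ\cN})>0$, the plan is to argue by contraposition: if $\lambda_{\min}(C_{\cN^\dagger\circ\cN})=0$ then $C_{\cN^\dagger\circ\cN}$ is singular, which by the Choi–Jamio\l{}kowski correspondence means $\cN^\dagger\circ\cN$ has a nontrivial kernel; one then shows this can only happen when $\cN$ is not injective as a linear map on $2\times2$ Hermitian operators, and a non-injective trace-preserving qubit channel collapses the Bloch ball onto a lower-dimensional set, which in particular fixes the maximally mixed state, i.e.\ $\cN$ is unital. (Concretely: $\cN^\dagger\circ\cN$ and $\cN$ have the same kernel since $\tr\big((\cN^\dagger\circ\cN)(X)X\big)=\|\cN(X)\|_2^2$ for Hermitian $X$; a rank-deficient trace-preserving qubit map sends the Bloch ball into an affine subspace of dimension $\le 2$ through the origin, forcing the image of $\Id/2$ to be $\Id/2$.) I expect the fidelity lower bound of the first part to be the main obstacle, since it is the one genuinely new ingredient beyond citing \cite{fawzi2022space} and Lemma~\ref{Lem:HS-FvdG}; the non-unitality argument is a short linear-algebra observation about qubit channels.
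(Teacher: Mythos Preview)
Your overall strategy matches the paper's: combine Theorem~\ref{Thm:ConCoef-easy} with the generalized Fuchs--van-de-Graaf bound (Eq.~\eqref{Eq:eta-FvdG}), then invoke the fidelity lower bound from \cite{fawzi2022space}. The paper does not re-derive that fidelity bound or the non-unitality claim; it simply cites \cite{fawzi2022space} for both (see Appendix~\ref{Sec:Lemmas}).

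There is, however, a genuine gap in your tensorization step. The displayed chain
\[
F(\cN^{\otimes k}(\varphi),\cN^{\otimes k}(\psi)) \ge \prod_{j=1}^k F(\cN(\varphi_j),\cN(\psi_j))
\]
has no meaning for entangled inputs: the orthogonal pure states $|\varphi\rangle,|\psi\rangle$ on $k$ qubits that arise from Theorem~\ref{Thm:ConCoef-easy} need not be product states, so there are no ``marginals'' $\varphi_j,\psi_j$ to which multiplicativity of fidelity could apply, and the hedge ``after writing $\varphi$ and $\psi$ in a form that lets the tensor structure be exploited'' cannot be made precise. The correct route---which you do allude to afterwards---is to apply the general-dimension bound $F(\cT(\Psi),\cT(\Phi))\ge \lambda_{\min}(C_{\cT^\dagger\circ\cT})/d^2$ from \cite{fawzi2022space} directly to $\cT=\cN^{\otimes k}$ with $d=2^k$, and then use $C_{\cT^\dagger\circ\cT}=C_{\cN^\dagger\circ\cN}^{\otimes k}$ together with $\lambda_{\min}(A^{\otimes k})=\lambda_{\min}(A)^k$. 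This is exactly how the result is obtained in \cite{fawzi2022space}, and it bypasses any need to factorize the pure inputs.

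Your contrapositive argument for the non-unitality claim is also flawed. A singular Choi matrix does \emph{not} imply that the underlying map has a nontrivial kernel: the identity channel has $C_{\id^\dagger\circ\id}=|\Omega\rangle\langle\Omega|$ of rank one (hence $\lambda_{\min}=0$) yet is injective. Moreover, even granting the step ``$\cN$ non-injective on Hermitian matrices'', this does not force unitality: the replacement channel $\rho\mapsto|0\rangle\langle 0|$ has Bloch matrix $T=0$ (maximally non-injective) but displacement $\vec t=(0,0,1)\ne 0$, so it is non-unital. The actual argument in \cite{fawzi2022space} proceeds differently and is not reproduced in the paper.
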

\begin{proof}
We begin by applying the generalized Fuchs-van-de-Graaf inequality from Lemma~\ref{Lem:HS-FvdG} to the simplified expression of the contraction coefficient in Theorem~\ref{Thm:ConCoef-easy} as previously stated in Equation~\eqref{Eq:eta-FvdG}. 
We then apply a bound on the fidelity shown in~\cite{fawzi2022space} that states 
\begin{align*}
    F(\cN^{\otimes k}(\Psi),\cN^{\otimes k}(\Phi)) \geq \left(\frac{\lambda_{\min}(C_{\cN^\dagger\circ\cN})}{4}\right)^k,
\end{align*}
see Appendix~\ref{Sec:Lemmas} for the details. The final statement about non-unital qubit channels was also argued in~\cite{fawzi2022space}.  
\end{proof}

This bound applies directly to differential privacy via Corollary~\ref{Cor:delta-decay}. Let $\cA$ be of the form in Equation~\eqref{Eq:LayerA} with all $\cN_i=\cN^{\otimes k}$ identical and $\rho\sim\sigma$ if $\frac12 \|\rho-\sigma\|_1 \leq\kappa$. Let $\lambda_{\min}(C_{\cN^\dagger\circ\cN})=\lambda$, then $\cA$ is $(\epsilon, \delta)$-differentially private with
\begin{align}
\delta = \left(\frac12\sqrt{(1+e^\epsilon)^2 - 4e^\epsilon \left(\frac{\lambda}{4}\right)^k} + \frac{(1-e^\epsilon)}{2}\right)^n \kappa. \label{Eq:delta-fid-bound}
\end{align}

\begin{figure*}
    \centering
    \begin{minipage}{0.47\textwidth}
        \centering
        \includegraphics[width=1\textwidth]{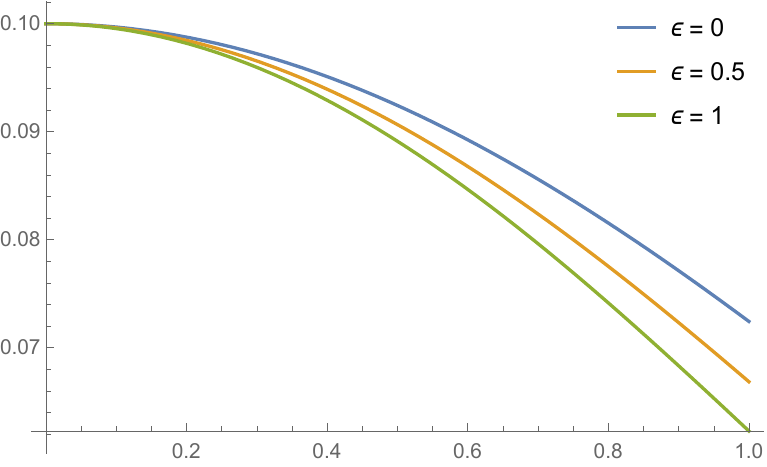} %
    \end{minipage}\hfill
    \begin{minipage}{0.47\textwidth}
        \centering
        \includegraphics[width=1\textwidth]{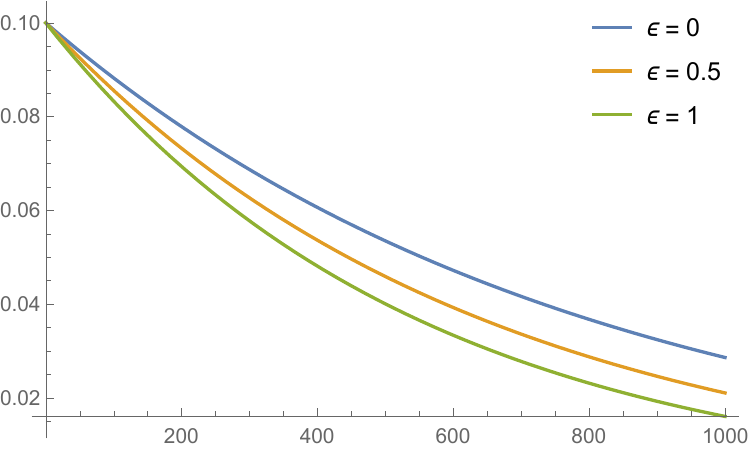} %
    \end{minipage}
\caption{\label{Fig:local-fid-bound} Guaranteed $\delta$ for some local noise channel based on Equation~\eqref{Eq:delta-fid-bound} with $\kappa=0.1$, $k=2$ and different values of $\epsilon$. Left: plotted for $n=10$ against $\lambda$. Right: plotted for $\lambda=0.2$ against $n$.}
\end{figure*}

We observe that for growing $\epsilon$ we get smaller $\delta$, a natural trade-off between the two security parameters, also implying that $\delta$ always decays faster than the trace distance ($\epsilon=0$). Also for any channel with $\lambda>0$, which is the case for every non-unital channel as shown above, good differential privacy will be eventually reached for large enough $n$. We give some numerical examples in Figure~\ref{Fig:local-fid-bound}.

\section{Extensions} \label{Sec:Extensions}

\subsection{Local quantum differential privacy}

Local differential privacy (LDP) was defined in the classical setting for the scenario in which a database is collected from many clients and each of them demands differential privacy to hold for their individual contribution. In this case the client applies an algorithm $\cA$ to mask their contribution to the database and the priority is not to make neighboring states look similar but to hide the general information they are sending. The definition of $(\epsilon,\delta)$-LDP therefore coincides with that $(\epsilon,\delta)$-DP but with a more general set of possible input states. 
In the extreme setting one could even consider the set of all possible input states, implying that $\cA$ is $(\epsilon,\delta)$-LDP if
\begin{align}\label{Eq:LDP-def}
    \sup_{\rho,\sigma} E_{e^\epsilon}(\cA(\rho) \| \cA(\sigma) ) \leq \delta\,, 
\end{align}
based on Lemma~\ref{Lem:DP-divergence}. Clearly this is a much stronger requirement than what is required for $(\epsilon,\delta)$-DP. In fact from the properties of $E_\gamma$ one can see that this condition is restrictive enough to imply a bound on the trace distance contraction coefficient. This generalizes a classical result in~\cite{asoodeh2021local}. 
\begin{corollary}\label{Cor:bound-by-LDP}
Let $\cA$ be $(\epsilon,\delta)$-LDP and $\varphi(\epsilon,\delta)=1-e^{-\epsilon}(1-\delta)$. Then 
\begin{align}
    \eta_1(\cA) \leq \varphi(\epsilon,\delta). 
\end{align}
\end{corollary}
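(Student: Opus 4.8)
The plan is to combine the simplified form of the trace–distance contraction coefficient from Theorem~\ref{Thm:ConCoef-easy} with the lower bound on $E_\gamma$ in terms of the trace distance from Lemma~\ref{Lem:gamma-trace-distance}. By Theorem~\ref{Thm:ConCoef-easy} (taken at $\gamma=1$) we have $\eta_1(\cA) = \sup_{|\varphi\rangle\perp|\psi\rangle} E_1(\cA(|\varphi\rangle\langle\varphi|)\|\cA(|\psi\rangle\langle\psi|))$, so it suffices to bound $E_1(\cA(\rho)\|\cA(\sigma))$ uniformly over all pairs of input states $\rho,\sigma$; we do not even need the orthogonality for the bound itself, only that the supremum defining $\eta_1$ can be restricted to pairs with $E_1(\rho\|\sigma)=1$.

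Next I would apply the first inequality of Lemma~\ref{Lem:gamma-trace-distance} with $\gamma=e^\epsilon\ge 1$ to the output pair $\cA(\rho),\cA(\sigma)$, which gives
\[
1 - e^\epsilon\left(1 - \tfrac12\|\cA(\rho)-\cA(\sigma)\|_1\right) \leq E_{e^\epsilon}(\cA(\rho)\|\cA(\sigma)).
\]
By the definition of $(\epsilon,\delta)$-LDP in Equation~\eqref{Eq:LDP-def}, the right-hand side is at most $\delta$. Rearranging this chain of inequalities isolates $\tfrac12\|\cA(\rho)-\cA(\sigma)\|_1 = E_1(\cA(\rho)\|\cA(\sigma))$ and yields exactly
\[
E_1(\cA(\rho)\|\cA(\sigma)) \leq e^{-\epsilon}\big(\delta - 1 + e^\epsilon\big) = 1 - e^{-\epsilon}(1-\delta) = \varphi(\epsilon,\delta).
\]

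Finally, taking the supremum over orthogonal pure $|\varphi\rangle,|\psi\rangle$ and invoking Theorem~\ref{Thm:ConCoef-easy} gives $\eta_1(\cA)\le\varphi(\epsilon,\delta)$. I expect there to be no genuine obstacle; the only subtlety is that the uniform bound $E_1(\cA(\rho)\|\cA(\sigma))\le\varphi(\epsilon,\delta)$ does not by itself control the ratio $E_1(\cA(\rho)\|\cA(\sigma))/E_1(\rho\|\sigma)$ for arbitrary inputs, since $E_1(\rho\|\sigma)$ can be arbitrarily small — which is precisely why the reduction to orthogonal pure states through Theorem~\ref{Thm:ConCoef-easy} is the right route rather than a one-line manipulation of the definition of the contraction coefficient.
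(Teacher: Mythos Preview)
Your proof is correct and follows essentially the same route as the paper. The only cosmetic difference is that the paper first invokes Theorem~\ref{Thm:ConCoef-easy} at $\gamma=e^\epsilon$ together with the LDP assumption to conclude $\eta_{e^\epsilon}(\cA)\le\delta$, and then applies the contraction-coefficient inequality in Lemma~\ref{Lem:gamma-trace-distance} (rearranged as $\eta_1(\cA)\le 1-\tfrac{1-\eta_\gamma(\cA)}{\gamma}$), whereas you work directly with the state-level inequality of Lemma~\ref{Lem:gamma-trace-distance} and then pass to the supremum via Theorem~\ref{Thm:ConCoef-easy} at $\gamma=1$; these are the same ingredients applied in a slightly different order.
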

\begin{proof}
Clearly, by definition of LDP and Theorem~\ref{Thm:ConCoef-easy} we have 
\begin{align*}
    \eta_{e^\epsilon}(\cA) \leq \delta. 
\end{align*}
Now the corollary follows because from Lemma~\ref{Lem:gamma-trace-distance} we get
\begin{align*}
    \eta_1(\cA)\leq 1 - \frac{1-\eta_\gamma(\cA)}{\gamma}, 
\end{align*}
which together with the definition of $(\epsilon,\delta)$-LDP concludes the proof.
\end{proof}
In particular, this implies that requiring $\cA$ to be $(\epsilon,\delta)$-LDP can strongly limit the usefulness of the output states. In particular applying $\cA$ iteratively will lead to very strong privacy guarantees but also make the output states indistinguishable and therefore useless for further computations, compare Proposition~\ref{Thm:Contraction}. 

Note that the above argument also works the same with somewhat less restrictive definitions.
In particular, we get the following equivalence, which is similar to an observation in~\cite{aaronson2019gentle}.
\begin{corollary}
$\cA$ being $(\epsilon,\delta)$-LDP with respect to the set of all states is equivalent to $\cA$ being $(\epsilon,\delta)$-LDP with respect to the set of all pure states.
\end{corollary}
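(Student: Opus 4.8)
The plan is to reduce the claim to the single identity
\[
\sup_{\rho,\sigma}E_{e^\epsilon}\big(\cA(\rho)\,\big\|\,\cA(\sigma)\big)\;=\;\sup_{\varphi,\psi\ \mathrm{pure}}E_{e^\epsilon}\big(\cA(\varphi)\,\big\|\,\cA(\psi)\big),
\]
since, by the reformulation of local differential privacy in Equation~\eqref{Eq:LDP-def}, each of the two notions of $(\epsilon,\delta)$-LDP states precisely that the corresponding supremum is at most $\delta$. The inequality ``$\ge$'' is immediate: pure states are a subset of all states, so $(\epsilon,\delta)$-LDP with respect to the set of all states trivially implies $(\epsilon,\delta)$-LDP with respect to the set of all pure states.

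For the nontrivial inequality ``$\le$'' I would fix arbitrary states $\rho,\sigma$, take spectral decompositions $\rho=\sum_i p_i\varphi_i$ and $\sigma=\sum_j q_j\psi_j$ into pure states, and note that the pair $(\rho,\sigma)$ equals the convex combination $\sum_{i,j}p_iq_j\,(\varphi_i,\psi_j)$ of pairs of pure states (using $\sum_i p_i=\sum_j q_j=1$). Since $\cA$ is linear and $E_{e^\epsilon}$ is jointly convex by Proposition~\ref{Lem:HSD-properties} (here $e^\epsilon\ge1$), this gives
\[
E_{e^\epsilon}\big(\cA(\rho)\,\big\|\,\cA(\sigma)\big)\;\le\;\sum_{i,j}p_iq_j\,E_{e^\epsilon}\big(\cA(\varphi_i)\,\big\|\,\cA(\psi_j)\big)\;\le\;\sup_{\varphi,\psi\ \mathrm{pure}}E_{e^\epsilon}\big(\cA(\varphi)\,\big\|\,\cA(\psi)\big).
\]
Taking the supremum over $\rho,\sigma$ on the left yields the desired bound, hence the equality of the two suprema, and therefore the equivalence. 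Alternatively, the same conclusion can be read off directly from the chain of estimates in the proof of Theorem~\ref{Thm:ConCoef-easy}, which already bounds $E_\gamma(\cA(\rho)\|\cA(\sigma))$ by $E_\gamma(\rho\|\sigma)\le1$ times a supremum over pure input states.

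I do not expect a genuine obstacle: the argument is in essence the Bauer maximum principle applied to the jointly convex function $(\rho,\sigma)\mapsto E_{e^\epsilon}(\cA(\rho)\|\cA(\sigma))$ on the compact convex set of pairs of density operators, whose extreme points are exactly the pairs of pure states. The only point that deserves a moment's care is that joint convexity must be invoked with the \emph{same} mixing weights $p_iq_j$ in both arguments of $E_{e^\epsilon}$, which is precisely what the product of the two spectral decompositions supplies; in particular, in contrast with Theorem~\ref{Thm:ConCoef-easy}, no orthogonality between $\varphi_i$ and $\psi_j$ is needed here.
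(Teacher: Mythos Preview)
Your proposal is correct and follows essentially the same approach as the paper: the paper's proof is the one-liner ``This follows directly from the convexity of $E_\gamma$,'' and your argument simply spells out that convexity step via the spectral decompositions and the joint convexity recorded in Proposition~\ref{Lem:HSD-properties}. Your additional remarks about the Bauer maximum principle and the alternative route through Theorem~\ref{Thm:ConCoef-easy} are nice context but not needed for the proof itself.
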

\begin{proof}
This follows directly from the convexity of $E_\gamma$. 
\end{proof}
In principle, Corollary~\ref{Cor:bound-by-LDP} also holds for $(\epsilon,\delta)$-LDP with respect to any set of states that includes all orthogonal pure states. For the minimal such set, $(\epsilon,\delta)$-LDP would not just imply but indeed be equivalent to $\eta_{e^\epsilon}(\cA) \leq \delta$ as observed in the classical setting~\cite{asoodeh2021local}.

\subsection{\Renyi quantum differential privacy}\label{Sec:Renyi}

In many practical settings $\epsilon$-differential privacy can be too strong of a criteria. On the other hand, $(\epsilon,\delta)$-differential privacy allows for rare events that leak a significant amount of information and its composition theorem requires adding the $\delta$'s of each algorithm. As an intermediate privacy requirement Mironov proposed \Renyi differential privacy~\cite{mironov2017renyi} and proved that it has several desirable properties. As we have seen in the previous sections, $\epsilon$-differential privacy can be cast in terms of the $\max$-relative entropy. Essentially, $(\epsilon,\alpha)$-\Renyi differential privacy is defined by replacing the $\max$-relative entropy by the general $\alpha$-\Renyi relative entropy. In this section we will propose a quantum extension of this concept. 

The obvious generalization of the classical definition is to go a similar route to the original quantum differential privacy definition and define measurement outcomes of a POVM $\{M_x\}$ as $p(x)=\tr M_x \cA(\rho)$ and $q(x)=\tr M_x \cA(\sigma)$ and require the classical definition to hold. This can be cast in terms of the measured \Renyi relative entropy,
\begin{align}
   \sup_{\rho\sim\sigma} D_{M,\alpha}(\cA(\rho)\|\cA(\sigma)) = \sup_{\rho\sim\sigma} \sup_{\{M_x\}} D_{\alpha}(p\|q) \leq \epsilon.  \label{Eq:sc-RDP}
\end{align}
This coincides with $\epsilon$-differential privacy for $\alpha\rightarrow\infty$, however for other values of $\alpha$ there is no known closed (measurement independent) formula for the measured \Renyi relative entropy and the definition remains classical at its heart. This also comes with a concrete disadvantage when considering composition bounds, namely that the measured \Renyi relative entropy is not generally subadditive\footnote{This follows e.g. because the measured \Renyi relative entropy is strictly smaller than the sandwiched \Renyi relative entropy, see e.g.~\cite[Theorem 6]{berta2017variational}, however they are equal under regularization, see Equation~\eqref{Eqn:sandwichedRegularized}}. That means there exist examples where 
\begin{align}
   &D_{M,\alpha}(\cA_1\otimes\cA_2(\rho_1\otimes\rho_2)\|\cA_1\otimes\cA_2(\sigma_1\otimes\sigma_2)) \nonumber\\
   &>  D_{M,\alpha}(\cA_1(\rho_1)\|\cA_1(\sigma_1)) + D_{M,\alpha}(\cA_2(\rho_2)\|\cA_2(\sigma_2))
\end{align}
implying that, using this quantity, $(\epsilon,\alpha)$-\Renyi differential privacy of $\cA_1$ and $\cA_2$ generally does not imply $(2\epsilon,\alpha)$-\Renyi differential privacy for $\cA_1\otimes\cA_2$. 
In this section, we want to propose a fully quantum definition of $(\epsilon,\alpha)$-\Renyi differential privacy that avoids this problem. 

While in the classical setting there is a uniquely defined \Renyi relative entropy, in the quantum setting, due to its non-commutative nature, there is an arbitrary number of generalizations. Here, we will not fix a particular definition but simply consider an arbitrary family of \Renyi relative entropies $\DD_\alpha$ as defined in~\cite{tomamichel2015quantum} based on a number of requirements that the quantity needs to fulfill. For completeness a list of the properties can be found in Appendix~\ref{Sec:App-Renyi} and we refer to~\cite{tomamichel2015quantum} for details. Here we only note that the list includes several typical properties such as data-processing, additivity and unitary invariance. We further note that the commonly used quantum generalizations such as the Petz-\Renyi divergence $D_\alpha$~\cite{petz1986quasi}, the sandwiched \Renyi divergence $\tilde D_\alpha$~\cite{Muller-Lennert2013,Wilde2014strong} and the geometric \Renyi divergence $\hat D$~\cite{matsumoto2015new} are special instances of $\DD_\alpha$ for the range of $\alpha$ in which the mentioned properties hold. Formally, we define quantum \Renyi differential privacy as follows.
\begin{definition}
We call a quantum channel $\cA$ $(\epsilon,\alpha)$-\Renyi differentially private if 
\begin{align}
    \sup_{\rho\sim\sigma} \DD_\alpha(\cA(\rho)\|\cA(\sigma)) \leq \epsilon. 
\end{align}
\end{definition}
This generally leaves us with a lot of freedom to pick our favourite \Renyi relative entropy. Note however that those which include the limit $\alpha\rightarrow\infty$ all have the particular feature that in this limit the above definition includes $\epsilon$-differential privacy as a special case. The sandwiched \Renyi relative entropy is an example of such a family. We now state the first property of our definition. 
\begin{lemma}
If $\cA$ is $\epsilon$-differentially private then it is $(\epsilon,\alpha)$-\Renyi differentially private. 
\end{lemma}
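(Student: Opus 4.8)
The plan is to reduce the claim to a single inequality between divergences, namely that the max-relative entropy dominates every member of the family $\DD_\alpha$. Concretely, $\epsilon$-differential privacy of $\cA$ means, by Lemma~\ref{Lem:DP-divergence} (in the $\delta=0$ case), that $\sup_{\rho\sim\sigma} D_{\max}(\cA(\rho)\|\cA(\sigma))\leq\epsilon$; equivalently $D_{\max}^0(\cA(\rho)\|\cA(\sigma))\leq\log(e^\epsilon)$ for all neighbouring $\rho\sim\sigma$. So it suffices to show $\DD_\alpha(\cA(\rho)\|\cA(\sigma))\leq D_{\max}(\cA(\rho)\|\cA(\sigma))$ for all states, and then take the supremum over $\rho\sim\sigma$ on both sides.

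The key step is therefore the general fact $\DD_\alpha(\omega\|\tau)\leq D_{\max}(\omega\|\tau)$ for any $\omega,\tau$ and any family $\DD_\alpha$ satisfying the axioms imported from~\cite{tomamichel2015quantum}. I would argue this from those axioms directly: writing $\lambda=D_{\max}(\omega\|\tau)$, we have $\omega\leq e^\lambda\tau$, and I want to compare $\omega$ to a state proportional to $\tau$. Using monotonicity of $\DD_\alpha$ under the operation $\omega\mapsto \omega$, $\tau\mapsto e^{-\lambda}\omega + (1-e^{-\lambda}\trace\omega)\,(\cdot)$ — more cleanly: the axioms include that $\DD_\alpha(\omega\|\tau')\leq\DD_\alpha(\omega\|\tau)$ whenever $\tau\leq\tau'$ (antitonicity in the second argument, which follows from data processing / the standard ordering axiom), together with $\DD_\alpha(\omega\|c\,\omega)=-\log c$ for $c\geq\trace\omega$ from the scaling axiom and the fact that $\DD_\alpha(\omega\|\omega)=0$. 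Combining, since $\tau\geq e^{-\lambda}\omega$, we get $\DD_\alpha(\omega\|\tau)\leq \DD_\alpha(\omega\|e^{-\lambda}\omega)=\lambda=D_{\max}(\omega\|\tau)$. I expect this is exactly the known ordering $\DD_\alpha\leq D_{\max}$ that is recorded in~\cite{tomamichel2015quantum}, so a clean write-up would simply cite that monotonicity property of the family rather than re-deriving it.

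Putting the pieces together: for every $\rho\sim\sigma$,
\begin{align*}
\DD_\alpha(\cA(\rho)\|\cA(\sigma)) \leq D_{\max}(\cA(\rho)\|\cA(\sigma)) \leq \sup_{\rho'\sim\sigma'} D_{\max}(\cA(\rho')\|\cA(\sigma')) \leq \epsilon,
\end{align*}
where the last inequality is $\epsilon$-differential privacy via Lemma~\ref{Lem:DP-divergence}. Taking the supremum over $\rho\sim\sigma$ on the left gives $\sup_{\rho\sim\sigma}\DD_\alpha(\cA(\rho)\|\cA(\sigma))\leq\epsilon$, i.e. $\cA$ is $(\epsilon,\alpha)$-\Renyi differentially private.

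The main obstacle is purely bookkeeping about the abstract family $\DD_\alpha$: one must make sure the inequality $\DD_\alpha\leq D_{\max}$ genuinely follows from the axioms the paper has adopted (monotonicity under data processing, the normalization/scaling behaviour, and behaviour on non-normalized second arguments), since $D_{\max}^0$ here is defined with an optimization over non-normalized operators. I do not anticipate any analytic difficulty beyond invoking the appropriate item from the list in Appendix~\ref{Sec:App-Renyi}; the argument is essentially a one-line consequence of $D_{\max}$ being the largest quantum \Renyi divergence.
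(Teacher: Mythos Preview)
Your proposal is correct and follows essentially the same approach as the paper: both reduce the claim to the single inequality $\DD_\alpha\leq D_{\max}$ and then invoke Lemma~\ref{Lem:DP-divergence}. The only cosmetic difference is in how that inequality is justified: the paper routes it through the geometric \Renyi divergence via $\DD_\alpha\leq\hat D_\alpha\leq\hat D_\infty=D_{\max}$ (citing \cite[Equations~(4.34) and~(4.36)]{tomamichel2015quantum}), whereas you derive it directly from the dominance and scaling axioms in Appendix~\ref{Sec:App-Renyi}; both arguments are standard and equally short.
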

\begin{proof}
The claim is a direct consequence of
\begin{align*}
    \DD_\alpha(\cA(\rho)\|\cA(\sigma)) \leq D_{\max}(\cA(\rho)\|\cA(\sigma)),
\end{align*}
which can be seen as follows. We have the following chain of arguments,
\begin{align*}
    \DD_\alpha(\rho\|\sigma) \leq \hat D_\alpha(\rho\|\sigma) \leq \hat D_\infty(\rho\|\sigma)=D_{\max}(\rho\|\sigma),
\end{align*}
where the first inequality is~\cite[Equation (4.34)]{tomamichel2015quantum}, the second is monotonicity of $\hat D_\alpha$ in $\alpha$ and the equality is~\cite[Equation (4.36)]{tomamichel2015quantum}.  
\end{proof}
The classical \Renyi differential privacy is furthermore known to be stronger than $(\epsilon,\delta)$-differential privacy. We now show that the same holds for our quantum definition. 
\begin{lemma}
If $\cA$ is $(\epsilon,\alpha)$-\Renyi differentially private then it is $(\epsilon + \frac{g(\delta)}{\alpha-1},\delta)$-differentially private with $g(\epsilon)=-\log(1-\sqrt{1-\epsilon^2})$.  
\end{lemma}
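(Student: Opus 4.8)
The plan is to turn the Rényi-divergence bound into the smooth $\max$-relative entropy form of differential privacy supplied by Lemma~\ref{Lem:DP-divergence}, and to control the smoothing radius by a Markov-type estimate obtained through data processing of $\DD_\alpha$. Write $\epsilon':=\epsilon+\frac{g(\delta)}{\alpha-1}$. By Eq.~\eqref{Eq:DP-max-rel-ent} and the definition of $D_{\max}^\delta$ appearing in Lemma~\ref{Lem:max-rel-ent}, it is enough to exhibit, for every pair $\rho\sim\sigma$, a positive operator $\bar\rho$ with $\bar\rho\le e^{\epsilon'}\cA(\sigma)$ and $\frac12\|\cA(\rho)-\bar\rho\|_1\le\delta$.

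First I would fix $\rho\sim\sigma$, set $X:=\cA(\rho)-e^{\epsilon'}\cA(\sigma)$ with its decomposition $X=X^+-X^-$ into positive and negative parts, let $P$ be the projector onto $\supp X^+$, $\Pi:=\Id-P$, and take $\bar\rho:=\Pi\,\cA(\rho)\,\Pi$. Since $\Pi X^+=0$ we get $\Pi X\Pi=-\Pi X^-\Pi\le 0$, hence $\bar\rho\le e^{\epsilon'}\Pi\cA(\sigma)\Pi\le e^{\epsilon'}\cA(\sigma)$; thus $\bar\rho$ meets the $D_{\max}$ requirement by construction, and the whole argument reduces to showing that the discarded mass $t:=\tr[P\cA(\rho)]=1-\tr\bar\rho$ is small.

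To bound $t$, I would feed $\cA(\rho),\cA(\sigma)$ through the two-outcome measurement channel $Y\mapsto\tr[PY]\,|0\rangle\!\langle0|+\tr[\Pi Y]\,|1\rangle\!\langle1|$ and invoke data processing of $\DD_\alpha$ together with the fact that $\DD_\alpha$ reduces to the classical $\alpha$-Rényi divergence on classical states (both among the properties of $\DD_\alpha$ assumed in the paper): with $s:=\tr[P\cA(\sigma)]$ this gives $D_\alpha\big((t,1-t)\|(s,1-s)\big)\le\DD_\alpha(\cA(\rho)\|\cA(\sigma))\le\epsilon$, so for $\alpha>1$ one has $t^\alpha s^{1-\alpha}\le e^{(\alpha-1)\epsilon}$. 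On the other hand $PXP=X^+\ge 0$ forces $P\cA(\rho)P\ge e^{\epsilon'}P\cA(\sigma)P$, hence $s\le e^{-\epsilon'}t$; plugging this in and simplifying yields $t\le e^{-(\alpha-1)(\epsilon'-\epsilon)}$, which by the choice of $\epsilon'$ is exactly $t\le e^{-g(\delta)}=1-\sqrt{1-\delta^2}$. (The degenerate cases $t=0$ and $s=0$ are immediate.)

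Finally I would pass from "$t$ small" to the trace-norm bound --- and this is where the square root, and hence the precise form of $g$, is forced. Purifying $\cA(\rho)$ by a vector $|\psi\rangle$ and noting that $(\Pi\otimes\Id)|\psi\rangle$ purifies $\bar\rho$ with inner product $\langle\psi|(\Pi\otimes\Id)|\psi\rangle=\tr[\Pi\cA(\rho)]=1-t$, Uhlmann's theorem gives $F(\cA(\rho),\bar\rho)\ge(1-t)^2$; the Fuchs--van de Graaf inequality (the $\gamma=1$ case discussed just before Lemma~\ref{Lem:HS-FvdG}, valid also when the second argument is subnormalized) then yields $\frac12\|\cA(\rho)-\bar\rho\|_1\le\sqrt{1-F(\cA(\rho),\bar\rho)}\le\sqrt{1-(1-t)^2}\le\delta$, the last step because $t\le 1-\sqrt{1-\delta^2}$; this completes the proof. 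I expect the \emph{main obstacle} to be exactly this last step: one must handle the possibly subnormalized operator $\bar\rho$ correctly both in the fidelity lower bound and in Fuchs--van de Graaf --- the cleanest route is via the purified distance / generalized fidelity --- and one should double-check that only data processing and the classical reduction of $\DD_\alpha$, valid for the relevant $\alpha>1$, are being invoked.
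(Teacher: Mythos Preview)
Your argument is correct and self-contained; the paper's proof, by contrast, is a one-liner that invokes \cite[Proposition~6.22]{tomamichel2015quantum} (restated in the appendix) as a black box to obtain $D_{\max}^\delta(\cA(\rho)\|\cA(\sigma))\le \DD_\alpha(\cA(\rho)\|\cA(\sigma))+\frac{g(\delta)}{\alpha-1}$ and then appeals to Lemma~\ref{Lem:DP-divergence}. What you have written is essentially a direct proof of that proposition: the pinched candidate $\bar\rho=\Pi\cA(\rho)\Pi$, the binary-test data-processing step bounding $t^\alpha s^{1-\alpha}$, the dominance $s\le e^{-\epsilon'}t$ from $PXP=X^+\ge 0$, and the purified-distance/gentle-measurement conversion $t\mapsto\sqrt{1-(1-t)^2}$ are exactly the standard ingredients behind Tomamichel's result. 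So the two routes coincide at the structural level; your version simply unpacks the cited lemma. The payoff of your approach is that it makes transparent where the specific form of $g(\delta)$ comes from (the purified-distance step) and that only data processing plus the classical reduction of $\DD_\alpha$ are used; the paper's version is shorter but opaque on these points. Your caveat about handling the subnormalized $\bar\rho$ is well placed: the cleanest justification is indeed via the generalized fidelity/purified distance (or equivalently the PSD Fuchs--van~de~Graaf stated in Lemma~\ref{lem:FvdG-PSD}, which gives an even slightly sharper constant), and that is also how Tomamichel proceeds.
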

\begin{proof}
The main ingredients are~\cite[Proposition 6.22]{tomamichel2015quantum} as stated in the appendix and an auxiliary lemma relating $D^\delta_{\max}$ with a similar quantity more commonly found in the literature stated in Lemma~\ref{Lem:aux-Dmax}. From it we have 
\begin{align*}
    D^\delta_{\max}(\rho\|\sigma) \leq \DD_\alpha(\rho\|\sigma) + \frac{g(\delta)}{\alpha-1} \leq \epsilon + \frac{g(\delta)}{\alpha-1}, 
\end{align*}
where the last inequality is by assumption. The proof follows from Lemma \ref{Lem:DP-divergence}. 
\end{proof}
This can be relaxed by noting that $g(\delta)\leq\log\frac2{\delta^2}$, which brings it closer to the classical equivalent~\cite[Proposition 3]{mironov2017renyi}. 

Additionally, we can easily show several desirable properties. 
\begin{corollary}
The following properties hold.
\begin{itemize}
    \item{(Post-processing)} Let $\cA$ be $(\epsilon,\alpha)$-\Renyi differentially private and $\cN$ be an arbitrary quantum channel, then $\cN\circ\cA$ is also $(\epsilon,\alpha)$-\Renyi differentially private. 
    \item{(Parallel composition)} Let $\cA_1$ be $(\epsilon_1,\alpha)$-\Renyi differentially private and $\cA_2$ be $(\epsilon_2,\alpha)$-\Renyi differentially private. Define that $\rho_1\otimes\rho_2\sim\sigma_1\otimes\sigma_2$ if $\rho_1\sim\sigma_1$ and $\rho_2\sim\sigma_2$. Then $\cA_1\otimes\cA_2$ is $(\epsilon_1+\epsilon_2,\alpha)$-\Renyi differentially private on such product states. 
\end{itemize}
\end{corollary}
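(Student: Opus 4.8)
The plan is to derive both properties directly from the structural axioms imposed on the family $\DD_\alpha$ in~\cite{tomamichel2015quantum} (recalled in Appendix~\ref{Sec:App-Renyi}), in complete analogy with the proof of Corollary~\ref{cor:post-parallel}. For post-processing, I would invoke the data-processing inequality for $\DD_\alpha$: for every channel $\cN$ and every pair of states one has $\DD_\alpha(\cN(\omega)\|\cN(\tau)) \leq \DD_\alpha(\omega\|\tau)$. Applying this with $\omega = \cA(\rho)$, $\tau = \cA(\sigma)$ gives $\DD_\alpha(\cN(\cA(\rho))\|\cN(\cA(\sigma))) \leq \DD_\alpha(\cA(\rho)\|\cA(\sigma))$ for all $\rho\sim\sigma$; taking the supremum over neighbouring pairs and using the hypothesis that $\cA$ is $(\epsilon,\alpha)$-\Renyi differentially private yields $\sup_{\rho\sim\sigma}\DD_\alpha(\cN\circ\cA(\rho)\|\cN\circ\cA(\sigma)) \leq \epsilon$, which is exactly the claim.

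For parallel composition, I would first note that by the definition of the neighbouring relation on product states, any pair with $\rho_1\otimes\rho_2 \sim \sigma_1\otimes\sigma_2$ satisfies $\rho_i\sim\sigma_i$ for $i=1,2$, and that $(\cA_1\otimes\cA_2)(\rho_1\otimes\rho_2) = \cA_1(\rho_1)\otimes\cA_2(\rho_2)$. Then I would use the additivity of $\DD_\alpha$ under tensor products, one of the listed axioms, to write
\begin{align*}
\DD_\alpha\big(\cA_1(\rho_1)\otimes\cA_2(\rho_2)\,\big\|\,\cA_1(\sigma_1)\otimes\cA_2(\sigma_2)\big) = \DD_\alpha(\cA_1(\rho_1)\|\cA_1(\sigma_1)) + \DD_\alpha(\cA_2(\rho_2)\|\cA_2(\sigma_2)) \leq \epsilon_1 + \epsilon_2,
\end{align*}
where the bound uses that $\cA_1$ and $\cA_2$ are respectively $(\epsilon_1,\alpha)$- and $(\epsilon_2,\alpha)$-\Renyi differentially private. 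Taking the supremum over all admissible neighbouring product pairs gives the asserted $(\epsilon_1+\epsilon_2,\alpha)$-\Renyi differential privacy.

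There is essentially no obstacle here beyond bookkeeping: unlike the $(\epsilon,\delta)$-case in Corollary~\ref{cor:post-parallel}, where parallel composition was delicate because a tensor product of operators dominated by $B_1$ and $B_2$ need not be dominated by $B_1\otimes B_2$, the \Renyi formulation composes cleanly since $\DD_\alpha$ is genuinely additive (not merely subadditive) on tensor products. The only point worth flagging is that one must restrict to those $\alpha$ for which the chosen family $\DD_\alpha$ actually satisfies the data-processing and additivity axioms, but this is already built into the definition of $\DD_\alpha$ adopted in the previous subsection.
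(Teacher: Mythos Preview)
Your proposal is correct and follows exactly the paper's approach: the paper's proof simply states that ``the first result follows by data-processing and the second by additivity of $\DD_\alpha$,'' which is precisely what you do with full details filled in. Your additional remarks (the restriction to product inputs, the range of $\alpha$, and the contrast with the more delicate $(\epsilon,\delta)$-composition in Corollary~\ref{cor:post-parallel}) are all accurate and add clarity without deviating from the intended argument.
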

\begin{proof}
The first result follows by data-processing and the second by additivity of $\DD_\alpha$. 
\end{proof}
Finally, we remark that our general quantum definition of \Renyi differential privacy implies \Renyi differential privacy in the semi-classical definition in Equation~\eqref{Eq:sc-RDP} because
\begin{align}
    D_{M,\alpha}(\cA(\rho)\|\cA(\sigma)) \leq \DD_\alpha(\cA(\rho)\|\cA(\sigma))
\end{align}
which is a simple consequence of data-processing. 

We have defined \Renyi differential privacy based on the general \Renyi relative entropy $\DD_\alpha$ and while at this point any choice of a particular \Renyi relative entropy would seem justified, we will present some brief arguments that might hint that the sandwiched \Renyi relative entropy $\tilde D_\alpha$ should be the quantity of choice. First, $\tilde D_\alpha$ obeys data-processing for $\alpha\geq\frac12$, which makes it a valid choice for this whole range of $\alpha$, and it is equal to $D_{\max}$ in the limit $\alpha \rightarrow\infty$. Furthermore, $\tilde D_\alpha$ is the minimal \Renyi relative entropy, which means that 
\begin{align}
    \DD_\alpha(\rho\|\sigma) \geq \tilde D_\alpha(\rho\|\sigma)\,,
\end{align}
see e.g.~\cite{tomamichel2015quantum}. This implies that choosing $\tilde D_\alpha$ is the least restrictive choice and for a fixed $\alpha$ this \Renyi differential privacy would be implied by any other choice. Lastly, while we have seen that the measured \Renyi relative entropy has some undesirable properties, the sandwiched \Renyi relative entropy equals the regularized measured \Renyi relative entropy~\cite{tomamichel2015quantum}, 
\begin{align}\label{Eqn:sandwichedRegularized}
    \tilde D_\alpha(\rho\|\sigma) = \lim_{n\rightarrow\infty} \frac1n D_{M,\alpha}(\rho^{\otimes n}\|\sigma^{\otimes n}), 
\end{align}
giving it a close resemblance to the classical \Renyi differential privacy.

\subsection{Hypothesis testing}

At its core, differential privacy is a requirement on the probabilities associated to determining a used input based on the output of some information processing.  To gain a better intuition of the implications of the imposed restrictions, classical differential privacy can as often be reinterpreted in terms of hypothesis testing. This was first considered in~\cite{wasserman2010statistical} for $\epsilon$-differential privacy and then extended to $(\epsilon,\delta)$-differrential privacy in~\cite{kairouz2015composition}. Here we will present and discuss a quantum generalization of this analogy. Besides the intuitive formulation, we will see that it allows for a convenient graphical representation of differential privacy and simple proofs of some additional properties. Before we start, we remark that also \Renyi differential privacy has recently been discussed in terms of hypothesis testing~\cite{balle2020hypothesis} but we leave its quantum generalization for future work. 

The basic setup we will discuss is binary hypothesis testing between a state $\cA(\rho)$, the null hypothesis, and a state $\cA(\sigma)$, the alternative hypothesis. If we were able to discriminate between the two states, we could infer which input state was used. We are therefore interested in the corresponding probabilities of error, which are the Type I error $\alpha=\tr(\identity-M)\cA(\rho)$ of falsely rejecting the null hypothesis and the Type II error $\beta=\tr M\cA(\sigma)$ of falsely accepting it.  

As differential privacy has to hold for all neighbouring input states and all measurements we get the following set of restrictions on the Type I end Type II errors, 
\begin{align}
    1 - \alpha &\leq e^\epsilon \beta + \delta, \label{Eq:HT-bound-1}\\
    1 - \beta &\leq e^\epsilon \alpha + \delta \\
    \beta &\leq e^\epsilon (1-\alpha) + \delta \\
    \alpha &\leq e^\epsilon (1-\beta) + \delta, \label{Eq:HT-bound-4}
\end{align}
which follow by exchanging $\rho\leftrightarrow\sigma$ and $M\leftrightarrow(\identity-M)$ in the definition of differential privacy. Based on these inequalities we can define the privacy region of $(\epsilon,\delta)$-differential privacy as
\begin{align}
    \cR(\epsilon,\delta) = \{ (\alpha, \beta) \;|\; \text{Equations~\eqref{Eq:HT-bound-1}-\eqref{Eq:HT-bound-4} hold} \}. 
\end{align}
Next we define the privacy region of a quantum algorithm $\cA$ as 
\begin{align}
    &\cR(\cA) = \nonumber\\
    &\{ \left(\tr(\identity-M)\cA(\rho), \tr M\cA(\sigma)\right)\;|\; 0\leq M \leq\identity \;\text{and}\; \rho\sim\sigma \}. 
\end{align}
This allows us to state differential privacy in terms of privacy regions. 
\begin{theorem}
A quantum channel $\cA$ is $(\epsilon,\delta)$-differentially private if and only if
\begin{align}
    \cR(\cA) \subseteq \cR(\epsilon,\delta). 
\end{align}
\end{theorem}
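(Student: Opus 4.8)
The statement is, at bottom, a reformulation of Definition~\ref{Def:QDP}, so the plan is to unpack both the definition of $(\epsilon,\delta)$-differential privacy and the definition of the region $\cR(\cA)$ and verify the two inclusions directly. The one structural observation I would isolate first is that the four defining inequalities of $\cR(\epsilon,\delta)$, namely Equations~\eqref{Eq:HT-bound-1}--\eqref{Eq:HT-bound-4}, are precisely the four instances of the single privacy inequality $\tr(M'\cA(\rho'))\le e^\epsilon\tr(M'\cA(\sigma'))+\delta$ obtained by letting $M'$ range over $\{M,\identity-M\}$ and the ordered pair $(\rho',\sigma')$ over $\{(\rho,\sigma),(\sigma,\rho)\}$; here one uses that the neighbouring relation is symmetric, so that $\rho\sim\sigma$ also permits applying the definition to the pair $(\sigma,\rho)$.

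For the ``only if'' direction I would assume $\cA$ is $(\epsilon,\delta)$-differentially private and take an arbitrary point $(\alpha,\beta)\in\cR(\cA)$. By definition of $\cR(\cA)$ there are $0\le M\le\identity$ and $\rho\sim\sigma$ with $\alpha=\tr((\identity-M)\cA(\rho))$ and $\beta=\tr(M\cA(\sigma))$, hence $1-\alpha=\tr(M\cA(\rho))$ and $1-\beta=\tr((\identity-M)\cA(\sigma))$. Applying the privacy inequality with measurement $M$ to the pair $(\rho,\sigma)$ yields $1-\alpha\le e^\epsilon\beta+\delta$, which is \eqref{Eq:HT-bound-1}; with $\identity-M$ to $(\rho,\sigma)$ yields $\alpha\le e^\epsilon(1-\beta)+\delta$, which is \eqref{Eq:HT-bound-4}; with $M$ to $(\sigma,\rho)$ yields $\beta\le e^\epsilon(1-\alpha)+\delta$, which is \eqref{Eq:HT-bound-3}; and with $\identity-M$ to $(\sigma,\rho)$ yields $1-\beta\le e^\epsilon\alpha+\delta$, which is \eqref{Eq:HT-bound-2}. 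Thus $(\alpha,\beta)\in\cR(\epsilon,\delta)$, and since the point was arbitrary, $\cR(\cA)\subseteq\cR(\epsilon,\delta)$.

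For the ``if'' direction I would assume $\cR(\cA)\subseteq\cR(\epsilon,\delta)$ and fix any $0\le M\le\identity$ and any $\rho\sim\sigma$. By construction the pair $\big(\tr((\identity-M)\cA(\rho)),\,\tr(M\cA(\sigma))\big)$ lies in $\cR(\cA)$, hence in $\cR(\epsilon,\delta)$, so it satisfies \eqref{Eq:HT-bound-1}; reading that inequality off for this point gives exactly $\tr(M\cA(\rho))\le e^\epsilon\tr(M\cA(\sigma))+\delta$. Since $M$ and the neighbouring pair were arbitrary, $\cA$ is $(\epsilon,\delta)$-differentially private.

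I do not expect a genuine obstacle here: the content is entirely a translation between definitions, and the only point that requires care is matching each substitution $M\leftrightarrow\identity-M$, $\rho\leftrightarrow\sigma$ to the correct one of the four inequalities and making the (implicit) symmetry of $\sim$ explicit. If desired, the argument can be streamlined by observing that $\cR(\cA)$ is closed under $(\alpha,\beta)\mapsto(1-\alpha,1-\beta)$ (via $M\mapsto\identity-M$) and under $(\alpha,\beta)\mapsto(1-\beta,1-\alpha)$ (via the full swap), so that demanding \eqref{Eq:HT-bound-1} alone on all of $\cR(\cA)$ already forces the other three inequalities; both the ``only if'' and ``if'' directions then reduce to a single application of \eqref{Eq:HT-bound-1}.
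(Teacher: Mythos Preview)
Your proof is correct and follows exactly the route the paper intends: the paper's own proof consists of the single sentence that the result is ``a direct consequence of the above definitions'' together with the remark that any one of the four inequalities would suffice, which is precisely the streamlined version you describe at the end. One small caveat: in the paper only the first and fourth of the four displayed inequalities carry labels, so your references \texttt{Eq:HT-bound-2} and \texttt{Eq:HT-bound-3} would not resolve; refer to them descriptively or by their position instead.
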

\begin{proof}
The proof is a direct consequence of the above definitions. Note that we could have equivalently defined $\cR(\epsilon,\delta)$ with any subset of the Equations~\eqref{Eq:HT-bound-1}-\eqref{Eq:HT-bound-4}, for example only picking the first one, however this representation will be beneficial later on. 
\end{proof}
We continue by stating some properties of the risk regions. 
\begin{lemma}
The following holds.
\begin{itemize}
    \item{(Concatenation)} For arbitrary quantum algorithms $\cA$ and $\cN$ we have
    \begin{align}
        \cR(\cN\circ\cA) \subseteq \cR(\cA). 
    \end{align}
    \item{(Symmetry)} It holds that $\cR(\epsilon,\delta)$ is symmetric with respect to the line $\alpha+\beta=1$. 
\end{itemize}
\end{lemma}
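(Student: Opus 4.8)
For the concatenation property the plan is to use the adjoint channel. A generic point of $\cR(\cN\circ\cA)$ has the form $\big(\tr((\identity-M)\,\cN(\cA(\rho))),\,\tr(M\,\cN(\cA(\sigma)))\big)$ for some $0\le M\le\identity$ and $\rho\sim\sigma$. First I would pass to the Heisenberg picture, writing $\tr(M\,\cN(\tau))=\tr(\cN^\dagger(M)\,\tau)$, and recall that since $\cN$ is CPTP, its adjoint $\cN^\dagger$ is completely positive and unital, so $M':=\cN^\dagger(M)$ again satisfies $0\le M'\le\identity$ and moreover $\cN^\dagger(\identity-M)=\identity-M'$. Substituting, the point equals $\big(\tr((\identity-M')\,\cA(\rho)),\,\tr(M'\,\cA(\sigma))\big)$ with the \emph{same} neighbouring pair $\rho\sim\sigma$, which by definition lies in $\cR(\cA)$. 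This gives $\cR(\cN\circ\cA)\subseteq\cR(\cA)$.

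For the symmetry property, the first step is to identify the reflection map across the line $\alpha+\beta=1$ as $(\alpha,\beta)\mapsto(1-\beta,\,1-\alpha)$. Then I would simply check that the defining system of inequalities~\eqref{Eq:HT-bound-1}--\eqref{Eq:HT-bound-4} of $\cR(\epsilon,\delta)$ is invariant under this substitution: plugging $\alpha'=1-\beta$, $\beta'=1-\alpha$ into \eqref{Eq:HT-bound-1} yields \eqref{Eq:HT-bound-3}, into \eqref{Eq:HT-bound-2} yields \eqref{Eq:HT-bound-4}, and vice versa, so the four constraints are merely permuted among themselves. Hence $(\alpha,\beta)\in\cR(\epsilon,\delta)$ iff $(1-\beta,1-\alpha)\in\cR(\epsilon,\delta)$, which is exactly the claimed symmetry. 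This is also the conceptual reason the theorem statement defined $\cR(\epsilon,\delta)$ with all four inequalities rather than a minimal subset: the redundant pairs are swapped by the reflection.

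Neither part poses a real obstacle; the only points requiring care are (i) invoking unitality/complete positivity of $\cN^\dagger$ to guarantee $0\le\cN^\dagger(M)\le\identity$ in the concatenation argument, and (ii) correctly pinning down the reflection $(\alpha,\beta)\mapsto(1-\beta,1-\alpha)$ and then matching up which inequality maps to which, which is a short bookkeeping check.
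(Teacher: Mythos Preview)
Your proposal is correct and follows essentially the same approach as the paper: the paper likewise passes to the Heisenberg picture via the completely positive unital adjoint $\cN^\dagger$ to conclude $\cR(\cN\circ\cA)\subseteq\cR(\cA)$, and for symmetry simply refers to examining the four inequalities, which you have spelled out by identifying the reflection $(\alpha,\beta)\mapsto(1-\beta,1-\alpha)$ and checking that it permutes \eqref{Eq:HT-bound-1}--\eqref{Eq:HT-bound-4} pairwise. Your write-up is in fact more explicit than the paper's one-line justification.
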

\begin{proof}
The first statement follows by noting that 
\begin{align}
    \cR(\cN\circ\cA) = &\{ \left(\tr(\identity-N)\cA(\rho), \tr N\cA(\sigma)\right)  \\
    &\;|\; N=\cN^\dagger(M) \;\text{and}\; 0\leq M \leq\identity \;\text{and}\; \rho\sim\sigma \} \nonumber\\
    \subseteq &\cR(\cA),
\end{align}
which follows because $\cN^\dagger$ is completely positive and unital. The second statement follows directly from examining Equations~\eqref{Eq:HT-bound-1}-\eqref{Eq:HT-bound-4}. 
\end{proof}

\begin{figure*}
    \centering
    \begin{minipage}{0.47\textwidth}
        \centering
        \includegraphics[width=0.9\textwidth]{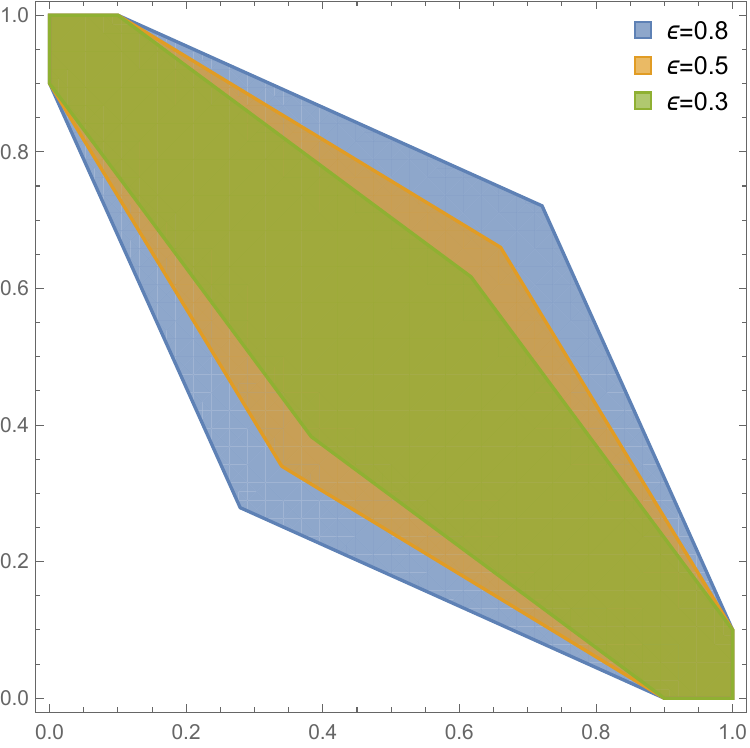} %
    \end{minipage}\hfill
    \begin{minipage}{0.47\textwidth}
        \centering
        \includegraphics[width=0.9\textwidth]{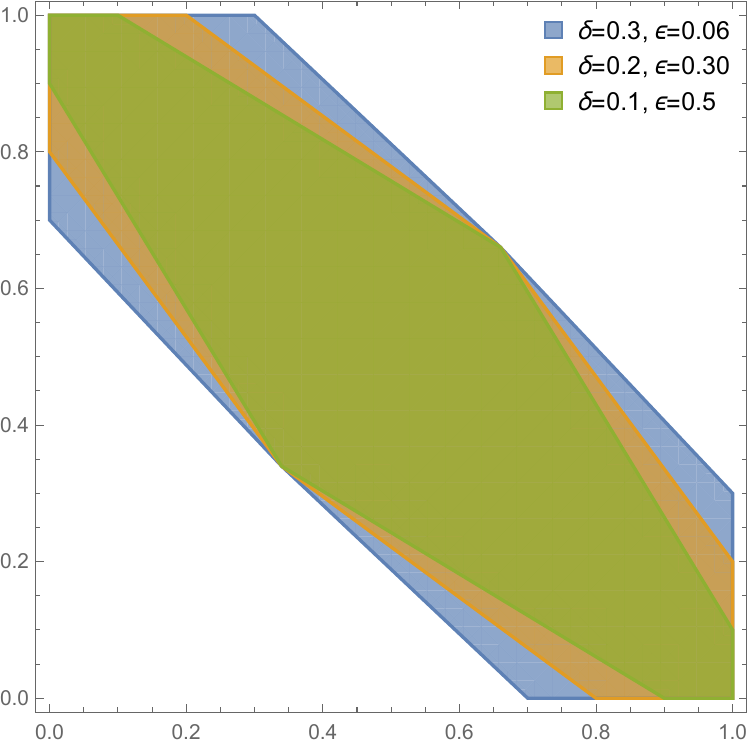} %
    \end{minipage}
\caption{\label{Fig:R-plots} Examples for $\cR(\epsilon,\delta)$. Left: plotted for $\delta=0.1$ and different values of $\epsilon$. Right: plotted for different values of $\delta$ and $\epsilon$, with $\epsilon$ chosen according to Lemma~\ref{Lem:HT-de-to}.}
\end{figure*}

Let us have a look at the graphical representation implied by the above definitions. In Figure~\ref{Fig:R-plots} we give several examples of $\cR(\epsilon,\delta)$ that illustrate the risk region of differential privacy. We also want to give a concrete numerical example of how the risk region of a channel contracts. Let's consider a simple example where only two qubit input states $\rho$ and $\sigma$ are available which are considered neighbouring. The states are chosen such that $E_1(\rho,\sigma)\leq \frac13$. For simplicity we consider a trivial algorithm to which we now want to add depolarizing noise such that the outputs become $(0.2,0.01)$-differentially private. From Equation~\eqref{Eq:Dp-Eg-bound} we can estimate that this should be the case if we choose $p\approx0.72$. To verify our observation numerically we simulate $\cR(\cD_p)$ by drawing random POVMs and compare the resulting pairs $(\alpha,\beta)$ to the desired privacy region. We can observe in Figure~\ref{Fig:R-Dp-plots} that this is indeed consistent with what we expected, namely that for $p=0.72$ all drawn points are within $\cR(0.2,0.01)$ while for smaller values of $p$ the noise is clearly not sufficient for $(0.2,0.01)$-differential privacy. 

Finally, we will see that phrasing differential privacy in terms of hypothesis testing does not only have advantages in terms of intuition, but also allows us to easily prove some useful results. 
\begin{lemma}\label{Lem:HT-de-to}
If $\cA$ is $(\epsilon,\delta)$-differentially private, then it is also $(\tilde\epsilon,\tilde\delta)$-differrentially private with $\tilde\delta\geq\delta$ and
\begin{align}
    e^{\tilde\epsilon} \geq \frac{(1-\tilde\delta)}{(1-\delta)}(1+e^\epsilon)-1. 
\end{align}
\end{lemma}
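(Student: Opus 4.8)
The plan is to exploit the characterization of differential privacy by privacy regions established just above: $\cA$ is $(\epsilon,\delta)$-differentially private exactly when $\cR(\cA)\subseteq\cR(\epsilon,\delta)$, and $(\tilde\epsilon,\tilde\delta)$-differential privacy is in turn equivalent to $\cR(\cA)\subseteq\cR(\tilde\epsilon,\tilde\delta)$. Since the two coordinates of any point of $\cR(\cA)$ are $\tr((\identity-M)\cA(\rho))$ and $\tr(M\cA(\sigma))$, which are probabilities, we always have $\cR(\cA)\subseteq[0,1]^2$. Hence it suffices to prove the purely geometric inclusion $\cR(\epsilon,\delta)\cap[0,1]^2\subseteq\cR(\tilde\epsilon,\tilde\delta)$. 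Both sides are convex polygons: $\cR(\epsilon,\delta)$ is the intersection of the four half-planes cut out by \eqref{Eq:HT-bound-1}--\eqref{Eq:HT-bound-4}, and intersecting with the unit square preserves convexity. A convex polygon is the convex hull of its vertices, and $\cR(\tilde\epsilon,\tilde\delta)$ is convex, so the inclusion holds if and only if every vertex of $\cR(\epsilon,\delta)\cap[0,1]^2$ satisfies \eqref{Eq:HT-bound-1}--\eqref{Eq:HT-bound-4} with the parameters $(\tilde\epsilon,\tilde\delta)$.

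First I would locate the vertices. The lines associated with \eqref{Eq:HT-bound-1} and \eqref{Eq:HT-bound-4} are parallel, and likewise those from \eqref{Eq:HT-bound-2} and \eqref{Eq:HT-bound-3}, so $\cR(\epsilon,\delta)$ is a parallelogram. Its two vertices on the diagonal $\alpha=\beta$ are $V_1=\left(\frac{1-\delta}{1+e^\epsilon},\frac{1-\delta}{1+e^\epsilon}\right)$ and $V_2=\left(\frac{e^\epsilon+\delta}{1+e^\epsilon},\frac{e^\epsilon+\delta}{1+e^\epsilon}\right)$, obtained by intersecting \eqref{Eq:HT-bound-1}$\cap$\eqref{Eq:HT-bound-2} and \eqref{Eq:HT-bound-3}$\cap$\eqref{Eq:HT-bound-4}; the remaining two vertices of the parallelogram lie outside $[0,1]^2$, and clipping to the square replaces each of them by square-boundary points, namely $(1-\delta,0),(1,0),(1,\delta)$ and their mirror images $(0,1-\delta),(0,1),(\delta,1)$. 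Both $\cR(\epsilon,\delta)$ and $\cR(\tilde\epsilon,\tilde\delta)$, as well as this list of vertices, are invariant under the reflections $(\alpha,\beta)\mapsto(\beta,\alpha)$ and $(\alpha,\beta)\mapsto(1-\alpha,1-\beta)$, and these two symmetries permute the inequalities \eqref{Eq:HT-bound-1}--\eqref{Eq:HT-bound-4} transitively. Consequently it is enough to verify one inequality, say \eqref{Eq:HT-bound-1}, at a single representative of each symmetry orbit of vertices.

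The decisive orbit is the one through $V_1$. Evaluating \eqref{Eq:HT-bound-1} at $V_1$ with parameters $(\tilde\epsilon,\tilde\delta)$ gives $1-\frac{1-\delta}{1+e^\epsilon}\le e^{\tilde\epsilon}\frac{1-\delta}{1+e^\epsilon}+\tilde\delta$, i.e.\ $(1-\tilde\delta)(1+e^\epsilon)\le(1-\delta)(1+e^{\tilde\epsilon})$, which rearranges to exactly $e^{\tilde\epsilon}\ge\frac{1-\tilde\delta}{1-\delta}(1+e^\epsilon)-1$, the stated hypothesis. For the orbit through $V_2$ and the orbits through the square-boundary vertices, the inequalities to be checked reduce, after plugging in coordinates, to $\tilde\delta\ge\delta$ together with $e^{\tilde\epsilon}\ge1$, and follow by monotonicity; here one uses that a valid privacy parameter has $\epsilon\ge0$, so $e^\epsilon\ge1$, and that $\tilde\delta\ge\delta$ together with the stated bound keeps $e^{\tilde\epsilon}\ge1$ in the relevant range. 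Collecting the orbit checks yields $\cR(\epsilon,\delta)\cap[0,1]^2\subseteq\cR(\tilde\epsilon,\tilde\delta)$ and hence $(\tilde\epsilon,\tilde\delta)$-differential privacy.

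The only genuinely delicate point I anticipate is the vertex bookkeeping, and in particular the necessity of intersecting with $[0,1]^2$: the raw parallelogram $\cR(\epsilon,\delta)$ has two vertices outside the unit square that would spuriously obstruct the inclusion, even though $\cR(\cA)$ never reaches them — so the argument genuinely relies on $\cR(\cA)\subseteq[0,1]^2$. As a shortcut that sidesteps most of the enumeration, one can instead restrict \eqref{Eq:HT-bound-1} and \eqref{Eq:HT-bound-2} to the diagonal: for $(\alpha,\alpha)\in\cR(\cA)$ they force $\alpha\ge\frac{1-\delta}{1+e^\epsilon}$, and the inequality $\frac{1-\delta}{1+e^\epsilon}\ge\frac{1-\tilde\delta}{1+e^{\tilde\epsilon}}$ is once more equivalent to the claimed bound; a short symmetry/convexity argument then promotes this diagonal estimate to all of $\cR(\cA)$, giving the same conclusion.
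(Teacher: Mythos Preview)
Your approach is essentially the paper's: both use the privacy-region characterization and reduce everything to the diagonal corner point $\bigl(\tfrac{1-\delta}{1+e^\epsilon},\tfrac{1-\delta}{1+e^\epsilon}\bigr)$, and your ``shortcut'' paragraph is almost verbatim the paper's entire argument. Your treatment is more careful in that you intersect with $[0,1]^2$ before arguing the region inclusion (the paper states the raw inclusion $\cR(\epsilon,\delta)\subseteq\cR(\tilde\epsilon,\tilde\delta)$ and only sketches why the corner point suffices), so your added vertex bookkeeping fills in what the paper leaves implicit.
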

\begin{proof}
This follows easily from the graphical representation. We want to prove that 
\begin{align*}
    \cR(\epsilon,\delta) \subseteq \cR(\tilde\epsilon,\tilde\delta). 
\end{align*}
Let us consider the lower bounds on the risk region. It can easily be checked that they coincide at the point
\begin{align*}
    (\alpha^*,\beta^*) = (\frac{1-\delta}{1+e^\epsilon}, \frac{1-\delta}{1+e^\epsilon}), 
\end{align*}
which gives a corner point of the region. 
Since we require $\tilde\delta\geq\delta$, it suffices if
\begin{align*}
    \frac{1-\tilde\delta}{1+e^{\tilde\epsilon}} \leq \frac{1-\delta}{1+e^\epsilon}.
\end{align*}
This gives the claimed bound on $\tilde\epsilon$. 
\end{proof}
This result allows us to observe a certain trade-off between $\epsilon$ and $\delta$, in particular, by raising $\delta$ one can get differential privacy with a better value of $\epsilon$. This is also illustrated in the right part of Figure~\ref{Fig:R-plots}. 

\begin{figure*}
    \centering
    \begin{minipage}{0.33\textwidth}
        \centering
        \includegraphics[width=0.99\textwidth]{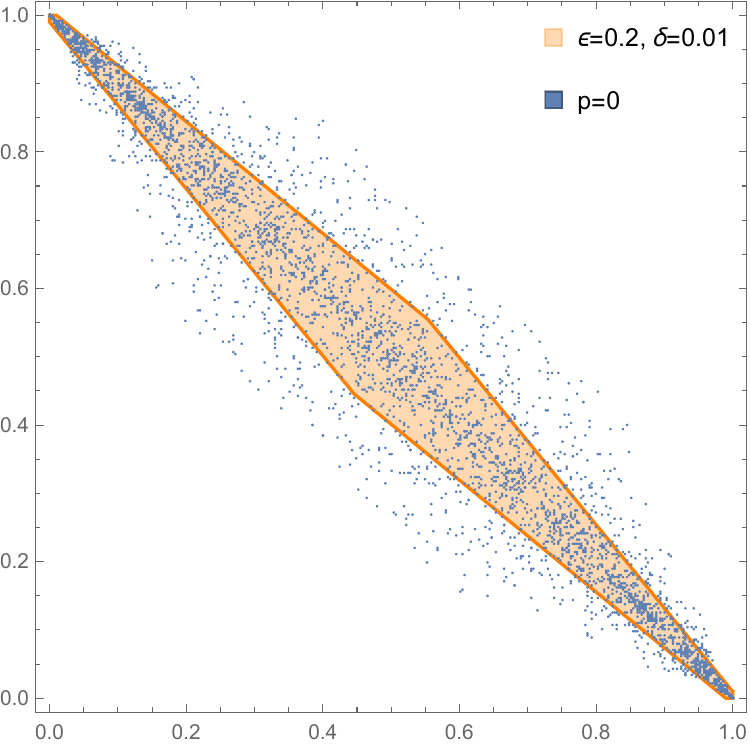} %
    \end{minipage}\hfill
    \begin{minipage}{0.33\textwidth}
        \centering
        \includegraphics[width=0.99\textwidth]{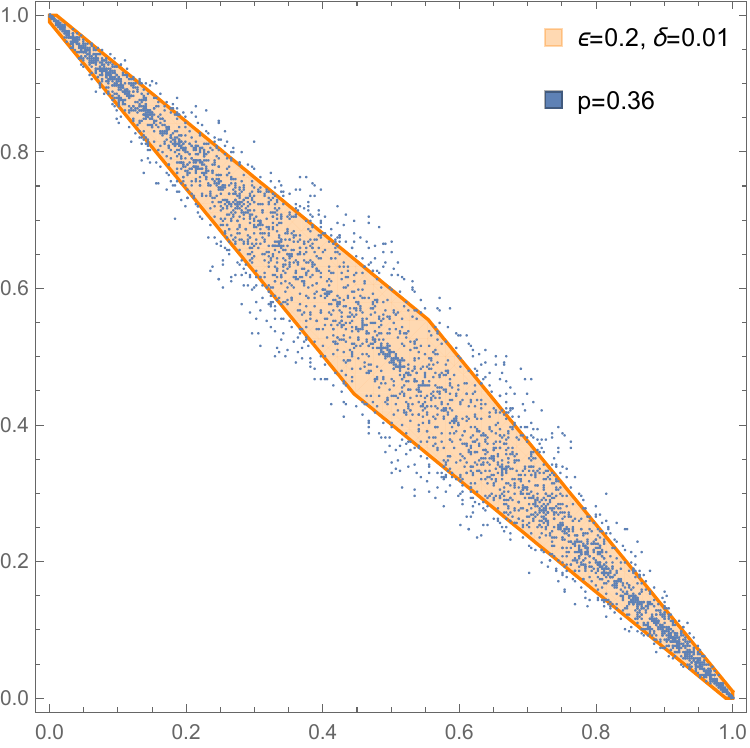} %
    \end{minipage}\hfill
    \begin{minipage}{0.33\textwidth}
        \centering
        \includegraphics[width=0.99\textwidth]{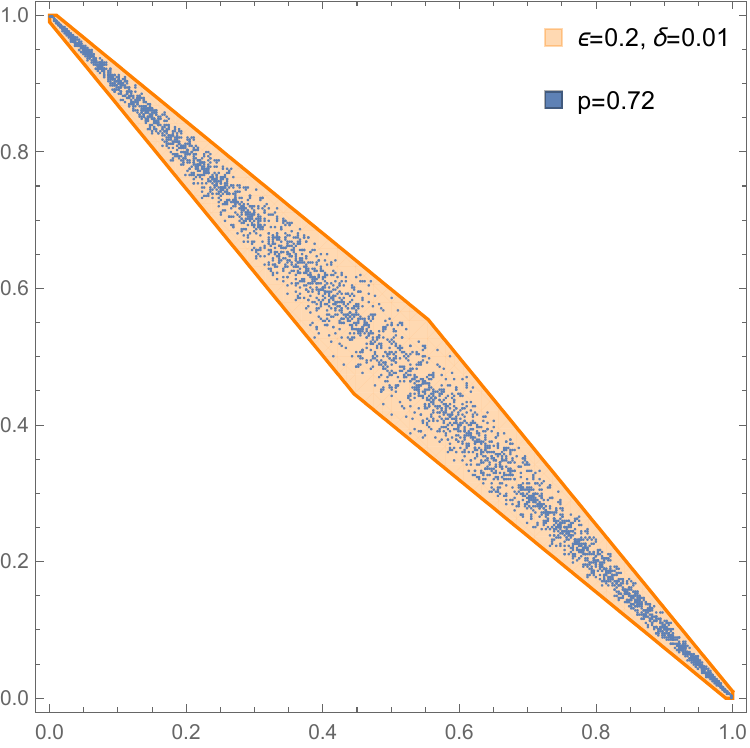} 
    \end{minipage}
\caption{\label{Fig:R-Dp-plots} Numerical values for points in $\cR(\cD_p)$ for different values of $p$. Points are based on two qubit states with $E_1(\rho,\sigma)\leq \frac13$ and 1000 randomly drawn POVMs. The orange region in the background corresponds to $\cR(0.2,0.01)$. }
\end{figure*}

\section{Conclusions}

In this work we gave a new approach to exploring quantum differential privacy via information theoretic tools. In particular, we used the quantum hockey-stick divergence to give a simple framework in which we can bound differential privacy parameters for practically relevant noise models such as quantum circuits and quantum neural networks implemented on near-term quantum devices. This includes comparing local and global depolarizing noise and contrasting achieving differential privacy with an undesirable decay in trace distance. 
On the way we showed several new properties of the said divergence and gave it a new operational interpretation. 

Given that our approach promises to be simpler and more powerful than the previously used ones we expect it to play a crucial role going forward when investigating differential privacy. 

Naturally we are left with some open problems. In Lemma~\ref{Lem:meas-rel-ent} we showed a bound on the measured relative entropy of the outputs of a differentially private algorithm. Classically several such results are known, including bounds on other entropic quantities such as the mutual information. An interesting open problem going forward is to find bounds on fully quantum quantities such as the quantum relative entropy or quantum mutual information. This would allow investigating connections to other privacy related quantities such as the quantum privacy funnel~\cite{datta2019convexity} generalizing work from the classical setting~\cite{Salamatin2020privacy}.

\section*{Acknowledgments} 
We would like to thank Theshani Nuradha, Ziv Goldfeld, Mark Wilde and an anonymous referee for spotting a mistake in a previous version of the work. 
This project has received funding from the European Union's Horizon 2020 research and innovation programme under the Marie Sklodowska-Curie Grant Agreement No. H2020-MSCA-IF-2020-101025848.  
DSF acknowledges financial support from the VILLUM FONDEN via the QMATH Centre of Excellence (Grant no. 10059)  the QuantERA ERA-NET Cofund in Quantum Technologies implemented within the European Union’s Horizon 2020 Programme (QuantAlgo project) via the Innovation Fund Denmark and from the European Research Council (grant agreement no. 81876). CR is partially supported by a Junior Researcher START Fellowship from the MCQST.  CR acknowledges funding by the Deutsche Forschungsgemeinschaft (DFG, German Research Foundation) under Germany's Excellence Strategy EXC-2111 390814868.

\bibliographystyle{abbrv}
\bibliography{library}

\appendix
\section{Some Lemmas} \label{Sec:Lemmas}
The following is a generalization of the Fuchs-van-de-Graaf inequality to general positive semi-definite operators proven in~\cite[Supplementary~Lemma~3]{CKW14}, see also~\cite[Appendix B]{wilde2020amortized} for an alternative proof. 
\begin{lemma}[\cite{CKW14}]\label{lem:FvdG-PSD}
For positive semi-definite, trace class operators $A$ and $B$ acting on a separable Hilbert space, we have that
\begin{equation}
\left\Vert A-B\right\Vert _{1}^{2}+4\left\Vert A^{1/2}B^{1/2}\right\Vert_{1}^{2}\leq\big(\operatorname{Tr}[A+B]\big)^2.
\end{equation}
\end{lemma}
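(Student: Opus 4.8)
The plan is to reduce the inequality to an \emph{exact} identity for pairs of (possibly unnormalized) vectors, pushed from $\cH\otimes\cH$ down to $\cH$ by data processing. Set $a:=\operatorname{Tr}[A]$, $b:=\operatorname{Tr}[B]$; the case $ab=0$ is immediate, so assume $a,b>0$. Since $A,B$ are trace class, $\sqrt A,\sqrt B$ are Hilbert--Schmidt, hence $\sqrt A\sqrt B$ is trace class, and the canonical purifications $|\phi^A\rangle:=\sum_i(\sqrt A|i\rangle)\otimes|i\rangle$, $|\phi^B\rangle:=\sum_j(\sqrt B|j\rangle)\otimes|j\rangle$ are bona fide vectors of $\cH\otimes\cH$ (the series converge since $\sum_i\|\sqrt A|i\rangle\|^2=\operatorname{Tr}[A]<\infty$), with $\langle\phi^A|\phi^A\rangle=a$, $\langle\phi^B|\phi^B\rangle=b$ and $\operatorname{Tr}_2|\phi^A\rangle\langle\phi^A|=A$, $\operatorname{Tr}_2|\phi^B\rangle\langle\phi^B|=B$.

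The first step is the two-vector identity: for any $|u\rangle,|v\rangle$ in a Hilbert space,
\begin{align*}
\big\||u\rangle\langle u|-|v\rangle\langle v|\big\|_1^2+4\,|\langle u|v\rangle|^2=\big(\|u\|^2+\|v\|^2\big)^2.
\end{align*}
To see this, note that $T:=|u\rangle\langle u|-|v\rangle\langle v|$ is self-adjoint of rank at most $2$, so its nonzero eigenvalues $\lambda_\pm$ obey $\lambda_++\lambda_-=\operatorname{Tr}T=\|u\|^2-\|v\|^2$ and $\lambda_+^2+\lambda_-^2=\operatorname{Tr}T^2=\|u\|^4+\|v\|^4-2|\langle u|v\rangle|^2$; subtracting gives $\lambda_+\lambda_-=|\langle u|v\rangle|^2-\|u\|^2\|v\|^2\le 0$ by Cauchy--Schwarz, so $\|T\|_1=|\lambda_+-\lambda_-|$ and $\|T\|_1^2=(\lambda_++\lambda_-)^2-4\lambda_+\lambda_-=(\|u\|^2+\|v\|^2)^2-4|\langle u|v\rangle|^2$.

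The second step assembles the bound. Fix any unitary $U$ on the second tensor factor; then $\operatorname{Tr}_2\big[(\Id\otimes U)|\phi^B\rangle\langle\phi^B|(\Id\otimes U^\dagger)\big]=B$, so monotonicity of the trace norm under the partial trace yields
\begin{align*}
\|A-B\|_1\le\big\||\phi^A\rangle\langle\phi^A|-(\Id\otimes U)|\phi^B\rangle\langle\phi^B|(\Id\otimes U^\dagger)\big\|_1.
\end{align*}
Feeding this into the two-vector identity with $|u\rangle=|\phi^A\rangle$, $|v\rangle=(\Id\otimes U)|\phi^B\rangle$ gives $\|A-B\|_1^2+4\,|\langle\phi^A|(\Id\otimes U)|\phi^B\rangle|^2\le(a+b)^2$. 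A direct computation in the chosen basis identifies $\langle\phi^A|(\Id\otimes U)|\phi^B\rangle=\operatorname{Tr}[\sqrt A\sqrt B\,U^{T}]$, and $\sup_U|\operatorname{Tr}[\sqrt A\sqrt B\,U^{T}]|=\|\sqrt A\sqrt B\|_1=\|A^{1/2}B^{1/2}\|_1$. Taking the supremum over $U$ only at this last stage (so it need not be attained, which sidesteps separable-space subtleties) produces $\|A-B\|_1^2+4\|A^{1/2}B^{1/2}\|_1^2\le(\operatorname{Tr}[A+B])^2$, which is the claim.

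The conceptual heart is the rank-$\le 2$ identity; everything else is bookkeeping. The one place I would slow down is the conjugation/transpose convention relating $\langle\phi^A|(\Id\otimes U)|\phi^B\rangle$ to $\operatorname{Tr}[\sqrt A\sqrt B\,U^{T}]$, together with checking that all the traces above converge absolutely (guaranteed by $\sqrt A,\sqrt B$ being Hilbert--Schmidt). A purification-free alternative is to normalize, $\rho=A/a$, $\sigma=B/b$, and invoke the usual Fuchs--van-de-Graaf inequality for states, but estimating $\|a\rho-b\sigma\|_1$ then appears to require essentially the same rank-$2$ computation anyway, so I would take the route above.
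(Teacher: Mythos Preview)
The paper does not prove this lemma itself; it merely cites \cite[Supplementary~Lemma~3]{CKW14} and \cite[Appendix~B]{wilde2020amortized}. So there is no in-paper argument to compare against, and your proposal must be judged on its own.

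Your proof is correct. The rank-$\le 2$ identity for unnormalized vectors is verified cleanly (the sign check $\lambda_+\lambda_-\le 0$ via Cauchy--Schwarz is the key step), and the lift to mixed operators via canonical purifications, contractivity of the partial trace, and the Uhlmann-type optimization over the ancilla unitary is exactly the standard route. The basis computation $\langle\phi^A|(\Id\otimes U)|\phi^B\rangle=\tr[\sqrt A\sqrt B\,U^{T}]$ is right with your conventions, and taking the supremum over $U$ rather than requiring an optimizer correctly handles the infinite-dimensional case; since $\sqrt A\sqrt B$ is trace class, that supremum equals $\|\sqrt A\sqrt B\|_1$. This purification-plus-data-processing strategy is in fact the approach taken in the cited alternative proof of \cite{wilde2020amortized}, so your argument is not only valid but also aligned with one of the references the paper points to.
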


Recently it was proven in~\cite{fawzi2022space} that for any quantum channel $\cT$ with $d$ its input and output dimension,
\begin{align}
    \eta_1(\cT) \leq \sqrt{1-\frac{\lambda_{\min}(\cT^\dagger\circ\cT)}{d^2}}, 
\end{align}
where $\lambda_{\min}(\cT^\dagger\circ\cT)=\min_{\Psi,\Phi} \langle\Psi|(\cT^\dagger\circ\cT)(|\Phi\rangle\langle\Phi|)|\Psi\rangle$. As we would like to generalize this result to the hockey-stick divergence, we extract the following lemma from their proof. 
\begin{lemma}[\cite{fawzi2022space}]\label{Lem:Fid-eigenvalue}
For any quantum channel $\cT$ with $d$ its input and output dimension and pure input states $\Psi$ and $\Phi$, 
\begin{align}
    F(\cT(\Psi),\cT(\Phi)) \geq \frac{\lambda_{\min}(\cT^\dagger\circ\cT)}{d^2}.
\end{align}
\end{lemma}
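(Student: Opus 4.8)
The plan is to lower‑bound the output fidelity by the Hilbert--Schmidt overlap $\tr[\cT(\Psi)\cT(\Phi)]$ of the two output states, and then to recognise that overlap as precisely the quadratic form whose minimum defines $\lambda_{\min}(\cT^\dagger\circ\cT)$.

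First I would record the elementary bound $F(\rho,\sigma)\geq\tr(\rho\sigma)$, valid for all states $\rho,\sigma$. It follows from the comparison $\|A\|_1\geq\|A\|_2$ between Schatten norms (the sum of the singular values of $A$ dominates their Euclidean norm), applied to $A=\sqrt{\rho}\sqrt{\sigma}$:
\begin{align*}
F(\rho,\sigma)=\big\|\sqrt{\rho}\sqrt{\sigma}\big\|_1^2\;\geq\;\big\|\sqrt{\rho}\sqrt{\sigma}\big\|_2^2=\tr\!\big(\sqrt{\sigma}\sqrt{\rho}\sqrt{\rho}\sqrt{\sigma}\big)=\tr(\rho\sigma).
\end{align*}
Since $\cT$ is a channel, $\cT(\Psi)$ and $\cT(\Phi)$ are genuine normalised states, so this applies with no sub‑normalisation correction. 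Next I would move the second channel onto the first argument via the defining property of the adjoint, $\tr[A\,\cT(B)]=\tr[\cT^\dagger(A)\,B]$, and use $\Psi=|\psi\rangle\langle\psi|$:
\begin{align*}
F\big(\cT(\Psi),\cT(\Phi)\big)\;\geq\;\tr\!\big[\cT(\Psi)\,\cT(\Phi)\big]=\tr\!\big[\Psi\,(\cT^\dagger\circ\cT)(\Phi)\big]=\langle\psi|\,(\cT^\dagger\circ\cT)(\Phi)\,|\psi\rangle.
\end{align*}
By the definition of $\lambda_{\min}(\cT^\dagger\circ\cT)$ as the minimum of this expression over pure $\Psi,\Phi$, the right‑hand side is at least $\lambda_{\min}(\cT^\dagger\circ\cT)$, which is non‑negative because $\cT^\dagger\circ\cT$ is completely positive; in particular it is at least $\lambda_{\min}(\cT^\dagger\circ\cT)/d^2$, which is the claim.

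I do not expect a real obstacle here; the only slightly delicate points are that one must invoke the \emph{less commonly used} direction of the Schatten‑norm comparison, $\|A\|_1\geq\|A\|_2$, and must use that $\cT(\Psi),\cT(\Phi)$ are normalised. In fact the chain above yields the strictly sharper bound $F(\cT(\Psi),\cT(\Phi))\geq\lambda_{\min}(\cT^\dagger\circ\cT)$, and the factor $1/d^2$ is kept only to match the statement quoted from~\cite{fawzi2022space}, where $\lambda_{\min}$ is expressed through the Choi matrix $C_{\cT^\dagger\circ\cT}$. To present it in that language one writes $\tr[\cT(\Psi)\cT(\Phi)]=(\langle\psi|\otimes\langle\bar\phi|)\,C_{\cT^\dagger\circ\cT}\,(|\psi\rangle\otimes|\bar\phi\rangle)\geq\lambda_{\min}(C_{\cT^\dagger\circ\cT})$, where $|\bar\phi\rangle$ is the entrywise complex conjugate of $|\phi\rangle$, using the Choi representation $\cS(X)=\tr_2[(\Id\otimes X^{T})\,C_{\cS}]$; the tensor‑multiplicativity $\lambda_{\min}\big(C_{(\cN^\dagger\circ\cN)^{\otimes k}}\big)=\lambda_{\min}(C_{\cN^\dagger\circ\cN})^{k}$ for positive semidefinite Choi matrices then produces the $k$‑qubit form used in the main text.
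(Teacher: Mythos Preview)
Your argument is correct. The paper does not give its own proof of this lemma; it merely records it as ``extracted'' from the proof in~\cite{fawzi2022space}, so there is nothing to compare against. The route you take --- $F(\rho,\sigma)\ge\tr(\rho\sigma)$ via $\|A\|_1\ge\|A\|_2$, then the adjoint identity $\tr[\cT(\Psi)\cT(\Phi)]=\langle\psi|(\cT^\dagger\circ\cT)(\Phi)|\psi\rangle$, then the definition $\lambda_{\min}(\cT^\dagger\circ\cT)=\min_{\Psi,\Phi}\langle\Psi|(\cT^\dagger\circ\cT)(\Phi)|\Psi\rangle$ given just above the lemma --- is exactly the intended one and is airtight.

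Your observation that the chain actually yields the sharper $F(\cT(\Psi),\cT(\Phi))\ge\lambda_{\min}(\cT^\dagger\circ\cT)$, making the $1/d^2$ superfluous under the paper's own definition of $\lambda_{\min}(\cT^\dagger\circ\cT)$, is also correct. One small caveat on your closing remark: the Choi identity $(\langle\psi|\otimes\langle\bar\phi|)C_{\cT^\dagger\circ\cT}(|\psi\rangle\otimes|\bar\phi\rangle)=\langle\psi|(\cT^\dagger\circ\cT)(\Phi)|\psi\rangle$ carries no dimensional prefactor either, so even in the Choi language one gets $F\ge\lambda_{\min}(C_{\cT^\dagger\circ\cT})$ directly; the eigenvalue of the Choi matrix is a minimum over \emph{all} unit vectors, hence already $\le$ the minimum over products, and no $1/d^2$ appears. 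The factor in the stated inequality is therefore simply slack, not a normalisation artefact.
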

It was furthermore noted that for $\cT=\cN^{\otimes k}$ with $\cN$ a qubit channel, one has 
\begin{align}
    F(\cT(\Psi),\cT(\Phi)) \geq \left(\frac{\lambda_{\min}(C_{\cN^\dagger\circ\cN})}{4}\right)^k,
\end{align}
where $C_{\cN^\dagger\circ\cN}$ is the Choi matrix of $\cN^\dagger\circ\cN$, and if $\cN$ is also non-unital one gets 
\begin{align}
    \lambda_{\min}(C_{\cN^\dagger\circ\cN})>0.
\end{align}
We also need bounds on $D_{\max}^\epsilon(\rho\|\sigma)$. As remarked earlier, our definition of $D_{\max}^\epsilon$ is a bit different than the one usually used in the quantum information literature, as it uses a different distance measure. However, to apply a known result we need to compare our definition to the usual one. The standard smooth $\max$-relative entropy, $D_{\max,s}$, is defined as
\begin{align}
D_{\max,s}^\epsilon(\rho\|\sigma) = \inf_{\bar\rho\in B_s^\epsilon(\rho)} D_{\max}(\bar\rho\|\sigma)
\end{align}
and 
\begin{align}
B_s^\epsilon(\rho) = \{\bar\rho : \bar\rho\in\cSs{} \wedge P(\rho,\bar\rho)\leq \epsilon \},
\end{align}
where $P$ is the purified distance, i.e. the minimal trace distance between purifications of the states. See e.g.~\cite[Definition 3.15]{tomamichel2015quantum} for a discussion of this quantity. 
We prove the following auxiliary lemma.
\begin{lemma}\label{Lem:aux-Dmax}
Let $\rho,\sigma\in\cSe{}$, then
\begin{align}
   D_{\max}^\epsilon(\rho\|\sigma) \leq D_{\max,s}^\epsilon(\rho\|\sigma) 
\end{align}
\end{lemma}
\begin{proof}
The claim follows immediately by showing that $B_s^\epsilon(\rho) \subseteq B^\epsilon(\rho)$. To that end observe, 
\begin{align}
    B_s^\epsilon(\rho) &= \{\bar\rho : \bar\rho\in\cSs{} \wedge P(\rho,\bar\rho)\leq \epsilon \} \\
    &\subseteq \{\bar\rho : \bar\rho\in\cSs{} \wedge E_1(\rho,\bar\rho)\leq \epsilon \} \\
    &\subseteq \{\bar\rho : \bar\rho\in\cPo{} \wedge E_1(\rho,\bar\rho)\leq \epsilon \} \\
    &= B^\epsilon(\rho),
\end{align}
where the second inclusion is clear, but the first needs some justification. Note that 
\begin{align}
    E_1(\rho,\bar\rho) &\leq \max\{ E_1(\rho,\bar\rho) , E_1(\bar\rho,\rho) \} \\
    &= \Delta(\rho,\bar\rho) \\
    &\leq P(\rho,\bar\rho), 
\end{align}
where $\Delta$ is the generalized trace distance and the equality holds by its definition. The first inequality is immediate and the second is a generalized Fuchs-van-de-Graaf type inequality, see e.g.~\cite[Lemma 3.17]{tomamichel2015quantum}. This concludes the proof.  
\end{proof}
This enables us to use the following result.
\begin{lemma}
Let $0\leq\epsilon\leq 1$ and $\alpha\in(1,\infty)$, then 
\begin{align}
    D^\epsilon_{\max}(\rho\|\sigma) \leq \DD_\alpha(\rho\|\sigma) + \frac{g(\epsilon)}{\alpha-1},
\end{align}
where $g(\epsilon)=-\log(1-\sqrt{1-\epsilon^2})$ and $\DD_\alpha$ any quantum \Renyi divergence. 
\end{lemma}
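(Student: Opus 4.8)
The plan is to realise the infimum defining $D^\epsilon_{\max}(\rho\|\sigma)=\inf_{\bar\rho\in B^\epsilon(\rho)}D_{\max}(\bar\rho\|\sigma)$ by one explicitly constructed operator. Writing $\lambda:=\DD_\alpha(\rho\|\sigma)$ and $\mu:=\lambda+\frac{g(\epsilon)}{\alpha-1}$, I would exhibit a Hermitian $\bar\rho\ge 0$ with $\|\rho-\bar\rho\|_1\le 2\epsilon$ and $\bar\rho\le e^{\mu}\sigma$; this gives $D^\epsilon_{\max}(\rho\|\sigma)\le D_{\max}(\bar\rho\|\sigma)\le\mu$, which is the claim. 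One may assume $\supp\rho\subseteq\supp\sigma$, since otherwise $\DD_\alpha=\infty$. The argument uses only two features of $\DD_\alpha$: data processing, and agreement with the classical Rényi divergence on commuting operators; as a variant, since the sandwiched divergence $\tilde D_\alpha$ is the smallest member of the family, one could instead reduce to $\DD_\alpha=\tilde D_\alpha$ and feed in its explicit trace formula wherever an explicit expression is required.

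\textbf{The construction.} I would take $\Pi^+$ to be the spectral projector of the Hermitian operator $\rho-e^{\mu}\sigma$ onto its positive part, $\Pi^-:=\Id-\Pi^+$, and set $\bar\rho:=\Pi^-\rho\,\Pi^-$. Since $\Pi^\pm$ commute with $\rho-e^{\mu}\sigma$, one has $\Pi^-(\rho-e^{\mu}\sigma)\Pi^-\le 0$, hence $\bar\rho\le e^{\mu}\Pi^-\sigma\Pi^-\le e^{\mu}\sigma$ and thus $D_{\max}(\bar\rho\|\sigma)\le\mu$ for free. What must then be controlled is the discarded weight $p:=\tr[\Pi^+\rho]$.

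\textbf{The Markov estimate.} Here I would push $\rho$ and $\sigma$ through the binary measurement channel $X\mapsto\tr[\Pi^+X]\,|0\rangle\langle 0|+\tr[\Pi^-X]\,|1\rangle\langle 1|$ and apply data processing of $\DD_\alpha$ together with its reduction to the classical Rényi divergence on the (commuting) outputs:
\begin{align*}
(\alpha-1)\lambda\ \geq\ \log\!\Big(p^{\alpha}\,\tr[\Pi^+\sigma]^{1-\alpha}+(1-p)^{\alpha}\,\tr[\Pi^-\sigma]^{1-\alpha}\Big)\ \geq\ \alpha\log p+(1-\alpha)\log\tr[\Pi^+\sigma],
\end{align*}
dropping the nonnegative second summand. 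On $\ran\Pi^+$ we have $\rho\ge e^{\mu}\sigma$, so $\tr[\Pi^+\sigma]\le e^{-\mu}p$; inserting this and using $\alpha>1$ gives $\log p\le(\alpha-1)(\lambda-\mu)$, i.e. $p\le e^{(\alpha-1)(\lambda-\mu)}=e^{-g(\epsilon)}=1-\sqrt{1-\epsilon^2}$ with our choice of $\mu$. Finally the gentle-measurement lemma gives $\|\rho-\bar\rho\|_1\le 2\sqrt{p}$, and in its sharper purified-distance form $P(\rho,\bar\rho)\le\sqrt{1-(1-p)^2}$; since $p\le 1-\sqrt{1-\epsilon^2}\le\epsilon^2$, either bound places $\bar\rho$ in $B^\epsilon(\rho)$ and finishes the proof.

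\textbf{The main obstacle.} The only step that is not routine is the Markov estimate: because $\rho$ and $\sigma$ need not commute one cannot argue directly with eigenvalues, and the trick is to collapse onto the two-outcome measurement $\{\Pi^+,\Pi^-\}$, after which data processing of $\DD_\alpha$ turns the problem into a one-line classical Rényi inequality. A secondary, purely bookkeeping issue is matching the precise constant $g(\epsilon)=-\log(1-\sqrt{1-\epsilon^2})$: this requires the purified-distance form of the gentle-measurement lemma, but the crude trace-norm bound $2\sqrt p\le 2\epsilon$ already yields the stated inequality — indeed with the simpler admissible shift $\frac{\log(1/\epsilon^2)}{\alpha-1}\le\frac{g(\epsilon)}{\alpha-1}$ — which is all that the trace-distance smoothing used elsewhere in the paper needs.
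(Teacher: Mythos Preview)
The paper does not prove this lemma; it is quoted verbatim from Tomamichel's book as an auxiliary result in Appendix~A, so there is no ``paper's own proof'' to compare against. Your proposal is essentially the standard argument behind Proposition~6.22 in that reference (projector onto the positive part of $\rho-e^{\mu}\sigma$, Markov-type bound via data processing onto the binary test $\{\Pi^+,\Pi^-\}$, then gentle measurement), and it is correct as written.

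One point worth emphasising, which you already address implicitly: the paper's $D_{\max}^\epsilon$ is smoothed over the \emph{trace-distance} ball of positive operators, whereas Tomamichel's original statement uses the purified-distance ball of subnormalised states. Your candidate $\bar\rho=\Pi^-\rho\Pi^-$ is positive and subnormalised, and your bound $p\le 1-\sqrt{1-\epsilon^2}\le\epsilon^2$ combined with $\|\rho-\bar\rho\|_1\le 2\sqrt{p}$ places $\bar\rho$ in the paper's ball, so the lemma indeed holds for the paper's definition. This is not entirely idle bookkeeping: the paper explicitly flags after Lemma~\ref{Lem:max-rel-ent} that its smoothing differs from the usual one, yet in Section~\ref{Sec:Renyi} it invokes Tomamichel's proposition without comment; your argument closes that small gap. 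The edge cases ($\Pi^+=0$, or $\tr[\Pi^+\sigma]=0$) are harmless --- the first makes the bound trivial, and the second forces $\DD_\alpha=\infty$ under the support assumption --- so nothing further is needed.
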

\begin{proof}
The result immediately follows from~\cite[Proposition 6.22]{tomamichel2015quantum}, which shows the same result for $D^\epsilon_{\max,s}(\rho\|\sigma)$, and the fact that $D^\epsilon_{\max}(\rho\|\sigma)\leq D^\epsilon_{\max,s}(\rho\|\sigma)$ proved in Lemma~\ref{Lem:aux-Dmax}.
\end{proof}

\section{\Renyi relative entropy}\label{Sec:App-Renyi}

In Section~\ref{Sec:Renyi}, we introduced the new concept of \Renyi quantum differential privacy and based it on a very general framework of \Renyi relative entropies that solely have to fulfil certain properties as listed in~\cite{tomamichel2015quantum}. For completeness we list those properties here. 

A quantum \Renyi divergence is a quantity $\DD(\cdot \| \cdot)$ that fulfills the following properties: 
\begin{enumerate}
    \item \textbf{Continuity:} $\DD(\rho \| \sigma)$ is continuous in $\rho$ and $\sigma$, wherever $\rho\neq 0$ and $\sigma >> \rho$. 
    \item \textbf{Unitary invariance:} $\DD(\rho\|\sigma) = \DD(U\rho U^\dagger\| U\sigma U^\dagger)$ for any unitary $U$. 
    \item \textbf{Normalization:} $\DD(1\|\frac12) = \log2$.
    \item \textbf{Order:} If $\rho\geq\sigma$, then $\DD(\rho\|\sigma)\geq 0$ and if $\rho\leq\sigma$ then $\DD(\rho\|\sigma)\leq 0$. 
    \item \textbf{Additivity:} $\DD(\rho\otimes\tau\|\sigma\otimes\omega) = \DD(\rho\|\sigma) + D(\tau\|\omega)$. 
    \item \textbf{General mean:} There exists a continuous and strictly monotonic function $g$ such that $Q:=g(\DD)$ satisfies,
    \begin{align*}
        &Q(\rho\oplus\tau\|\sigma\oplus\omega) \\
        &= \frac{\tr(\rho)}{\tr(\rho+\tau)} Q(\rho\|\sigma) + \frac{\tr(\tau)}{\tr(\rho+\tau)} Q(\tau\|\omega).
    \end{align*}
    \item \textbf{Positive Definiteness:} $\DD(\rho\|\sigma)\geq 0$ with equality iff $\rho=\sigma$. 
    \item \textbf{Data-processing:} $\DD(\rho\|\sigma)\geq \DD(\cN(\rho)\|\cN(\sigma)).$
    \item Either \textbf{joint convexity} or \textbf{joint concavity} of $Q$. 
    \item \textbf{Dominance:} For $\sigma\leq\sigma'$, one has $\DD(\rho\|\sigma)\geq\DD(\rho\|\sigma')$. 
\end{enumerate}
In the classical case, properties 1-6 uniquely define the \Renyi relative entropies. This is not the case in the quantum setting where one additionally requires the operationally motivated properties 7-10. 

Finally, a family of quantum \Renyi relative entropies is a one-parameter family $\alpha\rightarrow\DD_\alpha(\cdot\|\cdot)$ of quantum \Renyi relative entropies such that for some open interval containing $1$, the family is monotonically increasing in $\alpha$. 

\end{document}